\newtheorem{lemma}{Lemma}
\newtheorem{corollary}{Corollary}
\newtheorem{observation}{Observation}
\theoremstyle{plain}%
\newtheorem{theorem}{Theorem}
\newtheorem{proposition}[theorem]{Proposition}%
\theoremstyle{plain}%
\theoremstyle{plain}%
\newtheorem{definition}{Definition}%
\newtheorem{protocol}[theorem]{Protocol}
\newcommand{\N}{\mathbb{N}}
\newcommand{\Z}{\mathbb{Z}}
\newcommand{\spanningtree}{\textsc{Spanning-Tree}}
\newcommand{\sizeofG}{\textsc{Size}}
\newcommand{\stpath}{s,t-\textsc{Path}}
\newcommand{\permutation}{\textsc{Permutation}}
\newcommand{\hide}[1]{ }
\newcommand{\trapezoid}{\textsc{Trapezoid}}
\newcommand{\chordal}{\textsc{chordal}}
\newcommand{\interval}{\textsc{interval}}
\newcommand{\circulararc}{\textsc{Circular-Arc}}
\newcommand{\propInterval}{\textsc{proper interval}}
\newcommand{\circularArc}{\textsc{circular-arc}}
\newcommand{\propCircular}{\textsc{proper circ-arc}}
\newcommand{\id}{\ensuremath{\mathsf{id}}}
\newcommand{\cO}{\mathcal{O}}
\newcommand{\tree}{\langle \id(\rho) , d_v , t_v\rangle}
\newcommand{\poly}{\mathrm{poly}}
\newcommand{\eps}{\varepsilon}
\begin{document}

\title[geometrics]{Local Certification of Some Geometric Intersection Graph Classes\footnote{This work was supported by Centro de Modelamiento Matemático (CMM), ACE210010 and FB210005,
BASAL funds for centers of excellence from ANID-Chile, FONDECYT 1230599 (P.M.), ANID-Subdirección de Capital Humano/Doctorado Nacional/2023 21230743 (B.J) and FONDECYT 1220142 (I.R.).}}


\author[1]{Benjamin Jauregui}\email{bjauregui@dim.uchile.cl}

\author*[2]{Pedro Montealegre}\email{p.montelaegre@uai.cl}

\author[1]{Diego Ramirez-Romero}\email{dramirez@dim.uchile.cl}

\author[1,2]{Iván Rapaport}\email{rapaport@dim.uchile.cl}

\affil*[1]{\orgdiv{Departamento de Ingeniería Matemática}, \orgname{Universidad de Chile}, \orgaddress{\street{Beauchef 851}, \city{Santiago}, \postcode{8370456}, \country{Chile}}}

\affil*[2]{\orgdiv{Facultad de Ingeniería y Ciencias}, \orgname{Universidad Adolfo Ibañez}, \orgaddress{\street{Av. Diagonal Las Torres 2640}, \city{Santiago}, \postcode{7941169}, \country{Chile}}}

\affil*[3]{\orgdiv{Centro de Modelamiento Matemático (UMI 2807 CNRS)}, \orgname{Universidad de Chile}, \orgaddress{\street{Beauchef 851}, \city{Santiago}, \postcode{8370456}, \country{Chile}}}


\abstract{In the context of distributed certification, the recognition of graph classes has started to be intensively studied. For instance, different results related to the recognition of planar, bounded tree-width and  $H$-minor free graphs have been recently obtained. The goal of the present work is to  design compact certificates for the local recognition of relevant geometric intersection graph classes, namely interval, chordal, circular arc, trapezoid and permutation.  More precisely, we give proof labeling schemes recognizing each of these classes with logarithmic-sized certificates. We also provide  tight logarithmic lower bounds on the size of the certificates on the proof labeling schemes for the recognition of any of the aforementioned  geometric intersection graph classes. }

\keywords{Distributed computing; Local certification; Proof labeling schemes; Graph classes recognition; Geometric intersection graph classes}


\pacs[MSC Classification]{68Q25, 68R10, 68U05}

\maketitle

\section{Introduction}\label{sec:intro} 

This paper examines the standard scenario of distributed network computing, where nodes in a network, represented as a graph $G=(V,E)$, exchange information through the links $E$ of the graph (see, for example, Peleg~\cite{Peleg00}). The objective is to gain a deeper understanding of the locality of graph properties. For instance, let's consider the property ``every node has an even number of neighbors." This property can be checked locally, meaning that if each node verifies that it has an even number of neighbors, then the graph satisfies the property.

Similar to centralized computing, distributed algorithms often make assumptions about the properties of $G$, and many algorithms are designed for specific types of graphs, such as regular graphs, planar graphs, bipartite graphs, or graphs with bounded tree-width. However, most graph properties of interest are not locally checkable. For instance, determining whether the graph has an even number of vertices requires nodes to examine beyond their immediate vicinity. Other natural properties like aciclicity or planarity requires the nodes to look arbitrarily far in the graph to verify them.

To cope with properties that are not locally checkable, several model extensions have been proposed. One possible solution is through local certification, which enables the local verification of any graph property. A local certification consists of a certificate assignment and a verification algorithm for a specific property. Together with the input information, each node receives a certificate and executes the verification algorithm communicating with their neighborhood. This algorithm determines whether the node accepts or rejects the certification. The protocol has to satisfy soundness and completeness conditions. Namely, if the graph satisfies the property, there exists a certificate assignment where all nodes accept it. Conversely, if a property is not satisfied, there is at least one node that rejects the certificate in every assignment.

In recent years, the field of local certification has gained considerable attention. We refer to Feuilloley~\cite{feuilloley2021introduction} for an introduction on the area. 

Proof-labeling schemes (PLSs) are, arguably, the best-known local certification type of protocol. They were introduced by Korman, Kutten, and Peleg in 2010 \cite{KormanKP10}. PLSs represent one of the weakest forms of local certification, where the verification algorithm is restricted to sharing the certificates in just one round of communication. In simpler terms, each node runs a verification algorithm with knowledge limited to its own certificate and the certificates of its neighbors in the graph. Despite these limitations, PLSs exhibit remarkable capabilities when compared to other local certification algorithms.

It is known that any property can be certified by a PLS using certificates of size \(\cO(n^2)\) bits, where \(n\)  is the total number of vertices. This can be achieved by providing each node with a complete description of the graph, allowing them to verify the property and the correctness of the local graph description. However, the \(\cO(n^2)\) certificate size is excessively large. Therefore, the primary objective in the study of local certification is to minimize the certificate size, expressed in terms of bits per vertex as a function of 
\(n\). Determining the minimum certificate size holds theoretical significance, as the optimal certificate size of a property reflects its locality: smaller certificates imply less dependence on global information, indicating a more localized property.

Motivated by the results of Göös and Suomela \cite{goos2016locally}, in \cite{FeuilloleyBP22} the authors remarked that $\Theta(\log n)$ is a benchmark for the number of bits that one can hope for a PLS to achieve. Indeed, certificates of size $o(\log n)$ are too short even for very simple properties. For instance, any PLS that verifies acyclicity requires certificates of $\Omega(\log n)$ bits \cite{KormanKP10}. On the other hand, 
a logarithmic number of bits allows us to encode identifiers, distances, spanning trees, etc. For these reasons, a certification with $\Theta(\log n)$ bits is called a \emph{compact local certification}.

Unfortunately, not every property has a compact certification. For example, not being 3-colorable cannot be certified with less than $\Omega(n^2/\log n)$ bits \cite{goos2016locally}. This is in sharp contrast with the problem of verifying 3-colorability: there is a trivial PLS to verify whether the graph is 3-colorable with two bits, which simply assigns each vertex a number in ${0,1,2}$ representing its color in a proper 3-coloring.

 In \cite{FeuilloleyBP22}, the authors raise the question of which graph properties admit compact certifications. In recent years, several results have emerged demonstrating that many relevant graph classes can be recognized using compact certificates (see the Related Work section below for more details). In this article, we study a specific set of graph properties defined by the intersection of geometric objects.

\subsection{Geometric Intersection Graph Classes}

A graph $G=(V,E)$ is a geometric intersection graph if every node $v \in V$ is identified with a geometric object of some particular type, and two vertices are adjacent if the corresponding objects intersect. 
Intersection graphs are the natural model of wireless sensor networks (where simple devices are deployed in large areas),  but they also appear in disciplines that do not necessarily come from distributed computing such as biology, ecology, matrix analysis, circuit design, statistics, archaeology, scheduling, etc. For a nice survey, we refer to~\cite{McKee1999}.

The two simplest non-trivial, and arguably two of the most studied
geometric intersection graphs are  {\emph{interval graphs}} and  {\emph{permutation graphs}}. In fact, most of the best-known geometric intersection graph classes are either generalizations of interval graphs or generalizations of permutation graphs.
It comes  as no surprise that many papers address different algorithmic and structural  aspects, simultaneously, in both interval and permutation graphs \cite{asdre2007harmonious,kante2013enumeration,kratsch2006certifying,yamazaki2020enumeration}.

In both interval and permutation graphs, the intersecting objects  are (line) segments, with different restrictions imposed on their positions.
In interval graphs, the segments must all lie on the real line. In permutation graphs, the endpoints of the segments must  lie on two separate, parallel real lines. In 
Figure~\ref{fig:intervalEx} we give an example of an interval graph, while in Figure~\ref{fig:Expermutation}  we give an example of an interval graph.

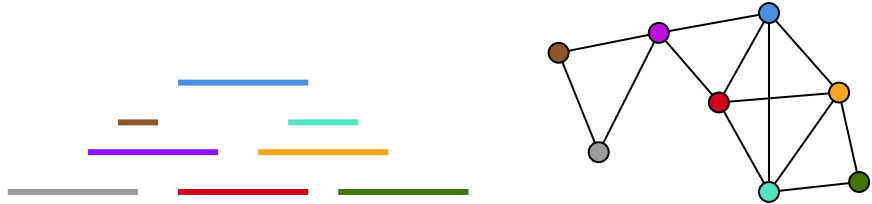
\begin{figure}[h]
	\centering
	\tikzset{every picture/.style={line width=0.75pt}} 

\begin{tikzpicture}[x=0.75pt,y=0.75pt,yscale=-1,xscale=1]

\draw    (390,75) -- (360,135) ;
\draw    (445,155) -- (420,110) ;
\draw    (420,110) -- (390,75) ;
\draw    (490,150) -- (480,105) ;
\draw    (340,85) -- (390,75) ;
\draw    (445,65) -- (420,110) ;
\draw    (445,65) -- (445,155) ;
\draw    (445,65) -- (390,75) ;
\draw    (480,105) -- (445,155) ;
\draw    (445,65) -- (480,105) ;
\draw    (360,135) -- (340,85) ;
\draw    (445,155) -- (490,150) ;
\draw    (480,105) -- (420,110) ;
\draw  [fill={rgb, 255:red, 65; green, 117; blue, 5 }  ,fill opacity=1 ] (485,150) .. controls (485,147.24) and (487.24,145) .. (490,145) .. controls (492.76,145) and (495,147.24) .. (495,150) .. controls (495,152.76) and (492.76,155) .. (490,155) .. controls (487.24,155) and (485,152.76) .. (485,150) -- cycle ;
\draw  [fill={rgb, 255:red, 245; green, 166; blue, 35 }  ,fill opacity=1 ] (475,105) .. controls (475,102.24) and (477.24,100) .. (480,100) .. controls (482.76,100) and (485,102.24) .. (485,105) .. controls (485,107.76) and (482.76,110) .. (480,110) .. controls (477.24,110) and (475,107.76) .. (475,105) -- cycle ;
\draw  [fill={rgb, 255:red, 74; green, 144; blue, 226 }  ,fill opacity=1 ] (440,65) .. controls (440,62.24) and (442.24,60) .. (445,60) .. controls (447.76,60) and (450,62.24) .. (450,65) .. controls (450,67.76) and (447.76,70) .. (445,70) .. controls (442.24,70) and (440,67.76) .. (440,65) -- cycle ;
\draw  [fill={rgb, 255:red, 80; green, 227; blue, 194 }  ,fill opacity=1 ] (440,155) .. controls (440,152.24) and (442.24,150) .. (445,150) .. controls (447.76,150) and (450,152.24) .. (450,155) .. controls (450,157.76) and (447.76,160) .. (445,160) .. controls (442.24,160) and (440,157.76) .. (440,155) -- cycle ;
\draw  [fill={rgb, 255:red, 189; green, 16; blue, 224 }  ,fill opacity=1 ] (385,75) .. controls (385,72.24) and (387.24,70) .. (390,70) .. controls (392.76,70) and (395,72.24) .. (395,75) .. controls (395,77.76) and (392.76,80) .. (390,80) .. controls (387.24,80) and (385,77.76) .. (385,75) -- cycle ;
\draw  [fill={rgb, 255:red, 208; green, 2; blue, 27 }  ,fill opacity=1 ] (415,110) .. controls (415,107.24) and (417.24,105) .. (420,105) .. controls (422.76,105) and (425,107.24) .. (425,110) .. controls (425,112.76) and (422.76,115) .. (420,115) .. controls (417.24,115) and (415,112.76) .. (415,110) -- cycle ;
\draw  [fill={rgb, 255:red, 155; green, 155; blue, 155 }  ,fill opacity=1 ] (355,135) .. controls (355,132.24) and (357.24,130) .. (360,130) .. controls (362.76,130) and (365,132.24) .. (365,135) .. controls (365,137.76) and (362.76,140) .. (360,140) .. controls (357.24,140) and (355,137.76) .. (355,135) -- cycle ;
\draw  [fill={rgb, 255:red, 139; green, 87; blue, 42 }  ,fill opacity=1 ] (335,85) .. controls (335,82.24) and (337.24,80) .. (340,80) .. controls (342.76,80) and (345,82.24) .. (345,85) .. controls (345,87.76) and (342.76,90) .. (340,90) .. controls (337.24,90) and (335,87.76) .. (335,85) -- cycle ;
\draw [color={rgb, 255:red, 155; green, 155; blue, 155 }  ,draw opacity=1 ][fill={rgb, 255:red, 248; green, 231; blue, 28 }  ,fill opacity=1 ][line width=2.25]    (65,155) -- (130,155) ;
\draw [color={rgb, 255:red, 208; green, 2; blue, 27 }  ,draw opacity=1 ][line width=2.25]    (150,155) -- (215,155) ;
\draw [color={rgb, 255:red, 65; green, 117; blue, 5 }  ,draw opacity=1 ][line width=2.25]    (230,155) -- (295,155) ;
\draw [color={rgb, 255:red, 144; green, 19; blue, 254 }  ,draw opacity=1 ][line width=2.25]    (105,135) -- (170,135) ;
\draw [color={rgb, 255:red, 245; green, 166; blue, 35 }  ,draw opacity=1 ][line width=2.25]    (190,135) -- (255,135) ;
\draw [color={rgb, 255:red, 80; green, 227; blue, 194 }  ,draw opacity=1 ][line width=2.25]    (205,120) -- (240,120) ;
\draw [color={rgb, 255:red, 139; green, 87; blue, 42 }  ,draw opacity=1 ][line width=2.25]    (120,120) -- (140,120) ;
\draw [color={rgb, 255:red, 74; green, 144; blue, 226 }  ,draw opacity=1 ][line width=2.25]    (150,100) -- (215,100) ;

\end{tikzpicture}
	\caption{An example of an interval graph together with a representation as the intersection of intervals.}
	\label{fig:intervalEx}
\end{figure}
	
\begin{figure}[h]
	\centering 
	\tikzset{every picture/.style={line width=0.75pt}} 

\begin{tikzpicture}[x=0.75pt,y=0.75pt,yscale=-1,xscale=1]

\draw [color={rgb, 255:red, 208; green, 2; blue, 27 }  ,draw opacity=1 ][line width=1.5]    (205,135) -- (245,60) ;
\draw [color={rgb, 255:red, 245; green, 166; blue, 35 }  ,draw opacity=1 ][line width=1.5]    (305,135) -- (225,60) ;
\draw [color={rgb, 255:red, 248; green, 231; blue, 28 }  ,draw opacity=1 ][line width=1.5]    (245,135) -- (205,60) ;
\draw [color={rgb, 255:red, 189; green, 16; blue, 224 }  ,draw opacity=1 ][line width=1.5]    (225,135) -- (305,60) ;
\draw [color={rgb, 255:red, 74; green, 144; blue, 226 }  ,draw opacity=1 ][line width=1.5]    (265,135) -- (325,60) ;
\draw [color={rgb, 255:red, 155; green, 155; blue, 155 }  ,draw opacity=1 ][line width=1.5]    (325,135) -- (285,60) ;
\draw [color={rgb, 255:red, 126; green, 211; blue, 33 }  ,draw opacity=1 ][line width=1.5]    (285,135) -- (265,60) ;
\draw    (380,75) -- (385,120) -- (410,110) -- (415,70) -- (475,65) -- (455,90) ;
\draw    (475,65) -- (475,120) ;
\draw    (415,70) -- (380,75) ;
\draw    (475,120) -- (410,110) ;
\draw    (455,90) -- (410,110) ;
\draw    (415,70) -- (455,90) ;
\draw  [fill={rgb, 255:red, 126; green, 211; blue, 33 }  ,fill opacity=1 ] (450,90) .. controls (450,87.24) and (452.24,85) .. (455,85) .. controls (457.76,85) and (460,87.24) .. (460,90) .. controls (460,92.76) and (457.76,95) .. (455,95) .. controls (452.24,95) and (450,92.76) .. (450,90) -- cycle ;
\draw  [fill={rgb, 255:red, 248; green, 231; blue, 28 }  ,fill opacity=1 ] (380,120) .. controls (380,117.24) and (382.24,115) .. (385,115) .. controls (387.76,115) and (390,117.24) .. (390,120) .. controls (390,122.76) and (387.76,125) .. (385,125) .. controls (382.24,125) and (380,122.76) .. (380,120) -- cycle ;
\draw  [fill={rgb, 255:red, 208; green, 2; blue, 27 }  ,fill opacity=1 ] (375,75) .. controls (375,72.24) and (377.24,70) .. (380,70) .. controls (382.76,70) and (385,72.24) .. (385,75) .. controls (385,77.76) and (382.76,80) .. (380,80) .. controls (377.24,80) and (375,77.76) .. (375,75) -- cycle ;
\draw  [fill={rgb, 255:red, 189; green, 16; blue, 224 }  ,fill opacity=1 ] (405,110) .. controls (405,107.24) and (407.24,105) .. (410,105) .. controls (412.76,105) and (415,107.24) .. (415,110) .. controls (415,112.76) and (412.76,115) .. (410,115) .. controls (407.24,115) and (405,112.76) .. (405,110) -- cycle ;
\draw  [fill={rgb, 255:red, 245; green, 166; blue, 35 }  ,fill opacity=1 ] (410,70) .. controls (410,67.24) and (412.24,65) .. (415,65) .. controls (417.76,65) and (420,67.24) .. (420,70) .. controls (420,72.76) and (417.76,75) .. (415,75) .. controls (412.24,75) and (410,72.76) .. (410,70) -- cycle ;
\draw  [fill={rgb, 255:red, 155; green, 155; blue, 155 }  ,fill opacity=1 ] (470,120) .. controls (470,117.24) and (472.24,115) .. (475,115) .. controls (477.76,115) and (480,117.24) .. (480,120) .. controls (480,122.76) and (477.76,125) .. (475,125) .. controls (472.24,125) and (470,122.76) .. (470,120) -- cycle ;
\draw  [fill={rgb, 255:red, 74; green, 144; blue, 226 }  ,fill opacity=1 ] (470,65) .. controls (470,62.24) and (472.24,60) .. (475,60) .. controls (477.76,60) and (480,62.24) .. (480,65) .. controls (480,67.76) and (477.76,70) .. (475,70) .. controls (472.24,70) and (470,67.76) .. (470,65) -- cycle ;
\draw [line width=0.75]    (205,135) -- (325,135) ;
\draw [line width=0.75]    (205,60) -- (325,60) ;
\draw [line width=0.75]    (205,55) -- (205,65) ;
\draw [line width=0.75]    (225,55) -- (225,65) ;
\draw [line width=0.75]    (245,55) -- (245,65) ;
\draw [line width=0.75]    (265,55) -- (265,65) ;
\draw [line width=0.75]    (285,55) -- (285,65) ;
\draw [line width=0.75]    (305,55) -- (305,65) ;
\draw [line width=0.75]    (325,55) -- (325,65) ;
\draw [line width=0.75]    (205,130) -- (205,140) ;
\draw [line width=0.75]    (225,130) -- (225,140) ;
\draw [line width=0.75]    (245,130) -- (245,140) ;
\draw [line width=0.75]    (265,130) -- (265,140) ;
\draw [line width=0.75]    (285,130) -- (285,140) ;
\draw [line width=0.75]    (305,130) -- (305,140) ;
\draw [line width=0.75]    (325,130) -- (325,140) ;

\end{tikzpicture}
	\caption{An example of a permutation graph with its corresponding intersection model.}
	\label{fig:Expermutation}
\end{figure}
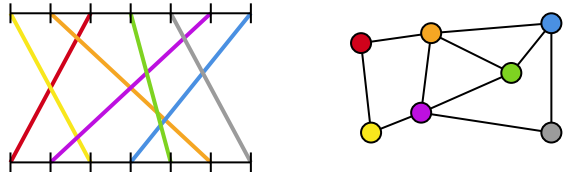

Although the class of interval graphs is quite restrictive, there are a number of practical applications and specialized algorithms for interval graphs \cite{golumbic2004algorithmic,halldorsson2020improved,konrad2019distributed}. 
Moreover, for several applications, the subclass of unit interval graphs (the situation where all the intervals have the same length)
turns out to be extremely useful as well~\cite{beeri1983desirability,kaplan1996pathwidth}.

A  natural generalization of interval graphs are {\emph{circular arc graphs}}, where the segments, instead of lying  
on a line, lie on a circle. More precisely, a circular arc graph is the intersection graph of arcs of a circle (see Figure~\ref{fig:circleArcEx}).
Although circular arc graphs look similar to interval graphs, several combinatorial problems behave very differently on these two classes of graphs.  
For example, the coloring problem is NP-complete for circular-arc graphs while it can be solved in linear time on interval graphs \cite{garey1980complexity}.
Recognizing circular-arc graphs can also be done in linear time \cite{kaplan2006simpler,mcconnell2003linear}.

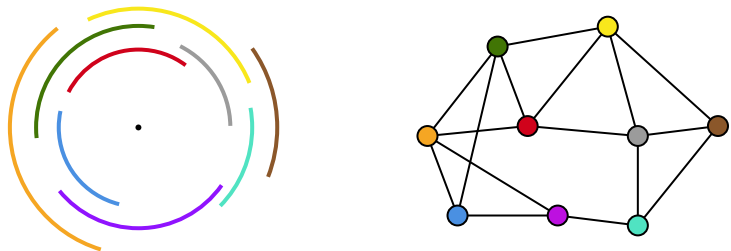
\begin{figure}[!ht]
	\centering
	\tikzset{every picture/.style={line width=0.75pt}} 

\begin{tikzpicture}[x=0.75pt,y=0.75pt,yscale=-1,xscale=1]

\draw  [draw opacity=0][fill={rgb, 255:red, 0; green, 0; blue, 0 }  ,fill opacity=1 ][line width=0.75]  (204.23,119.29) .. controls (204.23,118.47) and (204.89,117.81) .. (205.71,117.81) .. controls (206.53,117.81) and (207.19,118.47) .. (207.19,119.29) .. controls (207.19,120.11) and (206.53,120.77) .. (205.71,120.77) .. controls (204.89,120.77) and (204.23,120.11) .. (204.23,119.29) -- cycle ;
\draw  [draw opacity=0][line width=1.5]  (186.94,180.67) .. controls (160.66,172.65) and (141.54,148.2) .. (141.54,119.29) .. controls (141.54,99.16) and (150.82,81.19) .. (165.33,69.42) -- (205.71,119.29) -- cycle ; \draw  [color={rgb, 255:red, 245; green, 166; blue, 35 }  ,draw opacity=1 ][line width=1.5]  (186.94,180.67) .. controls (160.66,172.65) and (141.54,148.2) .. (141.54,119.29) .. controls (141.54,99.16) and (150.82,81.19) .. (165.33,69.42) ;  
\draw  [draw opacity=0][line width=1.5]  (154.9,124.63) .. controls (154.72,122.88) and (154.63,121.1) .. (154.63,119.29) .. controls (154.63,91.08) and (177.5,68.21) .. (205.71,68.21) .. controls (208.43,68.21) and (211.1,68.42) .. (213.7,68.83) -- (205.71,119.29) -- cycle ; \draw  [color={rgb, 255:red, 65; green, 117; blue, 5 }  ,draw opacity=1 ][line width=1.5]  (154.9,124.63) .. controls (154.72,122.88) and (154.63,121.1) .. (154.63,119.29) .. controls (154.63,91.08) and (177.5,68.21) .. (205.71,68.21) .. controls (208.43,68.21) and (211.1,68.42) .. (213.7,68.83) ;  
\draw  [draw opacity=0][line width=1.5]  (170.8,101.51) .. controls (177.28,88.82) and (190.48,80.12) .. (205.71,80.12) .. controls (214.56,80.12) and (222.72,83.06) .. (229.28,88.01) -- (205.71,119.29) -- cycle ; \draw  [color={rgb, 255:red, 208; green, 2; blue, 27 }  ,draw opacity=1 ][line width=1.5]  (170.8,101.51) .. controls (177.28,88.82) and (190.48,80.12) .. (205.71,80.12) .. controls (214.56,80.12) and (222.72,83.06) .. (229.28,88.01) ;  
\draw  [draw opacity=0][line width=1.5]  (196.08,157.89) .. controls (178.77,153.59) and (165.94,137.94) .. (165.94,119.29) .. controls (165.94,116.46) and (166.23,113.69) .. (166.8,111.02) -- (205.71,119.29) -- cycle ; \draw  [color={rgb, 255:red, 74; green, 144; blue, 226 }  ,draw opacity=1 ][line width=1.5]  (196.08,157.89) .. controls (178.77,153.59) and (165.94,137.94) .. (165.94,119.29) .. controls (165.94,116.46) and (166.23,113.69) .. (166.8,111.02) ;  
\draw  [draw opacity=0][line width=1.5]  (247.24,148.37) .. controls (238.07,161.44) and (222.89,169.98) .. (205.71,169.98) .. controls (189.8,169.98) and (175.61,162.66) .. (166.31,151.19) -- (205.71,119.29) -- cycle ; \draw  [color={rgb, 255:red, 144; green, 19; blue, 254 }  ,draw opacity=1 ][line width=1.5]  (247.24,148.37) .. controls (238.07,161.44) and (222.89,169.98) .. (205.71,169.98) .. controls (189.8,169.98) and (175.61,162.66) .. (166.31,151.19) ;  
\draw  [draw opacity=0][line width=1.5]  (180.44,65.11) .. controls (188.12,61.52) and (196.68,59.52) .. (205.71,59.52) .. controls (230.8,59.52) and (252.28,74.98) .. (261.14,96.9) -- (205.71,119.29) -- cycle ; \draw  [color={rgb, 255:red, 248; green, 231; blue, 28 }  ,draw opacity=1 ][line width=1.5]  (180.44,65.11) .. controls (188.12,61.52) and (196.68,59.52) .. (205.71,59.52) .. controls (230.8,59.52) and (252.28,74.98) .. (261.14,96.9) ;  
\draw  [draw opacity=0][line width=1.5]  (261.65,109.43) .. controls (262.21,112.63) and (262.5,115.93) .. (262.5,119.29) .. controls (262.5,134.62) and (256.43,148.53) .. (246.56,158.74) -- (205.71,119.29) -- cycle ; \draw  [color={rgb, 255:red, 80; green, 227; blue, 194 }  ,draw opacity=1 ][line width=1.5]  (261.65,109.43) .. controls (262.21,112.63) and (262.5,115.93) .. (262.5,119.29) .. controls (262.5,134.62) and (256.43,148.53) .. (246.56,158.74) ;  
\draw  [draw opacity=0][line width=1.5]  (262.47,79.54) .. controls (270.37,90.8) and (275,104.5) .. (275,119.29) .. controls (275,128.05) and (273.38,136.42) .. (270.42,144.13) -- (205.71,119.29) -- cycle ; \draw  [color={rgb, 255:red, 139; green, 87; blue, 42 }  ,draw opacity=1 ][line width=1.5]  (262.47,79.54) .. controls (270.37,90.8) and (275,104.5) .. (275,119.29) .. controls (275,128.05) and (273.38,136.42) .. (270.42,144.13) ;  
\draw  [draw opacity=0][line width=1.5]  (226.54,78.4) .. controls (241.19,85.88) and (251.29,100.99) .. (251.59,118.49) -- (205.71,119.29) -- cycle ; \draw  [color={rgb, 255:red, 155; green, 155; blue, 155 }  ,draw opacity=1 ][line width=1.5]  (226.54,78.4) .. controls (241.19,85.88) and (251.29,100.99) .. (251.59,118.49) ;  
\draw    (350,123.58) -- (365,163.58) ;
\draw    (415,163.58) -- (350,123.58) ;
\draw    (400,118.58) -- (350,123.58) ;
\draw    (385,78.58) -- (400,118.58) ;
\draw    (385,78.58) -- (365,163.58) ;
\draw    (455,168.58) -- (415,163.58) ;
\draw    (440,68.58) -- (455,123.58) ;
\draw    (440,68.58) -- (400,118.58) ;
\draw    (440,68.58) -- (385,78.58) ;
\draw    (440,68.58) -- (495,118.58) ;
\draw    (415,163.58) -- (365,163.58) ;
\draw    (495,118.58) -- (455,168.58) ;
\draw    (495,118.58) -- (455,123.58) ;
\draw    (455,123.58) -- (455,168.58) ;
\draw    (400,118.58) -- (455,123.58) ;
\draw    (385,78.58) -- (350,123.58) ;
\draw  [fill={rgb, 255:red, 65; green, 117; blue, 5 }  ,fill opacity=1 ] (380,78.58) .. controls (380,75.82) and (382.24,73.58) .. (385,73.58) .. controls (387.76,73.58) and (390,75.82) .. (390,78.58) .. controls (390,81.34) and (387.76,83.58) .. (385,83.58) .. controls (382.24,83.58) and (380,81.34) .. (380,78.58) -- cycle ;
\draw  [fill={rgb, 255:red, 245; green, 166; blue, 35 }  ,fill opacity=1 ] (345,123.58) .. controls (345,120.82) and (347.24,118.58) .. (350,118.58) .. controls (352.76,118.58) and (355,120.82) .. (355,123.58) .. controls (355,126.34) and (352.76,128.58) .. (350,128.58) .. controls (347.24,128.58) and (345,126.34) .. (345,123.58) -- cycle ;
\draw  [fill={rgb, 255:red, 74; green, 144; blue, 226 }  ,fill opacity=1 ] (360,163.58) .. controls (360,160.82) and (362.24,158.58) .. (365,158.58) .. controls (367.76,158.58) and (370,160.82) .. (370,163.58) .. controls (370,166.34) and (367.76,168.58) .. (365,168.58) .. controls (362.24,168.58) and (360,166.34) .. (360,163.58) -- cycle ;
\draw  [fill={rgb, 255:red, 80; green, 227; blue, 194 }  ,fill opacity=1 ] (450,168.58) .. controls (450,165.82) and (452.24,163.58) .. (455,163.58) .. controls (457.76,163.58) and (460,165.82) .. (460,168.58) .. controls (460,171.34) and (457.76,173.58) .. (455,173.58) .. controls (452.24,173.58) and (450,171.34) .. (450,168.58) -- cycle ;
\draw  [fill={rgb, 255:red, 189; green, 16; blue, 224 }  ,fill opacity=1 ] (410,163.58) .. controls (410,160.82) and (412.24,158.58) .. (415,158.58) .. controls (417.76,158.58) and (420,160.82) .. (420,163.58) .. controls (420,166.34) and (417.76,168.58) .. (415,168.58) .. controls (412.24,168.58) and (410,166.34) .. (410,163.58) -- cycle ;
\draw  [fill={rgb, 255:red, 208; green, 2; blue, 27 }  ,fill opacity=1 ] (395,118.58) .. controls (395,115.82) and (397.24,113.58) .. (400,113.58) .. controls (402.76,113.58) and (405,115.82) .. (405,118.58) .. controls (405,121.34) and (402.76,123.58) .. (400,123.58) .. controls (397.24,123.58) and (395,121.34) .. (395,118.58) -- cycle ;
\draw  [fill={rgb, 255:red, 248; green, 231; blue, 28 }  ,fill opacity=1 ] (435,68.58) .. controls (435,65.82) and (437.24,63.58) .. (440,63.58) .. controls (442.76,63.58) and (445,65.82) .. (445,68.58) .. controls (445,71.34) and (442.76,73.58) .. (440,73.58) .. controls (437.24,73.58) and (435,71.34) .. (435,68.58) -- cycle ;
\draw  [fill={rgb, 255:red, 155; green, 155; blue, 155 }  ,fill opacity=1 ] (450,123.58) .. controls (450,120.82) and (452.24,118.58) .. (455,118.58) .. controls (457.76,118.58) and (460,120.82) .. (460,123.58) .. controls (460,126.34) and (457.76,128.58) .. (455,128.58) .. controls (452.24,128.58) and (450,126.34) .. (450,123.58) -- cycle ;
\draw  [fill={rgb, 255:red, 139; green, 87; blue, 42 }  ,fill opacity=1 ] (490,118.58) .. controls (490,115.82) and (492.24,113.58) .. (495,113.58) .. controls (497.76,113.58) and (500,115.82) .. (500,118.58) .. controls (500,121.34) and (497.76,123.58) .. (495,123.58) .. controls (492.24,123.58) and (490,121.34) .. (490,118.58) -- cycle ;
\end{tikzpicture}
	\caption{An example of a circular arc graph: the left side shows a representation with overlapping arcs in the circle, while the right side shows its associated graph realization.}
	\label{fig:circleArcEx}
\end{figure}

Another natural, well-known generalization of interval graphs is {\emph{chordal graphs}}. These graphs are intersections of subtrees of a tree. More precisely, $G$ is chordal if and only if there exists a tree $T$ such that every node of $G$ can be associated with a subtree of $T$  in such a way that two nodes of $G$ are adjacent if their corresponding subtrees intersect.  Chordal graphs are among the 
most-studied graph classes \cite{blair1993introduction, golumbic2004algorithmic} and, in fact, they have appeared in the literature with different names such as rigid-circuit graphs, triangulated graphs, 
perfect elimination graphs, decomposable graphs, acyclic graphs, etc. 
Chordal graphs can be recognized in linear time \cite{rose1976algorithmic} and they have many applications, for instance in phylogeny tree reconstruction,  a fundamental problem in 
computational biology \cite{bodlaender1992two,kennedy2006strictly,lin2000phylogenetic}. 
The name chordal comes from the fact that a graph is chordal if and only if every cycle of length at least 4 has a chord. It is interesting to point out that, in the framework of distributed computing,  the authors in \cite{bousquet2021distributed} exhibit distributed algorithms for recoloring interval and chordal graphs. 

\begin{figure}[!ht]
	\centering
	\includegraphics[width=0.6\linewidth]{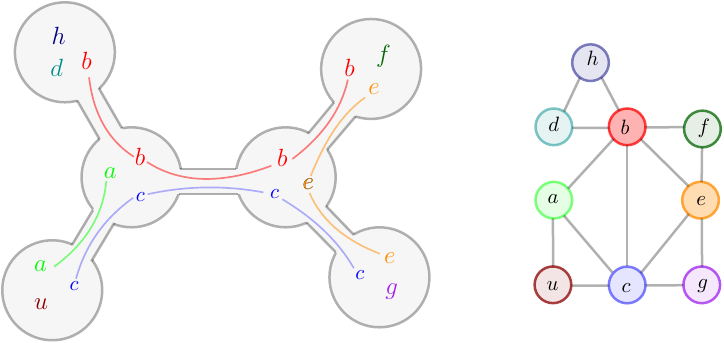}
	\caption{An example of a chordal graph with its corresponding intersection model.}
	\label{fig:trim}
\end{figure}

In addition, the class of {\emph{trapezoid graphs}}  is a  generalization of both interval graphs and permutation graphs.
A trapezoid graph  is defined as the intersection graph of trapezoids between two horizontal parallel lines with two vertices in each line
(see Figure~\ref{fig:Extrapezoid}). 
Ma and Spinrad \cite{ma19942} showed that trapezoid graphs can be recognized in $\cO(n^2)$ time.
Trapezoid graphs were applied in various contexts such as  VLSI design \cite{dagan1988trapezoid} and
bioinformatics \cite{abouelhoda2005chaining}. 

\begin{figure}[h]
\centering 

\tikzset{every picture/.style={line width=0.75pt}} 

\begin{tikzpicture}[x=0.75pt,y=0.75pt,yscale=-1,xscale=1]

\draw  [color={rgb, 255:red, 74; green, 144; blue, 226 }  ,draw opacity=1 ][fill={rgb, 255:red, 74; green, 144; blue, 226 }  ,fill opacity=0.2 ] (350,60) -- (390,165) -- (310,165) -- (230,60) -- cycle ;
\draw  [color={rgb, 255:red, 189; green, 16; blue, 224 }  ,draw opacity=1 ][fill={rgb, 255:red, 189; green, 16; blue, 224 }  ,fill opacity=0.2 ] (370,60) -- (370,165) -- (270,165) -- (270,60) -- cycle ;
\draw  [color={rgb, 255:red, 248; green, 231; blue, 28 }  ,draw opacity=1 ][fill={rgb, 255:red, 248; green, 231; blue, 28 }  ,fill opacity=0.2 ] (250,60) -- (250,165) -- (170,165) -- (150,60) -- cycle ;
\draw  [color={rgb, 255:red, 245; green, 166; blue, 35 }  ,draw opacity=1 ][fill={rgb, 255:red, 245; green, 166; blue, 35 }  ,fill opacity=0.2 ] (410,60) -- (410,165) -- (330,165) -- (330,60) -- cycle ;
\draw  [color={rgb, 255:red, 128; green, 128; blue, 128 }  ,draw opacity=1 ][fill={rgb, 255:red, 128; green, 128; blue, 128 }  ,fill opacity=0.2 ] (390,60) -- (350,165) -- (230,165) -- (310,60) -- cycle ;
\draw    (480,57) -- (620,57) -- (550,167) -- cycle ;
\draw    (535,119) -- (550,167) -- (565,119) ;
\draw    (550,84) -- (480,57) -- (535,119) -- (550,107) ;
\draw    (550,84) -- (620,57) -- (565,119) -- (550,107) -- cycle ;
\draw    (535,119) -- (565,119) -- (550,84) -- cycle ;
\draw  [fill={rgb, 255:red, 126; green, 211; blue, 33 }  ,fill opacity=1 ] (475,57) .. controls (475,54.24) and (477.24,52) .. (480,52) .. controls (482.76,52) and (485,54.24) .. (485,57) .. controls (485,59.76) and (482.76,62) .. (480,62) .. controls (477.24,62) and (475,59.76) .. (475,57) -- cycle ;
\draw  [fill={rgb, 255:red, 248; green, 231; blue, 28 }  ,fill opacity=1 ] (615,57) .. controls (615,54.24) and (617.24,52) .. (620,52) .. controls (622.76,52) and (625,54.24) .. (625,57) .. controls (625,59.76) and (622.76,62) .. (620,62) .. controls (617.24,62) and (615,59.76) .. (615,57) -- cycle ;
\draw  [fill={rgb, 255:red, 208; green, 2; blue, 27 }  ,fill opacity=1 ] (545,167) .. controls (545,164.24) and (547.24,162) .. (550,162) .. controls (552.76,162) and (555,164.24) .. (555,167) .. controls (555,169.76) and (552.76,172) .. (550,172) .. controls (547.24,172) and (545,169.76) .. (545,167) -- cycle ;
\draw  [fill={rgb, 255:red, 189; green, 16; blue, 224 }  ,fill opacity=1 ] (530,119) .. controls (530,116.24) and (532.24,114) .. (535,114) .. controls (537.76,114) and (540,116.24) .. (540,119) .. controls (540,121.76) and (537.76,124) .. (535,124) .. controls (532.24,124) and (530,121.76) .. (530,119) -- cycle ;
\draw  [fill={rgb, 255:red, 245; green, 166; blue, 35 }  ,fill opacity=1 ] (545,107) .. controls (545,104.24) and (547.24,102) .. (550,102) .. controls (552.76,102) and (555,104.24) .. (555,107) .. controls (555,109.76) and (552.76,112) .. (550,112) .. controls (547.24,112) and (545,109.76) .. (545,107) -- cycle ;
\draw  [fill={rgb, 255:red, 155; green, 155; blue, 155 }  ,fill opacity=1 ] (560,119) .. controls (560,116.24) and (562.24,114) .. (565,114) .. controls (567.76,114) and (570,116.24) .. (570,119) .. controls (570,121.76) and (567.76,124) .. (565,124) .. controls (562.24,124) and (560,121.76) .. (560,119) -- cycle ;
\draw  [fill={rgb, 255:red, 74; green, 144; blue, 226 }  ,fill opacity=1 ] (545,84) .. controls (545,81.24) and (547.24,79) .. (550,79) .. controls (552.76,79) and (555,81.24) .. (555,84) .. controls (555,86.76) and (552.76,89) .. (550,89) .. controls (547.24,89) and (545,86.76) .. (545,84) -- cycle ;
\draw  [color={rgb, 255:red, 208; green, 2; blue, 27 }  ,draw opacity=1 ][fill={rgb, 255:red, 208; green, 2; blue, 27 }  ,fill opacity=0.2 ] (210,60) -- (290,165) -- (190,165) -- (170,60) -- cycle ;
\draw  [color={rgb, 255:red, 65; green, 117; blue, 5 }  ,draw opacity=1 ][fill={rgb, 255:red, 126; green, 211; blue, 33 }  ,fill opacity=0.2 ] (290,60) -- (210,165) -- (150,165) -- (190,60) -- cycle ;
\draw [line width=0.75]    (150,60) -- (401.83,60) -- (410,60) ;
\draw [line width=0.75]    (150,165) -- (410,165) ;
\draw [line width=0.75]    (290,160) -- (290,170) ;
\draw [line width=0.75]    (310,160) -- (310,170) ;
\draw [line width=0.75]    (330,160) -- (330,170) ;
\draw [line width=0.75]    (350,160) -- (350,170) ;
\draw [line width=0.75]    (370,160) -- (370,170) ;
\draw [line width=0.75]    (390,160) -- (390,170) ;
\draw [line width=0.75]    (410,160) -- (410,170) ;
\draw [line width=0.75]    (150,160) -- (150,170) ;
\draw [line width=0.75]    (170,160) -- (170,170) ;
\draw [line width=0.75]    (190,160) -- (190,170) ;
\draw [line width=0.75]    (210,160) -- (210,170) ;
\draw [line width=0.75]    (230,160) -- (230,170) ;
\draw [line width=0.75]    (250,160) -- (250,170) ;
\draw [line width=0.75]    (270,160) -- (270,170) ;
\draw [line width=0.75]    (290,55) -- (290,65) ;
\draw [line width=0.75]    (310,55) -- (310,65) ;
\draw [line width=0.75]    (330,55) -- (330,65) ;
\draw [line width=0.75]    (350,55) -- (350,65) ;
\draw [line width=0.75]    (370,55) -- (370,65) ;
\draw [line width=0.75]    (390,55) -- (390,65) ;
\draw [line width=0.75]    (410,55) -- (410,65) ;
\draw [line width=0.75]    (150,55) -- (150,65) ;
\draw [line width=0.75]    (170,55) -- (170,65) ;
\draw [line width=0.75]    (190,55) -- (190,65) ;
\draw [line width=0.75]    (210,55) -- (210,65) ;
\draw [line width=0.75]    (230,55) -- (230,65) ;
\draw [line width=0.75]    (250,55) -- (250,65) ;
\draw [line width=0.75]    (270,55) -- (270,65) ;

\end{tikzpicture}

\caption{An example of a trapezoid graph with its corresponding intersection model.}
\label{fig:Extrapezoid}
\end{figure}
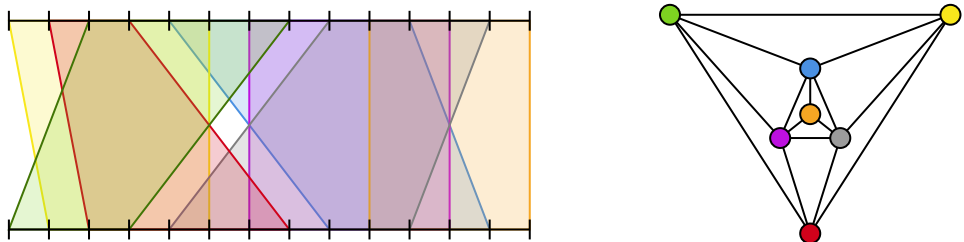

\subsection{PLSs for Geometric Graph Classes}

A naive approach to defining a PLS for a geometric graph class is to assign each vertex the corresponding geometric object it represents. During the verification phase, the vertices could check with their neighbors to ensure that the objects they represent intersect. However, this naive approach is not generally effective in defining compact certificates due to two difficulties.

First, we would need to encode the geometric objects using a logarithmic number of bits. While this may be possible for certain geometric graph classes, such as interval graphs, it is not clear if it holds true in general. For example, for chordal graphs, we do not know how to encode subtrees of a given tree using a logarithmic number of bits with respect to its size.

The second difficulty is that even if we could efficiently encode the objects, the soundness condition of a PLS requires that every graph not belonging to the given geometric graph class has to be rejected by the certification process. Therefore, to satisfy the soundness requirement, the vertices would also need to check that all non-adjacent vertices are assigned non-intersecting objects. This would require a vertex to check conditions with other non-adjacent vertices that could be far away in the graph. Notice that geometric graph classes, including the ones discussed in this article, can have arbitrarily large diameters.

Therefore, it is necessary to develop more sophisticated ideas in order to overcome the difficulties inherent in the naive approach.

\subsection{Our Results and Techniques}

In the present work we show compact PLSs (i.e. with logarithmic-sized certificates) for the recognition of all the aforementioned geometric intersection
graph classes, namely interval and chordal graphs (Section~\ref{sec:interval}), circular arc graphs (Section~\ref{sec:circular}) and, finally,  trapezoid and permutation graphs (Section~\ref{sec:trapezoid}). For all these classes we also provide, in Section~\ref{sec:lower}, tight logarithmic lower bounds on the size of the certificates.\\

In our results, we employ different sets of techniques that leverage the structural properties of the considered graph classes. We will now briefly explain our constructions.\\

{\it Chordal and Interval graphs.} As we explained above, chordal graphs are intersections of subtrees of a tree. They are also defined as the graph where every induced cycle has length at most \(3\) (i.e. every non-induced cycle has a \emph{chord}). Interestingly, chordal graphs can be characterized by the existence of a specific tree-decomposition, called \emph{clique-tree}. This tree-decomposition shares the same properties that define the treewidth, with the exception that each bag (node of the decomposition) forms a maximal clique of the graph (see Section~\ref{sec:interval} for more details on these definitions).  The certification of the clique-tree shares some ideas with the ones used in~\cite{FraigniaudMRT22} to certify graphs of bounded tree-with. Observe, however, that the maximal cliques of a chordal graph are unbounded, as a chordal graph may have unbounded tree-width. Therefore, new ideas had to be developed. We take advantage of the properties of the clique-trees, in particular high connectivity within the bags, to obtain a PLS with certificates of size \(\cO(\log n)\) for the verification of chordal graphs. 

The PLS for certification of interval graphs follows as a direct application of the PLS for chordal graphs. Indeed, an interval graph is a particular type of chordal graph, where the clique-tree is restricted to be a path. Then, the certification of interval graphs uses the certification of chordal graphs, while at the same time verifying that the given decomposition is indeed a path. \\

\textit{Circular Arc graphs.} As described in \Cref{fig:circleArcEx}, a circular-arc graph is represented by a set of arcs in a circle such that two nodes are neighbors if and only if their arcs have no empty intersection. In order to recognize this class, we first tackle the problem of recognizing the subclass of \textit{proper circular-arc} graphs, which are graphs that admit a circular-arc representation such that no arc is contained in other. Using a property of the adjacency matrix of graphs in this subclass given by ~\cite{tucker1970}, we develop an algorithm to recognize this property. Then, we expand the property of ~\cite{tucker1970} to the whole class of circular arc graphs and we proceed to verify this property distributively. In this part, the main idea is to develop an algorithm that allows the nodes to verify a global property of their adjacency matrix, which is an extension of a known characterization for proper circular-arc graphs.\\

{\it Trapezoid and Permutation graphs.} Recall that a graph is a trapezoid graph if each node can be assigned to a trapezoid inscribed in two parallel lines, with two vertices in each line, such that two nodes are neighbors if and only if their corresponding trapezoids have no empty intersection, as shown in \Cref{fig:Extrapezoid}. As both parallel lines contain $2n$ vertices, we can enumerate these endpoints on each line from $1$ to $2n$, so a trapezoid can be characterized as a tuple $(t_1(v),t_2(v),b_1(v),b_2(v))\in [2n]^4$, corresponding to the enumeration of each vertex. If the collection $\{(t_1(v),t_2(v),b_1(v),b_2(v))\}_{v\in V}$ satisfy that two nodes $u,v$ are neighbors if and only if their corresponding trapezoids intersect, we say it is a \textit{ proper trapezoid model}, and if only satisfy that all neighbors have non-empty trapezoid (one part of the equivalence), we call it a \textit{semi proper trapezoid model}. Then, verifying that a given model $\{(t_1(v),t_2(v),b_1(v),b_2(v))\}_{v\in V}$ given by the prover is a semi-proper model it is straightforward to do distributively: each node shares its trapezoid model with its neighbors and check they intersect. Then, in order to prove that a semi-trapezoid model is a trapezoid model, we need to verify that all non-adjacent nodes have empty trapezoid intersections. As we cannot do this directly, because we don't have direct communication between no adjacent nodes, we prove that a semi-trapezoid model is a proper trapezoid model if it satisfies two conditions (\Cref{lem:nontrapcarac2}), which are easier to verify distributively, because are dependent on the positions between their vertices in each line, and a local calculation that can be computed by each node.

Then, the result implies a PLS to recognize permutation graphs because we prove that a permutation model, i.e., a collection of lines with endpoints in two parallel lines, as in \Cref{fig:Expermutation}, such that each node is associated with a line, and two nodes are neighbors if an only if their corresponding lines intersects, it can be represented as a specific proper trapezoid model with an extra condition that can be verified locally by the nodes.

{\it Lower bounds.} To obtain tight lower bounds we use two different approaches. First, to get a lower bound for the classes of interval, circular arc and chordal graphs we adapt a construction for lower bounds in the Locally  Proof model from ~\cite{goos2016locally} to the PLS model, where the main idea is to construct a collection of graphs in each class that would be indistinguishable from a particular no-instance if we allow messages of just $o(\log n)$ bits. In order to obtain a lower bound of $\Omega(\log n)$ proof-size for the recognition of permutation and trapezoid graphs, we use a technique from ~\cite{fraigniaud2019randomized} called \textit{crossing edge}, in which we need to construct a specific graph that is part of the class, but if we interchange specific edges between some nodes, the resulting graph is no longer part of the class. Then, by a result of ~\cite{fraigniaud2019randomized}, we have the desired tight lower bound.

\subsection{Related Work}

Since the introduction of PLSs~\cite{korman2010proof}, different variants were introduced.  Some stronger forms of PLS include locally checkable proofs~\cite{goos2016locally}, where each node can send not only its certificates, but also its state,  and $t$-PLS~\cite{FeuilloleyFHPP21}, where nodes perform communication at distance~$t\geq 1$ before deciding. Authors have studied many other variants of PLSs, such as randomized PLSs \cite{fraigniaud2019randomized}, quantum PLSs~\cite{FraigniaudGNP21}, interactive protocols~\cite{CrescenziFP19,kol2018interactive,NaorPY20}, zero-knowledge distributed certification~\cite{BickKO22}, 
PLSs use global certificates in addition to the local ones~\cite{FeuilloleyH18}, etc. On the other hand, some trade-offs between the size of the certificates and the number of rounds of the verification protocol have been exhibited ~\cite{FeuilloleyFHPP21}. Also, several hierarchies of certification mechanisms have been introduced, including games between a prover and a disprover~\cite{BalliuDFO18,FeuilloleyFH21}.

PLSs have been shown to be effective for recognizing many graph classes. For example, there are compact PLSs  (i.e. with logarithmic size certificates) for the recognition of acyclic graphs \cite{KormanKP10}, planar graphs~\cite{feuilloley2020compact}, graphs with bounded genus~\cite{EsperetL22}, \(H\)-minor-free graphs (as long as \(H\) has at most four vertices)~\cite{BousquetFP21}, etc. 

In a recent breakthrough, Bousquet et al.~\cite{bousquet2021local}  proved a ``meta-theorem'',  stating that, there exists a PLS for deciding any monadic second-order logic property with $O(\log n)$-bit certificates on graphs of bounded \emph{tree-depth}. This result has been extended by Fraigniaud et al~\cite{FraigniaudMRT22} to the larger class of graphs with bounded \emph{tree-width}, using certificates on $O(\log^2 n)$ bits. This result implies in particular the existence of (nearly) compact PLS  for certifying the class of graphs with tree-width at most $k$ (for any fixed $k$). Moreover, these results have other direct implications for the design and analysis of (nearly) compact PLSs for graphs with certain structural properties. For instance, for every planar graph \(H\), there is a PLS verifying \(H\)-minor free graphs with certificates of size \(\cO(\log^2 n)\).

\section{Preliminaries}
All graphs in this work are considered simple and undirected. An $n$-node graph $G=(V,E)$ is a graph with $|V|=n$. Given a graph $G=(V,E)$, the set of neighbors of a node $v\in V$ (nodes connected to $v$ via an edge en $G$) is denoted as $N_G(v)$\footnote{When the graph is clear by context by omit the subscript}.

Given $n\in\N$, $[n]$ corresponds to the set $\{1,...,n\}$ and $S_n$ to the set of all permutations in $[n]$. For $n,m\in \N$, $n<m$, we define $[n,m]_\N =\{n,n+1,...,m-1,m\}$.

\subsection{Distributed Languages}

Let $G=(V,E)$ be a simple connected $n$-node graph, let $I\colon V\to \{0,1\}^*$ be an input function assigning labels to the nodes of $G$. where the size of all inputs is polynomially bounded on $n$. Let $\id\colon V\to\{1,...,n^c\}$ for some constant $c>0$ be a one-to-one function assigning identifiers to the nodes. A \textit{distributed language} $\mathcal{L}$ is a (Turing decidable) collection of triples $(G,\id,I)$, called \textit{network configurations}.

Sometimes the label function $I$ represents some construction over the graph, for example, it can be a single bit in $\{0,1\}$ indicating a subset of nodes, which can represent a vertex cover set, maximal independent set, etc. In our case, we are interested in a property of $G$ itself, and not in verifying some property over the labels given by $I$, so even if the formal definition defines a label function, we are going to omit it for simplicity. The distributed languages under study in this work are the following
\begin{align*}
\interval &= \{(G,\id)\colon G\text{ is a interval graph}\}\\
\chordal &= \{(G,\id)\colon G\text{ is a chordal graph}\}\\
\circulararc &= \{(G,\id)\colon G\text{ is a circular-arc graph}\}\\
\permutation &= \{(G,\id)\colon G\text{ is a permutation graph}\}\\
\trapezoid &= \{(G,\id)\colon G\text{ is a trapezoid graph}\}
\end{align*}

\subsection{Proof Labeling Schemes}

Formally, we define a \emph{proof-labeling scheme} (PLS) for a distributed language $\mathcal{L}$ as a pair consisting of a prover and a verifier.  The \emph{prover} is an untrusted oracle that, given a network configuration $(G,\id)$, assigns a \emph{certificate} $c(v)$ to each node $v$ the graph. The \emph{verifier} is a distributed algorithm that runs locally at each node $v$ in $G$. This verification algorithms demand first to each node $v$ to communicate with its neighbors $w\in N_G(v)$, sending $c(v)$ (and possibly its \id's) and receiving the certificates  (and possibly its \id's) from all its neighbors. Given $\id(v)$, $c(v)$, and the certificates and \id's given by its neighbors, each node $v$ runs the verification algorithm with this information to output either accept or reject.

A PLS is considered correct if it satisfies the following two conditions:

\medskip

 \begin{itemize}
 \setlength\itemsep{0em}
 \item Completeness: If $(G,\id) \in \mathcal{L}$  then the prover can assign certificates to the nodes such that the verifier accepts at all nodes,
 
 \medskip
 
\item Soundness:  If $(G,\id) \notin \mathcal{L}$ then, for every certificate assignment to the nodes by the prover, the verifier rejects in at least one node.
\end{itemize}

\medskip

The complexity measure of a PLS is the \emph{proof-size} $f(n)$, measured in function of the number of nodes $n$, and defined as the maximum length of any message sent by the prover to the nodes or between neighbors in all network configurations $(G,\id)$ with $n$ nodes.

\subsection{Toolbox} \label{sec:tool}
In this subsection, we present already established protocols which we are going to use, in this paper, as subroutines. Note that some of these subroutines solve problems that are not decision problems.

\subsection{Spanning Tree and Related Problems} The construction of a spanning tree is a fundamental component for various protocols in the PLS model. Given a network configuration $\langle G, \id \rangle$, the {\spanningtree} problem involves creating a spanning tree $T$ of $G$, with each node possessing knowledge about which of its incident edges are part of $T$. 

\begin{proposition}
	\cite{korman2010proof}
	\label{proto:spanningtree}
	There is a PLS for {\spanningtree}  with proof-size of $\cO(\log n)$ bits.
\end{proposition}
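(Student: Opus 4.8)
The plan is to have the prover certify a \emph{rooted} spanning tree and then let every node verify its local consistency in a single round. First I would let the prover select a root $\rho$ (for definiteness, the node of minimum identifier) and compute, for the chosen tree, the distance $d_v$ from each node $v$ to $\rho$ together with a pointer $t_v$ to the parent of $v$, encoded as the identifier of that parent (the root pointing to itself). Each node $v$ then receives the certificate $\tree$. This certificate has length $\cO(\log n)$: the identifiers $\id(\rho)$ and $t_v$ lie in $\{1,\dots,n^c\}$, and $d_v\le n-1$, so each field fits in $\cO(\log n)$ bits. The output tree edges are recovered from the certificates, since $\{v,t_v\}$ is declared a tree edge and each node learns its incident tree edges after exchanging certificates with its neighbors once.

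The verifier performs only local tests at each node $v$. It checks that every neighbor stores the same value $\id(\rho)$, which by connectivity forces a global agreement on a single root identifier. If $\id(v)=\id(\rho)$, then $v$ checks that $d_v=0$ and $t_v=\id(v)$; otherwise $v$ checks that $d_v\ge 1$ and that $t_v$ is the identifier of an actual neighbor $u$ with $d_u=d_v-1$. All these tests depend only on the certificates of $v$ and of its neighbors, so one communication round suffices. \completeness If $G$ is connected a spanning tree exists, and assigning the honest distances and parent pointers of any such tree makes every test succeed.

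\soundness The real work is here, and it is also where I expect the only genuine difficulty: nothing local prevents the prover from lying, so the certificate must be organized so that passing all local checks forces a globally correct structure. Suppose every node accepts. The agreement test together with connectivity forces all nodes to share one identifier $\id(\rho)$, and since identifiers are unique there is exactly one node $\rho$ carrying it; the depth test then makes $\rho$ the unique node of depth $0$. Starting from an arbitrary node and repeatedly following parent pointers, the depth strictly decreases at each step while remaining a nonnegative integer, so the walk can neither cycle nor get stuck and must end at $\rho$. Hence the declared edges connect every node to $\rho$, so they span a connected subgraph. Moreover each declared edge $\{v,t_v\}$ joins two depths differing by exactly one, so it is the parent edge of precisely its deeper endpoint; the map sending each of the $n-1$ non-root nodes to its parent edge is therefore a bijection onto the declared edges, giving exactly $n-1$ of them. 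A connected subgraph on $n$ vertices with $n-1$ edges is a spanning tree, which establishes soundness. The strictly-decreasing-distance invariant, combined with the uniqueness of identifiers, is exactly what forbids the global failures (a cycle, a second root, or a disconnected component) that purely local checks could otherwise miss.
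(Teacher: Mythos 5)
Your proposal is correct and follows essentially the same route as the paper's own protocol: the certificate consists of the root identifier, a parent pointer, and the distance to the root, verification checks agreement on the root and that each parent is a neighbor at distance exactly one less, and soundness rests on the fact that strictly decreasing distances along parent pointers rule out cycles and force every node to reach the unique root. Your soundness write-up is somewhat more detailed than the paper's terse version (explicit root self-checks and the count of $n-1$ declared edges), but it is the same protocol and the same underlying argument.
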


It should be helpful, for understanding the PLS model, to show a PLS for verifying a spanning tree.

\begin{protocol}\label{proto:st}
	First, the prover gives to each node $v\in V$ the following information.
	
	\begin{itemize}
		\item The identifier of the root $r\in V$ of the spanning tree.
		\item The identifier {$p_v$} of its father in the tree.
		\item Its distance $d(v)$ and the distance of its father $d(p_v)$ to the root $r$.
	\end{itemize}
	
	Then, in the verification round, each node $v\in V$ verifies whether
	\begin{itemize}
		\item All the nodes received the same root $r\in V$.
		\item The \id of its father $p_v$ is the \id of some neighbour.
		\item If $d(p_v) = k$, then $d(v) = k+1$. 
	\end{itemize}
	
	Each node accepts only if all three conditions are satisfied; otherwise, it rejects.
\end{protocol}

Now let us analyse the correctness and soundness of the protocol.\\

{\bf Correctness.} An honest prover provides a unique root and the correct distances in the tree, so all nodes accept.\\

{\bf Soundness.} If the prover gives two or more different roots, then the nodes reject because there are two neighbors $u,v$ with different root nodes. If the tree given by the prover forms a cycle, then there exist two nodes $u$ and $v$ such that $u$ is the parent of $v$ but $d(v)<d(u)$, and $v$ rejects. Therefore, the tree constructed by the prover has to be correct, and thus the distances too.\\

{\bf Proof-size analysis.} As node identifiers can be encoded in $\cO(\log n)$ and the maximum distance in an $n$-node graph between two nodes is $n-1$, it follows that the distances $d(v)$ can also be encoded in $\cO(\log n)$.

From the protocol of Proposition~\ref{proto:spanningtree}, we can construct another protocol for the $\sizeofG$ problem. In this problem, the nodes are given an input graph $G=(V,E)$ and must verify the exact value of $|V|$, assuming that the nodes only know a polynomial upper bound on $n=|V|$. Proposition~\ref{prop:numbernode} states that there exists a PLS for $\sizeofG$ with certificates of size $\cO(\log n)$.

\begin{proposition}\cite{korman2010proof}\label{prop:numbernode}	
	There is a PLS for $\sizeofG$ with certificates of size $\cO(\log n)$.
\end{proposition}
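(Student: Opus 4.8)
The plan is to build the certification of $\sizeofG$ on top of the spanning-tree PLS of Proposition~\ref{proto:spanningtree}, adding to it a bottom-up aggregation of subtree sizes. First I would have the prover certify a spanning tree $T$ of $G$ rooted at some node $r$, so that by the soundness of Proposition~\ref{proto:spanningtree} every node knows its parent pointer $p_v$, the root $r$, and its distance $d(v)$ to $r$, and these are guaranteed to describe a genuine rooted spanning tree. On top of this, the prover gives each node $v$ two additional numbers: the claimed size $s(v)$ of the subtree of $T$ hanging from $v$, and the claimed total order $N$ of the graph.

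The verification then consists of three local checks at each node $v$. First, each node recomputes the set of its children as those neighbors $w$ with parent pointer $p_w = \id(v)$, and verifies the recurrence $s(v) = 1 + \sum_{w \text{ child of } v} s(w)$; this is computable in one round, since each neighbor reports $s(w)$ and $p_w$, and it automatically enforces $s(v)=1$ at the leaves through the empty sum. Second, each node checks that all its neighbors received the same value $N$. Third, the root $r$ additionally checks that $s(r) = N$. A node accepts if and only if the spanning-tree verifier accepts and all applicable checks pass.

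For completeness, an honest prover sets $s(v)$ to the true subtree size and $N = |V|$, so every check passes. For soundness, assume all nodes accept. By the soundness of Proposition~\ref{proto:spanningtree}, $T$ is a genuine rooted spanning tree, so the parent-child relation is acyclic. Then the recurrence from the first check uniquely determines the values $s(v)$ by induction from the leaves upward: a leaf is forced to have $s(v)=1$, and an internal node's value is forced once its children's values are fixed. Hence each $s(v)$ equals the true number of nodes in the subtree of $v$, and in particular $s(r) = |V|$. The third check then forces $N = |V|$, and the second check guarantees every node holds this same correct value. The proof-size is $\cO(\log n)$, since the spanning-tree certificate is $\cO(\log n)$ by Proposition~\ref{proto:spanningtree}, and both $s(v) \le n$ and $N \le n$ fit in $\cO(\log n)$ bits (the known polynomial upper bound on $n$ only affects the encoding by a constant factor).

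The step I expect to require the most care is the soundness argument for the aggregation check: one must argue that the local recurrence, once the tree is known to be acyclic and connected, leaves the prover no freedom to report a consistent but globally wrong set of subtree sizes. This is exactly where the verified correctness of the spanning-tree subroutine is essential, since without an acyclic structure the aggregation could in principle be satisfied by spurious values circulating around a cycle.
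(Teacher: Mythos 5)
Your proposal is correct and follows essentially the same route as the paper: certify a rooted spanning tree via Proposition~\ref{proto:spanningtree}, give each node its claimed subtree size, verify the recurrence $c_v = 1 + \sum_{\omega\ \text{child of}\ v} c_\omega$ locally, have neighbors agree on the claimed total, and let the root confirm it. Your write-up actually spells out the leaves-upward induction for soundness, which the paper leaves implicit (``soundness and completeness follow directly''), but the protocol and the argument are the same.
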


\begin{protocol}\label{proto:numbernode}
	In the first round, the prover gives to each node $v\in V$ a certificate with the following information
	\begin{itemize}
		\item The information needed according to \Cref{proto:st} to construct a valid spanning tree $T$.
		\item The number of nodes $c_v$ in $T_v$, the T-subtree rooted in $v$.
	\end{itemize}
	
	In the verification round each node $v\in V$ validates that the spanning tree constructed is correct according to the verification round of \Cref{proto:st} and that
	\[c_v = 1 + \sum_{\substack{\omega\text{ children}\\\text{ of }v}}c_\omega. \] 
	
\end{protocol}

Soundness and completeness follow directly. Notice that nodes can check with their neighbours that they all received the same $n$, and the root checks whether this value is correct.

For two fixed nodes $s,t \in V$, problem $\stpath$ is defined in the usual way:  given a network configuration $\langle G, \id \rangle$, the output is a path $P$ that goes from $s$ to $t$. 
In other words,  each node must end up knowing whether it belongs to $P$ or not; and, if it belongs to the path, it has to know which of its neighbors are its predecessor and successor in $P$.

	\begin{proposition}\cite{korman2010proof}
		\label{prop:stpath}	
		There is a PLS  for $\stpath$ with certificates of size $\cO(\log n)$.
	\end{proposition}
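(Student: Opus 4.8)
The plan is to certify the path directly by a \emph{distance-labeled chain}, in the same spirit as the spanning-tree scheme of Protocol~\ref{proto:st}, rather than invoking Proposition~\ref{proto:spanningtree} as a black box. Since $G$ is connected, an $\stpath$ always exists, so completeness is never vacuous. The prover fixes a simple path $P = (s = v_0, v_1, \dots, v_\ell = t)$ and hands every node $v$ a membership bit $b_v \in \{0,1\}$ indicating whether $v \in P$; to each node $v$ with $b_v = 1$ it additionally gives an integer position $d(v) \in \{0, \dots, n-1\}$ (its index along $P$) together with the identifiers $\mathrm{pred}(v)$ and $\mathrm{succ}(v)$ of its predecessor and successor on $P$, using a distinguished null symbol at the two endpoints. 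Since positions are bounded by $n-1$ and identifiers fit in $\cO(\log n)$ bits, the total proof-size is $\cO(\log n)$, as claimed.

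First I would specify the verification. In one round each node sends its certificate (and \id) to its neighbors. A node with $b_v = 0$ simply accepts. A node $v$ with $b_v = 1$ checks local consistency with its marked endpoints: if $\mathrm{pred}(v) = u \neq \text{null}$ then $u$ must be a neighbor with $b_u = 1$, $\mathrm{succ}(u) = v$, and $d(u) = d(v) - 1$, and symmetrically for $\mathrm{succ}(v)$. Finally $v$ checks the endpoint conditions, which it can do locally because $s$ and $t$ are fixed and known: $\mathrm{pred}(v) = \text{null}$ iff $v = s$ (and then $d(v) = 0$), and $\mathrm{succ}(v) = \text{null}$ iff $v = t$. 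With the honest labeling every check passes, so completeness is immediate.

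The crux is soundness, and the main obstacle there is to rule out the two ways a cheating prover could mark a vertex set that is not a genuine $s$–$t$ path: spurious cycles and extra connected pieces. The strictly increasing positions along successor pointers forbid cycles, since traversing a cycle of marked nodes would force $d$ both to increase at every step and to return to its starting value. The endpoint checks forbid stray pieces: the reciprocal pred/succ links force the marked set to decompose into disjoint simple chains, and each such chain must begin at a node with null predecessor (hence $s$) and end at a node with null successor (hence $t$); as there is only one $s$ and one $t$, there is exactly one chain. Consequently, whenever all nodes accept, following $\mathrm{succ}$ from $s$ produces a sequence of pairwise distinct vertices (distinct because their $d$-values are distinct) of strictly increasing, bounded position that terminates precisely at $t$, i.e. a valid $\stpath$. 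This completes the argument, and the whole scheme reuses only constant-round local consistency checks with $\cO(\log n)$-bit certificates.
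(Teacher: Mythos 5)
Your scheme is essentially the paper's protocol for $\stpath$ --- a membership bit $b_v$ plus predecessor/successor identifiers --- augmented with a position counter $d(v)$ along the path. The counter is a genuine strengthening: it is what makes your cycle argument airtight, since the paper's two-line sketch (which has no counters) would, as literally written, accept a certificate assignment consisting of the true $s$--$t$ path plus a disjoint marked cycle, every node of which sees exactly two marked neighbours with reciprocal pointers. Your decomposition of the marked set into maximal chains, each forced to start at the unique node with null predecessor (hence $s$) and end at the unique node with null successor (hence $t$), is also correct and handles stray chains cleanly.

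There is, however, one genuine soundness hole in the protocol as you wrote it: you stipulate that ``a node with $b_v = 0$ simply accepts,'' and \emph{all} of your checks, including the endpoint conditions, are performed only by marked nodes. A cheating prover can therefore assign $b_v = 0$ to every node: nobody performs any check, every node accepts, and yet no $s$--$t$ path has been certified --- for a construction problem such as $\stpath$ this violates soundness. Your chain argument silently assumes the marked set is nonempty (``each chain must begin at $s$'' is vacuous when there are no chains). Note that the paper avoids exactly this trap: in its verification round, $s$ and $t$ check \emph{unconditionally} (not only when marked) that some neighbour names them as predecessor, respectively successor. The fix for your version is the same one-liner: since $s$ and $t$ are fixed and know their own identity, have them additionally check $b_s = 1$ and $b_t = 1$ regardless of their own bit. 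With that amendment your protocol and proof are complete, still with certificates of size $\cO(\log n)$.
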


\begin{protocol}
	
	The prover sends to each node $v$ a bit $b_v\in\{0,1\}$ which reports if the node is part of the path ($b_v = 1$) or not ($b_v = 0$).
	
	If $b_v = 1$, the prover also sends the identifiers of its predecessor and successor in the path. 
	
	In the verification round. each node $v$ such that $b_v = 1$ and $v\not =s,t$, verifies that exactly two neighbours are part of the path, and exactly one neighbour has $v$ as predecessor and one neighbour has $v$ as successor. 
	In the same way, $s$ and $t$ verify that they have one successor and predecessor, respectively. 
\end{protocol}

Based on the aforementioned results, we assume the existence of PLSs with logarithmic proof-size for computing \spanningtree,  \sizeofG\ and  \stpath, throughout the paper. We treat these algorithms as black boxes and employ them as subroutines in our protocols.

\section{Interval and Chordal Graphs} \label{sec:interval}

We begin with the study of a PLS to recognize the class of interval graphs. An interval graph is a graph $G=(V,E)$ where each node $v \in V$ can be identified with a unique interval $I_v$ on the real line such that $uv \in E \iff I_u \cap I_w \neq \varnothing$. An example can be seen in Figure~\ref{fig:intervalEx}.

\subsection{Proper Interval Graphs} 

As a warm-up, we start with the problem of recognizing the subclass of \emph{proper interval graphs}, which are interval graphs
that can be represented by intervals in such a way that no interval properly contains another. This class is equivalent to the class of 
\emph{unitary graphs}, where all intervals have length one~\cite{roberts1969indifference}. The following proposition gives another, useful characterization of proper interval graphs.

\begin{proposition}[\cite{gardi2007roberts}]\label{lem:proper}
	A graph $G$ admits a representation by proper intervals if and only if there exists an ordering  $\{v_i\}_{i=1}^n$ such that:
	\[	\forall i, j,k: i < k < j, \quad  v_i v_j \in E \Longrightarrow  v_i v_k \in E \text{ and } v_k v_j \in E. \]	
\end{proposition}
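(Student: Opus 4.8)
The plan is to prove both implications of the equivalence separately. Throughout, write $I_v=[\ell_v,r_v]$ for the interval assigned to a vertex $v$ in a representation. For the ``only if'' direction I would start from a proper interval representation and extract the required ordering; for the ``if'' direction I would start from an ordering with the stated (\emph{umbrella}) property and construct a proper representation explicitly.

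For the forward direction: first perturb the representation so that all $2n$ endpoints are distinct (this affects no intersection, since a finite proper representation can always be made generic without creating containments). The key observation is that in a proper representation the order of left endpoints coincides with the order of right endpoints: if $\ell_u<\ell_v$ but $r_u>r_v$, then $I_v\subsetneq I_u$, contradicting properness. Hence I can list the vertices as $v_1,\dots,v_n$ with $\ell_{v_1}<\cdots<\ell_{v_n}$ and simultaneously $r_{v_1}<\cdots<r_{v_n}$. Then, for $i<k<j$ with $v_iv_j\in E$, the intersection of $I_{v_i}$ and $I_{v_j}$ forces $\ell_{v_j}\le r_{v_i}$; combined with the monotonicity of the endpoints this gives $\ell_{v_k}<\ell_{v_j}\le r_{v_i}$ (so $v_iv_k\in E$) and $\ell_{v_j}\le r_{v_i}<r_{v_k}$ (so $v_kv_j\in E$). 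This is a short chain of inequalities and I expect no difficulty here.

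For the reverse direction I plan to build the intervals from the ordering. Define $R(i)$ to be the largest index $j\ge i$ with $v_iv_j\in E$, setting $R(i)=i$ when $v_i$ has no neighbour to its right. The umbrella property yields two structural facts: (a) the right-neighbourhood of $v_i$ is exactly the block $\{i+1,\dots,R(i)\}$, and (b) $R$ is non-decreasing -- indeed, if $i<i'$ but $R(i)>R(i')$, then applying the property to the edge $v_iv_{R(i)}$ at the intermediate index $i'$ would force $v_{i'}v_{R(i)}\in E$, contradicting the maximality of $R(i')$. Using (b) I set $\ell_i=i$ and $r_i=R(i)+\frac{i}{n+1}$; the perturbation term makes both endpoint sequences strictly increasing (so no interval contains another, giving properness) while keeping $r_i\in[R(i),R(i)+1)$. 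A final check verifies that for $i<j$ the intervals meet iff $j\le R(i)$, which by (a) is exactly the adjacency relation. The main obstacle is precisely this construction: one must simultaneously guarantee exact agreement of the resulting intersection graph with $E$ \emph{and} the absence of proper containments, and it is the monotonicity of $R$ together with the carefully chosen perturbation $\frac{i}{n+1}$ that reconciles these two requirements.
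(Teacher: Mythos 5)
Your proof is correct, but note that the paper does not actually prove Proposition~\ref{lem:proper}: it imports the statement from the literature (Gardi's account of Roberts' characterization of proper/indifference graphs) and uses it as a black box inside the proof of Theorem~\ref{thm:properInter}. So there is no paper-internal proof to compare with; your argument is a self-contained substitute for the citation. Both directions check out. In the forward direction, the key observation that properness forces the left- and right-endpoint orders to coincide (once endpoints are distinct) is exactly the right tool, and the inequality chain $\ell_{v_k}<\ell_{v_j}\le r_{v_i}$ and $\ell_{v_j}\le r_{v_i}<r_{v_k}$ is sound. In the reverse direction, your two structural facts --- that the right-neighbourhood of $v_i$ is the block $\{i+1,\dots,R(i)\}$ and that $R$ is non-decreasing --- follow correctly from the umbrella property, and the explicit representation $\ell_i=i$, $r_i=R(i)+\tfrac{i}{n+1}$ does make both endpoint sequences strictly increasing (hence no containments) while recovering exactly the adjacency relation, since for integers $j$ one has $j\le R(i)+\tfrac{i}{n+1}$ if and only if $j\le R(i)$. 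The only step you leave as ``standard'' is the perturbation of a proper representation to one with distinct endpoints; this is indeed routine (in a proper representation, equal left endpoints or equal right endpoints force identical intervals, and touch-points can be removed by extending an endpoint within a sufficiently small gap), but a one-line justification of it would make the argument fully self-contained.
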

The core idea for this class and those that follow is to represent the class property as a combination of simpler instructions to be shared among all nodes. As making the adjacency of a node explicit would require proofs to be too large, we intend to exploit the geometric properties of these classes in order to represent their adjacency with a constant amount of \textit{log}-sized labels.

 Considering that constructing a PLS for recognizing proper interval graphs by use of the previous characterization is rather direct

\begin{theorem}\label{thm:properInter}
There is a PLS for $\propInterval$ using certificates of size $\mathcal{O}(\log n)$ bits.
\end{theorem}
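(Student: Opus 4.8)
The plan is to have the prover hand each node its index in an ordering witnessing \Cref{lem:proper}, and then design purely local tests certifying that this ordering is a genuine vertex enumeration satisfying the required implication. The first step I would take is to record a convenient reformulation: the condition in \Cref{lem:proper} is equivalent to asking that, for every vertex $v$ sitting at position $m$, its closed neighbourhood $N[v]=N(v)\cup\{v\}$ occupies a \emph{contiguous} block of positions, i.e. an interval $[a_v,b_v]$ of integers. Indeed, the implication ``$v_iv_j\in E$ with $i<k<j$ forces $v_iv_k,v_kv_j\in E$'' says exactly that no index can be skipped inside the span of a closed neighbourhood, and conversely if every $N[v]$ is such an interval the implication is immediate.

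Given this reformulation, the prover supplies each node $v$ with a label $p(v)\in[n]$ (its position), together with the auxiliary data needed to run the $\sizeofG$ protocol of \Cref{prop:numbernode}, so that every node learns the true value of $n=|V|$. Each node then collects the labels $p(w)$ of its neighbours, computes $a_v=\min\{p(w):w\in N[v]\}$ and $b_v=\max\{p(w):w\in N[v]\}$, and verifies the single arithmetic identity $\deg(v)+1=b_v-a_v+1$. Since the labels inside $N[v]$ all lie in $[a_v,b_v]$ and, once we know they are globally distinct, are $\deg(v)+1$ distinct integers, this identity forces $N[v]$ to cover \emph{all} of $[a_v,b_v]$, which is precisely the interval condition above. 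This part is entirely local and costs $\cO(\log n)$ bits.

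The main obstacle is the step I glossed over: the reformulation only yields a proper interval model when $p$ is an honest bijection $V\to[n]$, yet global injectivity cannot be inspected locally. I would resolve this by additionally requiring the labelling to make consecutive positions adjacent, having each node $v$ with $1<p(v)<n$ check that some neighbour is labelled $p(v)-1$ and some neighbour is labelled $p(v)+1$ (with the obvious one-sided checks at the two extremes). On the completeness side this is satisfiable: in a connected proper interval graph the witnessing ordering does place an edge between every pair of successive vertices, which follows from a short argument combining connectivity with the interval property of closed neighbourhoods. On the soundness side, let $U\subseteq[n]$ be the set of labels actually used; the adjacency checks show that if $i\in U$ with $i<n$ then $i+1\in U$, and if $i\in U$ with $i>1$ then $i-1\in U$, so the maximum of $U$ cannot be below $n$ nor its minimum above $1$, forcing $U=[n]$. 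Since $\sizeofG$ guarantees $|V|=n=|U|$, a surjection from an $n$-element set onto an $n$-element set is a bijection, so $p$ is a genuine permutation and the length test of the previous paragraph truly certifies \Cref{lem:proper}.

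Assembling the pieces, completeness holds because an honest prover labels the vertices by a witnessing proper ordering and all tests pass, while soundness holds because passing every test produces a valid permutation in which each closed neighbourhood is an interval, hence a proper interval representation of $G$. All certificates — the position labels and the $\sizeofG$/spanning-tree data — have size $\cO(\log n)$, yielding the claimed bound for $\propInterval$.
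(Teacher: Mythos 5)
Your proposal is correct and follows essentially the same route as the paper: both certify the ordering of Proposition~\ref{lem:proper} by giving each node its position, checking locally that closed neighbourhoods occupy contiguous blocks of positions, and using adjacency of consecutive positions to force the labels to form a genuine permutation of $[n]$. The only differences are mechanical -- the paper anchors the ordering by a global consensus on the identifiers of the nodes in positions $1$ and $n$, whereas you anchor it with the \sizeofG{} subroutine and a surjectivity argument, and you phrase the contiguity test as the arithmetic identity $\deg(v)+1=b_v-a_v+1$ rather than as the paper's explicit interval-filling check; these are equivalent, and your handling of how the nodes learn $n$ is if anything slightly more explicit.
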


\begin{proof}
	
The certificate that the prover sends to each node $v \in V$ has two parts.

\smallskip

\begin{itemize}

\item A number $i_v \in [n]$, which will be interpreted as the position of node $v$ in the ordering.

\item Two different $\id$s: $\id_{first}$ and $\id_{last}$, for having a global consensus on the first and last node of the ordering.
		
\end{itemize}

\smallskip

The algorithm performed by the nodes is as follows. 
Each node $v$ interprets  $i_v$ as its position in the ordering.  Then, the nodes check locally that they all received the same $\id_{first}$ and $\id_{last}$.
The nodes with these $\id$s  correspond to the first and last nodes of the ordering, which checked that they received numbers $1$ and $n$.
Finally, every node $v$ checks locally that all its $d_v$ neighbours receive different numbers in $[i_v-a,i_v-1] \cup [i_v+1,i_v+b]$ with $a+b=d_v$ and with $[k,k-1]= \emptyset$.
The first node checks that $a=0$ and the last node checks that $b=0$. All the other $a$'s and $b$'s must be different than zero.

\smallskip

{\bf{Completeness.}} Suppose first that the graph $G$ is a proper interval graph. Since an honest prover provides the correct ordering, 
from Proposition~\ref{lem:proper} it follows that every node accepts.

\smallskip

{\bf{Soundness.}} Now we are going to prove that, if every node accepts, then the graph $G$ is a proper interval graph. 
First, the nodes check that the assignment of numbers corresponds to an ordering.
Note that every node $v$ must have a neighbour $v'$ with $i_{v'}=i_{v}+1$, with the exception of the last one. Since there is one node that receives a 1 and another that receives an $n$, then every node must receive a different number in $[n]$.
Therefore, the assignment given by the prover is indeed an ordering from $1$ to $n$. We can denote the nodes as $v_1,\ldots v_n$. Now, we need to prove that the ordering satisfies the condition of Proposition~\ref{lem:proper}. In fact, let $i<j<k$ such that $v_i v_j \in E$. From the point of view of $v_j$, since $v_i$ is a neighbour, then $v_k$ must also be a neighbour (because $[k,j] \subseteq[i,j])$. On the other hand, from the point of view of $v_i$, since $v_j$ is a neighbour, then $v_k$ must also be a neighbour (because $[i,k] \subseteq[i, j]$).
\end{proof}

\subsection{Chordal Graphs and the Particular Case of Interval Graphs}\label{chordalinterval}

	We now extend this strategy of representing the adjacency of nodes in a \textit{compact} manner to a more general setting by studying the classes of Chordal and Interval graphs. Both of these classes
	are closely related and share similar challenges when it comes to distributing the proof of its structure among all the nodes in the graph.

Chordal graphs are intersections of subtrees of a tree. More precisely, $G$ is chordal if and only if there exists a tree $T$ such that every node of $G$ can be associated with a subtree of $T$  in such a way that two nodes of $G$ are adjacent if their corresponding subtrees intersect.  The name chordal comes from the fact that a graph is chordal if and only if every cycle of length at least 4 has a chord. That is, for any integer $k\ge 4$, $G$ does not have a  $C_k$ as an induced subgraph. These graphs are especially relevant from an algorithmic perspective as several graph properties (finding the largest independent set or clique, computing the chromatic number, etc) can be efficiently computed when the input graph is restricted to this class~\cite{hoang1994efficient, rose1976algorithmic}.

As for interval graphs, the structure of a chordal graph is determined by its maximal cliques: while an interval graph can be represented as a path formed by its maximal cliques, in the case of chordal graphs these can be seen as trees.

A tree decomposition of some graph $G$ is a tree $T_G$ where each node $b\in T_G $ (referred to as \emph{bags}) represents a set of nodes $b\subseteq V$ in the original graph with the following properties: {(1)} each node $v\in G$ is present in at least one bag, {(2)} for every edge $e=uv \in E(G)$ there exists a bag $b$ that contains both $u$ and $v$ and {(3)} if we define $T_v$ as the set of bags in $T_G$ to which $v$ belongs, they form a connected subgraph of $T_G$. From this, we can define a \emph{clique-tree} of a graph $G$ as the special case of a tree decomposition for $G$ where each bag represents a maximal clique of $G$.

Similarly, we define a path decomposition of a graph $G$ as a tree decomposition when the tree $T_G$ in question is a path~\cite{bodlaender1998partial}. Following the previous notation, we define a \emph{clique-path} of a graph $G$ as the special case of a path decomposition for $G$ where each bag represents a maximal clique of $G$ 
(see Figure~\ref{fig:trim}).

\begin{proposition}[\cite{brandstadt1999graph}]
	A graph $G$ is said to be chordal if and only if it admits a clique-tree.
\end{proposition}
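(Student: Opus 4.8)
The plan is to prove the two implications separately, since they require genuinely different tools. For the reverse direction (a clique-tree forces chordality) I will first turn the clique-tree into a subtree-intersection representation and then rule out long induced cycles. For the forward direction (chordality yields a clique-tree) I will argue by induction on $|V|$, peeling off a simplicial vertex and reinserting it into the clique-tree built for the smaller graph. Throughout I may freely use the tree-decomposition axioms (1)--(3) recalled just before the statement, and the fact that an induced subgraph of a chordal graph is chordal.

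For the implication ``clique-tree $\Longrightarrow$ chordal'', suppose $G$ admits a clique-tree $T$, and for each $v\in V$ let $T_v$ be the set of bags containing $v$, which is a connected subtree of $T$ by axiom (3). I claim $G$ is exactly the intersection graph of the family $\{T_v\}_{v\in V}$: if $uv\in E$ then axiom (2) gives a common bag, so $T_u\cap T_v\neq\varnothing$; conversely if $T_u\cap T_v\neq\varnothing$ then some bag contains both $u$ and $v$, and since that bag is a (maximal) clique we get $uv\in E$. It then remains to show that an intersection graph of subtrees of a tree has no induced cycle of length at least $4$. I would prove this by rooting $T$ arbitrarily and, for each $v$, letting $r_v$ denote the unique node of $T_v$ closest to the root. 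Given an induced cycle $v_1 v_2\cdots v_k$ with $k\ge 4$, pick the index whose $r_{v_i}$ has maximum depth, say $v_1$. Both cyclic neighbours $v_2$ and $v_k$ satisfy $T_{v_1}\cap T_{v_j}\neq\varnothing$; choosing $x\in T_{v_1}\cap T_{v_2}$, the node $x$ is a descendant of $r_{v_1}$, and since $\mathrm{depth}(r_{v_2})\le\mathrm{depth}(r_{v_1})$ the ancestors $r_{v_1},r_{v_2}$ of $x$ are comparable, forcing the tree-path from $x$ up to $r_{v_2}$ (which lies in $T_{v_2}$ by connectivity) to pass through $r_{v_1}$. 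Hence $r_{v_1}\in T_{v_2}$, and symmetrically $r_{v_1}\in T_{v_k}$, so $r_{v_1}\in T_{v_2}\cap T_{v_k}$ gives the chord $v_2 v_k$, contradicting that the cycle is induced.

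For the implication ``chordal $\Longrightarrow$ clique-tree'', I would use Dirac's lemma: every chordal graph has a simplicial vertex $v$, i.e.\ one whose neighbourhood $N(v)$ is a clique. Then $G-v$ is chordal, and by the inductive hypothesis it has a clique-tree $T'$. Since $N(v)$ is a clique of $G-v$, it lies inside some maximal clique $M$, realized as a bag $b$ of $T'$, and because $v$ is simplicial the set $K=N(v)\cup\{v\}$ is a maximal clique of $G$. I would then split into two cases. If $N(v)=M$, the bag $b$ is no longer maximal in $G$ and I simply relabel it as $K=M\cup\{v\}$. If $N(v)\subsetneq M$, I attach a fresh bag $K$ adjacent to $b$. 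In both cases one checks that every bag is a maximal clique of $G$, that the connectivity axiom (3) still holds for each vertex (in particular $T_v$ is a single bag, and each $w\in N(v)$ keeps a connected family because $K$ is adjacent to a bag already containing $w$), and that the new edges $vw$ are covered.

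The main obstacle I anticipate is the maximal-clique bookkeeping in this forward direction: one must verify carefully that reinserting $v$ neither leaves a non-maximal set as a bag nor destroys maximality of an existing bag, and that exactly the maximal cliques of $G$ appear. The delicate point is showing that no old bag $M'$ of $T'$ becomes subsumed after adding $v$ unless $M'\subseteq N(v)$, which by maximality in $G-v$ can only happen when $M'=M$ in the first case. The reverse direction, by contrast, reduces cleanly to the rooted-subtree argument above, so the real care is concentrated in the inductive construction.
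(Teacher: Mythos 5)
Your proof is correct, but there is nothing in the paper to compare it against: the paper states this proposition as a known classical result, citing Brandst\"adt et al.'s survey, and gives no proof of either direction. What you have written is essentially the standard textbook argument, and both halves are sound. The direction ``clique-tree $\Rightarrow$ chordal'' via the deepest-root argument is the classical proof that intersection graphs of subtrees of a tree have no induced cycle of length at least $4$; the only implicit step (that the minimum-depth node $r_v$ of the connected subtree $T_v$ is unique and is an ancestor of every node of $T_v$) does follow from connectivity, as you suggest. The direction ``chordal $\Rightarrow$ clique-tree'' by peeling a simplicial vertex is the standard inductive construction, and you correctly isolate the key bookkeeping fact: a maximal clique $C$ of $G-v$ can fail to be maximal in $G$ only when $C\subseteq N(v)$, which (since $N(v)$ is a clique) forces $C=N(v)=M$, i.e.\ exactly your case split. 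Two points you leave implicit deserve to be explicit. First, the claim that $N(v)$ lies in a maximal clique $M$ ``realized as a bag $b$ of $T'$'' is not licensed by the paper's definition of clique-tree alone (each bag is a maximal clique, not conversely); you need the inductive hypothesis strengthened to ``$G-v$ has a clique-tree whose bags are \emph{exactly} its maximal cliques'' --- an invariant your construction does preserve in both cases and which you allude to at the end --- or, alternatively, the Helly property of subtrees of a tree, which yields a bag of $T'$ containing the clique $N(v)$. Second, Dirac's lemma on the existence of simplicial vertices is used as a black box; its standard proof (via minimal separators) is independent of clique-trees, so there is no circularity, but it should be cited. Neither point is a substantive gap.
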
 
  \begin{proposition}[\cite{brandstadt1999graph}]
  	A graph $G$ is said of be an interval graph if and only if it admits a clique-path.
  \end{proposition}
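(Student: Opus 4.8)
The plan is to prove both implications directly, using the interval representation in one direction and the path structure in the other, with the Helly property of intervals doing the essential work.

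First I would show that every interval graph admits a clique-path. Fix an interval representation $\{I_v\}_{v \in V}$ with $uv \in E \iff I_u \cap I_v \neq \varnothing$. By the Helly property of intervals on the line, the members of any maximal clique $C$ (being pairwise intersecting) share a common point; moreover one checks that $C$ equals the set of all intervals passing through some single point $p_C$, since the intervals through $p_C$ always form a clique and maximality forces equality. Distinct maximal cliques then receive distinct representative points (if $p_C = p_{C'}$ then both cliques equal $\{v : p_C \in I_v\}$), so I may list the maximal cliques as $C_1, \dots, C_m$ with $p_{C_1} < \dots < p_{C_m}$. I would then verify the three path-decomposition axioms: every vertex lies in some maximal clique; every edge $uv$ yields a common point of $I_u, I_v$ hence a maximal clique containing both; and, crucially, for a fixed $v$ the identity $C_i \ni v \iff p_{C_i} \in I_v$ shows the cliques containing $v$ are exactly those whose representative point falls in the interval $I_v$, which form a contiguous block in the linear order. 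This establishes that $C_1, \dots, C_m$ is a clique-path.

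Conversely, I would show a clique-path yields an interval representation. Let $P = (C_1, \dots, C_m)$ be a clique-path; by the connectivity axiom each vertex $v$ occupies a contiguous subpath, so there are indices $a(v) \le b(v)$ with $v \in C_i \iff a(v) \le i \le b(v)$. Assign $I_v = [a(v), b(v)]$. If $uv \in E$, some bag contains both $u$ and $v$, so $I_u \cap I_v \neq \varnothing$; conversely, if $I_u \cap I_v \neq \varnothing$ then some index $i$ satisfies $u,v \in C_i$, and since $C_i$ is a clique we get $uv \in E$. Hence $\{I_v\}$ realizes $G$ as an interval graph. This direction is essentially immediate once the clique-path supplies both the linear order and the contiguity.

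The main obstacle is the path-connectivity argument in the forward direction, namely that the maximal cliques containing a fixed vertex appear consecutively under the clique-point ordering. This is exactly where the geometry is indispensable: it reduces to the characterization $C_i \ni v \iff p_{C_i} \in I_v$, and one must justify that no clique is ``skipped'', i.e. that any representative point lying inside $I_v$ belongs to a clique actually containing $v$ (which follows from maximality, since $v$ passes through that point and could otherwise be added). An alternative route is to invoke the already-cited clique-tree characterization for chordal graphs and argue that for interval graphs the clique-tree can be straightened into a path, but the direct interval-based construction above is cleaner and self-contained.
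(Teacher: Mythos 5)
Your proof is correct, but there is nothing in the paper to compare it against: the paper states this proposition as a citation to the literature (Brandstädt et al., \emph{Graph Classes: A Survey}) and uses it as a black box, giving no proof of its own. What you have written is essentially the classical Helly-property argument (in the spirit of Gilmore--Hoffman). Both directions are sound: in the forward direction, the key identity $v \in C_i \iff p_{C_i} \in I_v$ is justified exactly as you say (one inclusion because $p_{C_i}$ is a common point of the clique, the other because an interval through $p_{C_i}$ could otherwise be added to $C_i$, contradicting maximality), and it simultaneously gives distinctness of the representative points and the contiguity axiom, since $\{i : p_{C_i} \in I_v\}$ is the set of indices whose sorted points fall in the interval $I_v$. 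In the converse direction, your observation that a nonempty intersection of two integer-endpoint intervals $[a(u),b(u)]$ and $[a(v),b(v)]$ contains an integer index $i$, so that $u,v \in C_i$ and hence $uv \in E$, closes the loop. The steps you leave implicit (every vertex and every edge extends to a maximal clique; pairwise intersecting intervals have a common point) are standard and fine to invoke. So your proposal supplies a correct, self-contained proof of a fact the paper only quotes, which is a strictly stronger contribution than the paper makes for this particular statement.
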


Our goal is to show a PLS for recognizing chordal graphs. For this, we would like to find a way to simulate the nodes in the clique-tree by choosing a set of leaders for each maximal clique. Then, we would like to label each node with the range of cliques it belongs to. The problem is that, while interval graphs require only two endpoints to represent such a range, in the case of chordal graphs we would need to encode an entire subtree which would require labels of size $\cO(n\log n)$. Therefore, we need to find a more succinct way to encode a tree. 

For this, we show that we can ``trim" the graph by sequentially removing nodes by using the tree structure. If we consider a clique-tree 
rooted at some node $\rho_T$ and trim the graph in a series of $d$ steps (which are performed simultaneously by the nodes), with $d$ the depth of the clique-tree, then we can partition the set of nodes as follows.
At each step $i$, we assume that the clique-tree $T_G$ has depth $i$ and look at the leaves at the deepest level in the clique-tree, and, for each such leaf $b$, delete all the nodes which belong {\emph{only}} to this bag and name $F_b$ such a set. 
Then, we continue to step $i-1$ (where our new tree has depth $i-1$) and repeat this process. We know this set is non-empty by the maximality of the clique represented by the bag $b$. As this goes on for $d$ steps, we have that a node $v$ is eliminated from the clique-tree at the step corresponding to the lowest depth of a bag containing the node $v$, as it can be seen in Figure~\ref{fig:chordaltrim}.

\begin{lemma}\label{lem:cordalTrim}
	For any chordal graph $G$ such that $T_G$ is a clique-tree rooted at some bag $\rho_T$, consider $\{M_b\}_{b\in T_G}$ to be its set of maximal cliques. Then, it is possible to partition the nodes in $V$ into  a family $\{F_b\}_{b \in T_G}$ such that for any pair of bags $b\neq b'$ with $\mathsf{depth}(b) \ge \mathsf{depth}(b')$ it holds that $F_b\subseteq M_b\setminus M_{b'}$.
\end{lemma}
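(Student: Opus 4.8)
The plan is to turn the informal bottom-up trimming into a clean assignment rule based on bag depth. For a vertex $v$, recall that $T_v$ is the subtree of bags of $T_G$ containing $v$, which is \emph{connected} by property~(3) of tree decompositions, and note that $v\in M_b$ if and only if $b\in T_v$. I would define $F_b$ to be the set of vertices $v$ whose shallowest containing bag is exactly $b$, i.e. $b\in T_v$ and $\mathsf{depth}(b)=\min_{b''\in T_v}\mathsf{depth}(b'')$. This formalizes the trimming: since the procedure deletes the deepest leaves first, the bags of $T_v$ disappear from the bottom up, so the shallowest bag of $T_v$ is the last one that still contains $v$, and it is at that step that $v$ is eliminated.

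The crux of the argument, and the step I expect to be the main obstacle, is to show that this shallowest bag is \emph{unique}, so that $F_b$ is well defined and $\{F_b\}_{b\in T_G}$ is genuinely a partition of $V$; this is precisely where the connectivity of $T_v$ is essential. I would argue by contradiction: suppose two distinct bags $b_1,b_2\in T_v$ both attain the minimum depth over $T_v$. Being distinct and of equal depth, neither is an ancestor of the other in the tree rooted at $\rho_T$, so their least common ancestor $a$ is a proper ancestor of both and hence satisfies $\mathsf{depth}(a)<\mathsf{depth}(b_1)$. The unique $T_G$-path from $b_1$ to $b_2$ passes through $a$, and by connectivity of $T_v$ this entire path lies in $T_v$; in particular $a\in T_v$, contradicting the minimality of $\mathsf{depth}(b_1)$. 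Hence every $T_v$ has a unique shallowest bag, and each vertex lands in exactly one $F_b$, giving the partition.

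Granting uniqueness, the two required inclusions follow immediately. For $v\in F_b$ we have $b\in T_v$, i.e. $v\in M_b$, so $F_b\subseteq M_b$. For the second part, fix bags $b\neq b'$ with $\mathsf{depth}(b)\ge\mathsf{depth}(b')$ and take $v\in F_b$; if we had $v\in M_{b'}$ then $b'\in T_v$, so by minimality of $\mathsf{depth}(b)$ over $T_v$ we get $\mathsf{depth}(b')\ge\mathsf{depth}(b)$, forcing $\mathsf{depth}(b')=\mathsf{depth}(b)$, and uniqueness of the shallowest bag would yield $b'=b$, a contradiction. Thus $v\notin M_{b'}$, and $F_b\subseteq M_b\setminus M_{b'}$, as claimed. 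Overall the only nontrivial ingredient is the uniqueness of the shallowest bag of each $T_v$, which is exactly where the connectivity of the clique-tree decomposition does the work.
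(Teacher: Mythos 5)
Your proof is correct, but it is organized quite differently from the paper's. The paper proceeds by induction on the number of bags $|T_G|$: it repeatedly peels off a leaf bag $b$ with parent $b'$, sets $F_b = M_b\setminus M_{b'}$, uses the connectivity of the sets $T_v$ to argue that $F_b$ meets no bag of smaller depth, and then recurses on the clique-tree $T_G - b$ of the graph $G - F_b$. You instead give a single explicit rule --- $F_b$ is the set of vertices whose shallowest containing bag is $b$ --- and reduce the whole lemma to the uniqueness of that shallowest bag, which you prove via the least-common-ancestor argument together with the connectedness of $T_v$. Both proofs hinge on the same structural fact (property~(3) of tree decompositions), and in fact they produce the same partition: a vertex trimmed at a deepest leaf $b$ is exactly one whose subtree $T_v$ is $\{b\}$, i.e.\ whose shallowest bag is $b$. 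What your route buys is that well-definedness is made explicit rather than implicit in a trimming narrative, and the inclusions $F_b\subseteq M_b\setminus M_{b'}$ then fall out in two lines; your formulation also matches the paper's own informal description (Figure~\ref{fig:chordaltrim}) more closely than its inductive proof does. What the paper's induction buys is an algorithmic, recursive construction that mirrors the $d$-step trimming procedure used to motivate the protocol, and it makes immediately visible that each $F_b$ is nonempty (via maximality of the clique $M_b$ relative to its parent), a fact the PLS of Theorem~\ref{thm:chordal} needs when selecting a leader $\rho_b$ from each $F_b$; with your definition, nonemptiness requires one extra observation (take $v\in M_b\setminus M_{t(b)}$ and note its shallowest bag is $b$), though the lemma as stated does not demand it.
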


	\begin{figure}[!ht]
	\centering
	\includegraphics[width=0.6\linewidth]{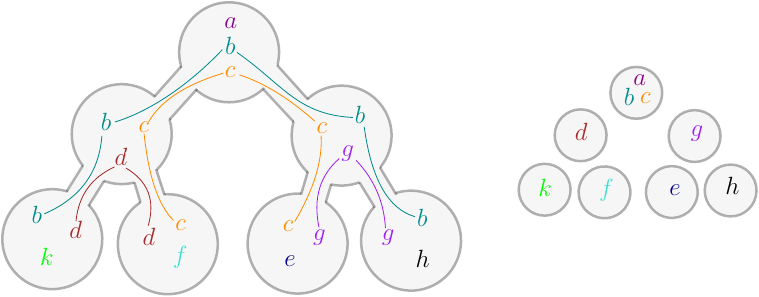}
	\caption{Graph partition according to Lemma~\ref{lem:cordalTrim}. Given a tree decomposition, each node $v$ is positioned at a different set depending on the bag containing $v$ such that its depth is the lowest in the clique-tree.}
	\label{fig:chordaltrim}
\end{figure}
\begin{proof}
	We show this by induction on the number of bags in the clique-tree of a graph $G$, given by $|T_G|$. Indeed, if $T_G$ is composed of only two bags $b$ and $b'$  with $T_G$ rooted at $b$ then, as both $M_b$ and $M_{b'}$ are maximal cliques, we simply consider $F_{b'}$ to be $M_{b'}\setminus M_b$  and $F_b= M_b$. Clearly. these sets are disjoint.

	Consider now $|T_G|=k$ with $k\ge 3$. Then, as $T_G$ is a tree, there must exists a bag $b \in T_G$ with degree 1, with $b'$ its parent in $T_G$.  We then have that $F_b = M_b \setminus M_{b'}$ is disjoint with all other bags in the tree $T_G$ as, otherwise, if there exists a node in $F_b$ that is also in a bag at a lower depth then, by the definition of a clique-tree, it must belong to $M_{b'}$. From there, we have that the tree $T' = T_G'- b$ is a clique-tree for the graph $G' = G-F_b$. It follows, by induction, that there exists a disjoint collection $\{F_{\bar{b}}\}_{\bar{b}\in T_G-b}$ with the above properties. Hence, the family $\{F_{\bar{b}}\}_{\bar{b}\in T_G-b} \cup \{F_b\}$ is as desired.
\end{proof}

Now, we would like to select a collection of leaders from each bag in $T_G$ and provide certificates to these leaders in order to verify the overlaying clique-tree. Two difficulties arise:
\begin{enumerate}
	\item The leaders in each pair of adjacent bags in $G$, may not be connected, as by construction the leader in some bag $b$ does not necessarily belong to $b$'s parent. Therefore, we would like to consider a collection of leaders (in the intersection of adjacent bags) in order to simulate $T_G$'s edges.
	\item If we could solve the first problem, we have no guarantees that we will be able to choose a leader for each edge of $T_G$ in an injective manner: it could be the case that a leader belongs to the $\Omega(n)$ maximal cliques adjacent to the same bag, and should therefore handle too many messages.
\end{enumerate}

To solve the first issue, we show how to choose a root for $T_G$ and a collection of nodes that belong to the intersection of adjacent bags in such a way that we are able to verify the correctness of the tree structure.
\begin{lemma}\label{lem:cordal}
	Given a chordal graph $G$, there exists a rooted clique-tree  $T$ such that it is possible  to choose a collection of leaders for each bag $\{v_b\}_{b\in T}$ and auxiliary nodes $\{w_\ell\}_{\ell\in T}$ for each leaf in $T$ such that if $\mathsf{depth}{(b)}$ is the depth of a bag $b$ in the tree and $t(b)$ is the parent of the bag $b$ in $T$, then:
	\begin{itemize}
		\item For each $b\in T, \quad v_b \in b$.
		\item For each $b\in T$, $v_{b}v_{t(b)}\in E(G)$. 
		\item If $\mathsf{ depth}{(b)} \neq\mathsf{ depth}{(b')}$, then $v_b \neq v_{b'}$.
		\item If $b\in T$ is a leaf, then $w_bv_b \in E(G)$.
		\item$ \{v_b\}_{b\in T} \cap \{w_\ell\}_{\ell\in T} = \emptyset$.
	\end{itemize}
\end{lemma}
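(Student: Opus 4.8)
The plan is to fix a rooted clique-tree and then exhibit the leaders and auxiliary nodes explicitly, reading off each of the five conditions from membership in cliques and separators. First I would invoke Lemma~\ref{lem:cordalTrim} for the chosen rooted clique-tree $T$ to obtain the partition $\{F_b\}_{b\in T}$. This gives every vertex $u$ a well-defined \emph{top bag}, namely the minimum-depth bag containing $u$, and it records that the newly-appearing vertices satisfy $F_b\subseteq M_b\setminus M_{t(b)}$ for every non-root $b$. Since the graphs in our model are connected, every non-root bag meets its parent, so each separator $M_b\cap M_{t(b)}$ is non-empty; this non-emptiness is exactly what lets us place leaders inside separators.

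For the leaders I would take $v_\rho$ to be any vertex of $M_\rho$ and, for each non-root bag $b$, choose $v_b$ inside the separator $M_b\cap M_{t(b)}$. Two of the five conditions are then immediate: $v_b\in M_b\cap M_{t(b)}\subseteq M_b$ gives the first item, and since both $v_b$ and the parent's leader $v_{t(b)}$ lie in the clique $M_{t(b)}$, they are adjacent in $G$, giving the second item (they are distinct, hence form a genuine edge, once the third item below is in place).

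The delicate point, and what I expect to be the main obstacle, is the third condition: leaders at different depths must be distinct. Taking an arbitrary separator vertex does not suffice, as the example of a star $K_{1,m}$ whose clique-tree is drawn as a path shows, since there the centre is the only separator vertex on every edge and would be forced to be the leader at every depth. The way around this is to use the freedom to choose $T$ and its root, together with a top-down selection: along the edge $(b,t(b))$ I would pick as $v_b$ a separator vertex whose top bag is exactly the parent $t(b)$, that is, a vertex of $F_{t(b)}\cap M_b$. Such a vertex is eligible to be a leader only for children of $t(b)$, all of which sit at one single depth, so it can never be reused at a second depth and the resulting assignment is automatically depth-distinct. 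The crux is therefore to guarantee a rooted clique-tree in which every non-root bag shares with its parent a vertex introduced at that parent, i.e. $F_{t(b)}\cap M_b\neq\varnothing$ for all $b$. I would establish this by choosing the clique-tree suitably, flattening long separator-chains as in the star example, arguing via the running-intersection property and the maximality of the bags, or by an exchange argument that re-attaches any offending bag to a better parent; this existence statement is the technical heart of the lemma.

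The auxiliary nodes are then easy. For a leaf $\ell$, the set $F_\ell$ is non-empty and $F_\ell\subseteq M_\ell\setminus M_{t(\ell)}$, so I would let $w_\ell$ be any vertex of $F_\ell$. Since $\ell$ is a leaf and $\ell$ is the top bag of $w_\ell$, the vertex $w_\ell$ belongs to no bag other than $\ell$; hence it lies in no separator and differs from $v_\rho$, so it is never selected as a leader. This yields the last two items at once: $w_\ell v_\ell\in E(G)$ because $w_\ell,v_\ell\in M_\ell$, and $\{v_b\}_{b\in T}\cap\{w_\ell\}_{\ell\in T}=\varnothing$; distinctness of the $w_\ell$ across leaves is inherited from the disjointness of the sets $F_\ell$.
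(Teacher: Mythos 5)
Your construction is essentially the paper's: leaders are placed in parent--child separators subject to avoiding the grandparent bag (your rule $v_b\in F_{t(b)}\cap M_b$, i.e.\ ``the leader's top bag is the parent,'' is exactly the paper's choice of a leader in $b^j\cap b^j_i\setminus b^{j-1}$), auxiliary nodes for leaves are taken from the leaf-private sets of Lemma~\ref{lem:cordalTrim}, and your derivations of conditions one, two, four and five, as well as of depth-distinctness from the top-bag rule, are sound. The problem is that you stop exactly where the work lies. The statement ``there exists a rooted clique-tree with $F_{t(b)}\cap M_b\neq\varnothing$ for every non-root bag $b$'' is not a reduction of the lemma to something simpler --- once the selection rule is fixed, it \emph{is} the lemma --- and you leave it as a menu of unexecuted strategies (``flattening separator-chains,'' ``running intersection and maximality,'' ``an exchange argument that re-attaches any offending bag''). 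None is carried out, so the technical heart is missing; and it genuinely needs an argument, since the condition can fail for a badly rooted clique-tree: in your own star example $K_{1,m}$, taking the clique-tree to be a path $b_1-b_2-\cdots-b_m$ rooted at $b_1$ gives $F_{b_{i-1}}\cap M_{b_i}=\{u_{i-1}\}\cap\{c,u_i\}=\varnothing$ for $i\ge 3$.

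What the paper supplies at this point is a potential-function argument attached to the exchange you allude to: choose $T$ among all rooted clique-trees so as to maximize the number of leaves and, subject to that, to minimize the sum of the depths of the leaves. If some bag $b$ violated the condition, i.e.\ $M_b\cap M_{t(b)}\subseteq M_{t(t(b))}$, then re-attaching $b$ as a child of $t(t(b))$ still yields a valid clique-tree (the running-intersection property survives precisely because the separator lies inside the grandparent bag), never destroys a leaf, and strictly decreases the sum of leaf depths whenever the leaf count stays the same --- contradicting the extremal choice of $T$. This is what makes ``re-attach any offending bag'' terminate; without it, a re-attachment may create new violations elsewhere and the process could a priori cycle. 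A smaller shared flaw: you take $v_\rho$ to be an \emph{arbitrary} vertex of $M_\rho$, but it can be forced to collide with a depth-one leader (for the path $a$--$x$--$y$--$b$ rooted at the middle bag $\{x,y\}$, the two child separators are $\{x\}$ and $\{y\}$, so $v_\rho\in\{x,y\}$ necessarily equals a child's leader, violating the third condition); the root's leader therefore needs to be chosen last, or the rooting adjusted. The paper's own write-up has the same slip at the root, so this is secondary, but the unproven existence statement above is a genuine gap.
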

\begin{proof}
	Let $T$ be a  rooted {clique-tree} in $G$ such that its set of leaves is the largest and the sum of their depths is as small as possible. Consider now $b_{\rho}\in T$ to be the root of $T$, and let some arbitrary node $v_r \in b_{\rho}= b^0$ be its leader. If we define $\{b^1_i\}_{i=1}^{\ell}$ to be the children of $b_{\rho}$ in $T$, for each $i$ we choose an arbitrary leader $v^1_i$ in $b_r\cap b^1_i$, which is non empty as $G$ is connected.
	
	Consider $b^j$ to be any node at level $j\ge 1$ of the tree with $b^{j-1}$ its parent, with $v^j$ its leader and $v^j \in b^j \cap b^{j-1}$.
	\begin{itemize}
		\item If $b^j$ is a leaf, we choose an auxiliary node $w_j \in b^j$ with $w_j \notin b^{j-1}$. We can pick such a node because $b^j$ represents a maximal clique in $G$ and, otherwise, it would be contained in $b^{j-1}$.
		\item If $b^j$ is not a leaf, let $\{b^j_i\}_{i=1}^{k}$ be the set of  $b^j$'s children. For each $i$ we choose a leader for $b^j_i$ in $b^j \cap b^j_i \setminus b^{j-1}$ as, otherwise, we would have that $b^j\cap b^j_i \subseteq b^{j-1}$ for some $i$. This implies that $b^j_i \cap b^j \subseteq b^{j-1}\cap b^j_i$ and it would be possible to define a  new tree $T'$ where $b^j_i$ is a child of $b^{j-1}$ instead of $b^j$. Now, if $b^j_i$ was not a leaf, we would have a tree with more leaves than $T$. Otherwise, in case that $b^j_i$ were a leaf, the sum of the depths of each of its leaves decreases by one. This contradicts our choice for $T$.
		
		Hence, we can choose a leader $v^j_i$ in $b^j_i \cap b^j$ for each value of $i$ and then go for the next level.
	\end{itemize}
	From the construction, we have that for any pair of adjacent bags in $T$, either their leaders match or are adjacent, as both belong to the intersection of their respective bags.	Also, by the way, the leaders were chosen in the latter point, by choosing bags at different depths, it follows directly that their leaders must be different.
\end{proof}
\begin{figure}[!ht]
	\centering
	\includegraphics[width=0.6\linewidth]{./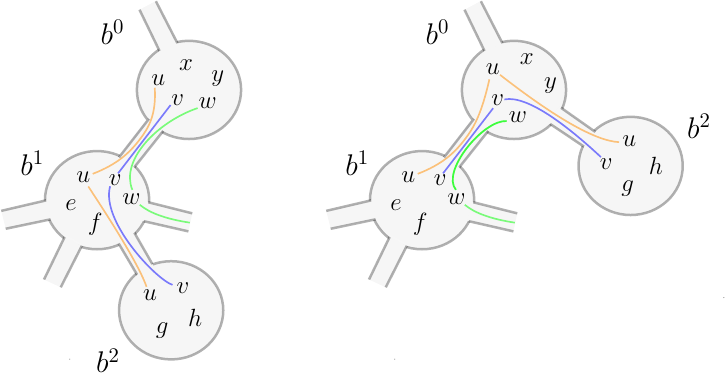}
	\caption{Transition from a tree decomposition to another one by reassigning a leaf at an upper level, in case that a bag intersection ($b^1  \cap b^2 =\{u,v\}$) is contained in an intersection at an upper level ($b^0\cap  b^1 = \{u,v,w\}$). We can do this while keeping a feasible decomposition.}
	\label{fig:colgante}
\end{figure}

Now we have a set of leaders who belong to the intersection of bags at different levels of a rooted clique-tree. Yet, these leaders may have several bags at the same level assigned to them, as multiple bags may share a unique element at the intersection with the previous level. 

To solve this issue we simply need to make use of the tree structure by setting, for each leader of a bag $\rho_b$, the certificate provided by one of its children (e.g. the one with the smallest identifier). If a node is the leader of several bags, 
it still receives all corresponding proofs after these are exchanged at the verification round, with auxiliary nodes being chosen in order to cover the case when a node is the leader of multiple leaves in the tree. Hence, we have that each bag leader will receive a unique message, no matter the number of bags it represents. Now we are ready to prove the theorem.

\begin{theorem}\label{thm:chordal}
There is a PLS for $\chordal$ using certificates of size $\mathcal{O}(\log n)$.
\end{theorem}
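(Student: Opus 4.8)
The plan is to certify the clique-tree characterization directly: since $G\in\chordal$ if and only if it admits a clique-tree, the prover must exhibit a rooted clique-tree $T$ and the verifier must confirm both that $T$ is a valid tree decomposition (coverage of vertices and edges, plus the subtree/connectivity property) and that every bag is a \emph{maximal} clique. The obstacle that bags can be arbitrarily large is circumvented by never transmitting a bag's contents: instead I would represent each bag by the leader $v_b$ provided by Lemma~\ref{lem:cordal}, use the vertex partition $\{F_b\}$ of Lemma~\ref{lem:cordalTrim} to give each node a unique \emph{home bag} $b(v)$ (its shallowest bag), and lean on the black-box PLSs of Proposition~\ref{proto:spanningtree} and Proposition~\ref{prop:numbernode} for the tree skeleton and for agreement on $n$.

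Concretely, I would have the prover give each node $v$ a constant number of $\mathcal{O}(\log n)$-bit fields: the identifier of the global root leader $v_\rho$; the depth $d_v=\mathsf{depth}(b(v))$; the identifiers $\ell_v=\id(v_{b(v)})$ and $\ell_v^+=\id(v_{t(b(v))})$ of its home-bag leader and of the parent-bag leader; a flag indicating whether $v$ is a leader or an auxiliary node $w_\ell$; and the distance-to-root data of Protocol~\ref{proto:st} for the skeleton whose vertices are the leaders and whose edges are the pairs $v_b v_{t(b)}$, which by the second bullet of Lemma~\ref{lem:cordal} are genuine edges of $G$. In verification each node first reaches consensus on $v_\rho$ and on $n$ with its neighbors; then it runs the Protocol~\ref{proto:st} checks on the leader skeleton (each depth-$d$ leader adjacent to its depth-$(d-1)$ parent leader, depths increasing by one along parent pointers, distinct depths forcing distinct leaders). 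Finally it performs the local clique and maximality checks: every pair of nodes sharing the same home-bag leader and depth must be mutually adjacent, and for a leaf bag the auxiliary node $w_b$ witnesses a vertex absent from the parent bag, enforcing maximality.

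For completeness, an honest prover supplies the clique-tree, leaders and auxiliary nodes guaranteed by Lemma~\ref{lem:cordal}, so all checks pass by construction. For soundness, if every node accepts then the leader pointers define an acyclic, connected skeleton (by the standard distance argument of Protocol~\ref{proto:st}), each bag is forced to be a clique by the pairwise-adjacency tests and maximal by the auxiliary-node witness, and the monotone depth labels enforce that each vertex-set $T_v$ is a connected subtree; together these are exactly the defining properties of a clique-tree, whence $G$ is chordal. Every transmitted field is an identifier, a depth, a distance, or a count, so the proof-size is $\mathcal{O}(\log n)$.

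The main obstacle I expect is twofold and is precisely what Lemmas~\ref{lem:cordalTrim} and~\ref{lem:cordal} are designed to resolve. First, certifying that an \emph{unbounded} bag is a clique with only logarithmic certificates requires that each node check clique membership purely against its neighbors and a bounded amount of leader information, rather than against an explicit bag description; the partition into home bags and the high connectivity within bags make these checks strictly local. Second, a single vertex may serve as leader for $\Omega(n)$ distinct bags at the same level, which threatens to overload it with messages; this is handled by routing to each leader only the certificate of its smallest-identifier child (with auxiliary nodes covering the leaf case), so that every leader processes a unique relevant proof regardless of how many bags it represents, keeping the per-node communication and certificate length at $\mathcal{O}(\log n)$.
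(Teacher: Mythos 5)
Your certificate format and your use of Lemmas~\ref{lem:cordalTrim} and~\ref{lem:cordal} match the paper's construction, but your verification algorithm omits the one check that actually carries the paper's soundness argument, and without it the scheme accepts non-chordal graphs. The paper's protocol imposes a cross-depth adjacency condition (its condition (4)): whenever $u$ and $v$ are adjacent with $\mathsf{depth}(v)\le\mathsf{depth}(u)$, node $v$ must be adjacent to all leaders $\rho_b$ (and their sets $F_b$) on the tree path between $F(u)$ and $F(v)$, which must coincide with the initial segment of the path from $F(u)$ to the root, and equal depths must force equal leaders. Your checks (skeleton is a tree, each home set $F_b$ is a clique, leaf maximality via an auxiliary node) never constrain an edge joining two nodes with \emph{different} home bags, so neither edge coverage nor the subtree property is enforced. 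Concretely, let $G=C_4$ with vertices $a,b,c,d$ and edges $ab,bc,cd,da$. A dishonest prover declares a path of three ``bags'' $B_1=\{a,b\}$ (root, depth $0$), $B_2=\{b,c\}$ (depth $1$), $B_3=\{c,d\}$ (depth $2$), home sets $F_{B_1}=\{a,b\}$, $F_{B_2}=\{c\}$, $F_{B_3}=\{d\}$, leaders $v_{B_1}=b$, $v_{B_2}=c$, $v_{B_3}=d$, and auxiliary node $a$ for the leaf $B_3$. The skeleton $b$--$c$--$d$ consists of genuine edges of $G$ with strictly increasing depths and distinct leaders, every home set is a clique, and the auxiliary node $a$ is adjacent to the leaf leader $d$ (the verifier cannot check the claim that $a$ ``belongs to $B_3$ but not to $B_2$,'' since bag contents are never encoded). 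Every check you list passes, yet $C_4$ is not chordal: the offending edge $\{a,d\}$ joins nodes of depths $0$ and $2$ and your protocol never examines it against the tree. Under the paper's condition (4), $a$ would have to be adjacent to the leader $c$ of the intermediate bag on the path from $F(d)$ to $F(a)$, which fails, so some node rejects.

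This also shows where your soundness narrative overclaims: the pairwise-adjacency test only forces each home set $F_b$ to be a clique, not the bag $M_b$ (whose full contents appear in no certificate), and nothing certifies that every edge lies in some bag or that the set of bags containing a given vertex is connected, so an accepted labelling need not be a clique-tree at all. The paper does not attempt to reconstruct a clique-tree in its soundness proof; instead it takes an induced cycle $v_1,\dots,v_k$ with $k\ge 4$, orders it so that $v_1$ has the largest depth, and uses condition (4) to force the leaders of both cycle-neighbours $v_2$ and $v_k$ of $v_1$ onto the tree path from $F(v_1)$ to the root, which makes $v_2$ adjacent to $v_k$, contradicting inducedness. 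Adding condition (4) to your verifier and replacing the clique-tree reconstruction claim by this induced-cycle argument would close the gap.
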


\begin{proof} Assuming that $G$ is chordal, we first select a collection of nodes that represents a bag $b$ in $T_G$. We do this by selecting a single element $\rho_b$ from each set $F_b$ according to Lemma~\ref{lem:cordalTrim}, as well as a spanning tree triple for simulating the overlaying clique-tree, which we denote by $\langle \id(\rho_T), d_T(v), t_T(v)\rangle$ indicating the unique leader for the root of the clique tree, as well as the distance to it and $\rho_b$'s parent in this structure. 
	
	The prover provides each node $v\in F_b$ with
	\begin{itemize}
		\item The size of the clique tree $|T_G|$, along with the $\id $ of the leader for its root $\rho_T$.
		\item A label $F(v)$ corresponding to the identifier $\id(\rho_b)$ of the leader in the set $F_b$ to which a node $v$ belongs to, as well as the size of $F_b$.
		\item The distance from the bag $b$ to the root $\rho_T$  in the clique tree with $v\in F_b$, given by $\textsf{depth}(v)$.
	\end{itemize}
	
	\bigskip
	
	Also, in order to verify the tree structure, we choose a collection of leaders $\{e_{bb'}\}$ for each edge $bb' \in E(T_G)$ according to Lemma~\ref{lem:cordal}, as well as the corresponding auxiliary nodes to pass these messages.
	Then, the nodes exchange their messages and they check the following:
	
	\bigskip
	
	\begin{enumerate}
		\item The collections $\{\rho_b\}_{b\in T_G}$ and $\{e_{bb'}\}_{bb' \in E(T_G)}$ verify in conjunction the correctness of the tree structure.
		\item There is a unique root $\rho_T$.
		\item If $v \in F_b$, then $v$ checks that the nodes with $F(u)= \id(\rho_b)$ form a clique.
		
		\item If $v$ and $u$ are adjacent with $\textsf{depth}(v) \leq \textsf{depth}(u)$, then $v$ is  adjacent to all the leaders $\rho_b$ (and their sets $F_b$)  in the unique path between $F(v)$ and $F(u)$ which also coincides with the unique path between $F(u)$ and $\rho_T$. In particular, if $\textsf{depth}(v) = \textsf{depth}(u)$, then they must have the same leader.
	\end{enumerate}
	If all the previous conditions hold, then all nodes accept. Now, we check the correctness of this protocol.
	
	\smallskip
	
{\bf{Completeness.}} Suppose first that the graph $G$ is chordal. An honest prover will provide each node $v$ with its correct set $F_b$ according to the underlying clique tree $T_G$ which all leaders check correctly. By the definition of the clique tree, it follows that no node has a neighbour at the same depth from a different bag. As the set of bags to which $v$ belongs to corresponds to a connected subgraph, it follows that if $v$ is in a bag $b$ and it is connected to a node $u$ (which belongs to a bag $b'$) at a larger depth, then it is adjacent to all nodes in the bags (and therefore the sets $F_{b'}$) in the path between $b$ and $b'$. With this, each node $v$ recognizes that its sets $F_b$ are a clique and that the depth for the set of each of its neighbours is consistent with its set. Therefore, all nodes accept.

\smallskip
		
{\bf{Soundness.}} Suppose now that the graph $G$ is not chordal. We have that, by the constructions in Lemmas~\ref{lem:cordalTrim} and \ref{lem:cordal}, the leaders chosen for both the bags and the edges between them can correctly verify the structure of the clique tree. Now, suppose that $G$ has an induced cycle $\{v_!, \dots v_k\}$ with its nodes arranged such that $v_1$ is the node of largest depth. It must be that at least one of them has a different depth from the rest as otherwise they would reject either because their leaders are different, or because the corresponding set $F_b$ should be a clique and there exist at least two non-adjacent nodes.  Suppose that $\textsf{depth}(v_1)=i$, $\textsf{depth}(v_k)=j$ and $\textsf{depth}(v_2)=k$ with $i > j\ge k$. Then, it must be that $v_k$'s leader lies in the unique path between $\rho_T$ and $v_1$'s leader as otherwise it would notice an inconsistency with $v_1$'s proof and it would reject. This must also be true for $v_2$ for being at a smaller depth. Then, it must lie in the path between $v_1$'s leader and $\rho_T$ and, therefore, adjacent to $v_k$'s leader and subsequently to $v_k$ itself, which contradicts the fact that they are not adjacent as they belong to a large induced cycle.
\end{proof}

As a corollary, we obtain a PLS  for the problem \interval\ by considering the fact that, as described above, interval graphs are a particular subclass of chordal graphs where the clique-tree corresponds to a path~\cite{bodlaender1998partial}. From here it suffices to repeat the same protocol while each leader (with the exception of the root leader) additionally verifies that any bag assigned to it has unique children in the clique tree $T_G$.

\begin{corollary}\label{cor:interval}
There is a PLS for $\interval$ using certificates on $\mathcal{O}(\log n)$ bits.
\end{corollary}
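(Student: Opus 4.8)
The plan is to obtain the interval PLS as a thin wrapper around the chordal PLS of Theorem~\ref{thm:chordal}. The driving fact is the cited characterization that $G$ is an interval graph if and only if it admits a clique-path, i.e.\ a clique-tree whose underlying tree is a path. Since Theorem~\ref{thm:chordal} already certifies a valid clique-tree $T_G$ together with its rooted structure (root $\rho_T$, depths, bag leaders $\{\rho_b\}$, edge leaders $\{e_{bb'}\}$ and auxiliary nodes), it suffices to superimpose on that scheme a single extra local test verifying that $T_G$ is in fact a path. The certificate size is inherited from Theorem~\ref{thm:chordal}, namely $\cO(\log n)$, since the added test needs only a constant number of extra bits.

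First I would have the prover present, exactly as in the chordal scheme, a clique-tree $T_G$, but now rooted at one of the two endpoints of the path; every node then runs the full chordal verification of Theorem~\ref{thm:chordal}. On top of this, each bag leader $\rho_b$ must check that its bag $b$ has at most one child in $T_G$. The point is that a rooted tree in which every bag has at most one child is precisely a path rooted at an endpoint, so this single local condition characterizes clique-paths among clique-trees.

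The key step is to make the child-count verifiable locally. Recall from Lemma~\ref{lem:cordal} that every tree edge $bb'$ of $T_G$ is witnessed by an edge leader $e_{bb'}$ lying in $b\cap b'$, and that the leader of $b$ is adjacent to (or equal to) each such $e_{bb'}$. Hence the leader of $b$ sees all tree edges incident to $b$ after the exchange round, and using the already-certified $\textsf{depth}$ and parent information it can separate the single edge going to its parent (smaller depth) from the edges going to children (larger depth). Each bag leader then simply checks that at most one incident tree edge leads to a child. As in the chordal proof, a leader responsible for several bags at the same depth receives one message per bag and runs this test once per bag, so the per-node communication and the certificate size remain $\cO(\log n)$.

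Completeness and soundness follow by combining this test with Theorem~\ref{thm:chordal}. For completeness, if $G$ is an interval graph it has a clique-path; the honest prover roots it at an endpoint, so the chordal checks pass and every bag has at most one child, and all nodes accept. For soundness, if every node accepts, then Theorem~\ref{thm:chordal} guarantees that $T_G$ is a genuine clique-tree of $G$, and the added condition forces every bag to have at most one child, so $T_G$ is a path; therefore $G$ admits a clique-path and is an interval graph. The main obstacle is the subtlety that the root must not be fully exempted from the bound: if the prover were allowed to root the clique-path at an interior bag, that bag would have two children, so one must either insist (as above) that the honest prover roots at an endpoint and bound \emph{every} bag by one child, or equivalently allow the root alone up to two children while bounding all other bags by one. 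Either convention turns the path-ness of $T_G$ into a purely local check and yields the desired PLS.
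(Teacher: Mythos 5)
Your proposal is correct and takes essentially the same route as the paper: the paper also obtains Corollary~\ref{cor:interval} by rerunning the chordal protocol of Theorem~\ref{thm:chordal} and having each bag leader additionally verify, locally via the certified tree structure, that every bag assigned to it has a unique child in $T_G$ (root excepted), which is exactly your path-ness test. The only difference is the cosmetic choice of convention at the root (rooting at an endpoint versus exempting the root), which you yourself note is equivalent.
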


\section{Circular Arc Graphs}\label{sec:circular}

Circular arc graphs are a natural extension of interval graphs. Indeed, they are the graphs that admit a representation by arcs on a circle, and appear, for instance,  in the study of resource allocation problems for periodic tasks~\cite{mandal2006maximum}. We study this class of graphs as we wish to check whether previous results can be extended to this new setting without a large increase in the proof-size.
We start by formally defining this new class and, again, studying two variations: where no pair of arcs are properly contained and then the general case. For the sake of simplifying the notation, we identify the set of \id's as $[n]=\{0, \dots n-1\}$. We say that a graph $G=(V,E)$ admits a circular arc representation if there exists a family of arcs in the unit circle $\{A_v\}_{v\in V}$ such that the adjacency of $G$ can be determined by the intersection of arcs. That is \[ \forall v \in V, \: \exists A_v: \quad \forall u,w \in V, uw \in E \Longleftrightarrow A_u \cap A_w \neq \varnothing\]

We say that some graph $G$ is a \emph{proper} circular arc graph if it admits a representation where no arc is contained in another.

\subsection{Proper Circular Arc Graphs}\label{proper_circular}

As in previous proofs, the main question is how to represent the adjacency of a node in a succinct manner, considering the geometric properties of this class. We proceed as follows.  We assume that the \id's are ordered counter-clockwise. We say that, given $i<j$, the adjacency of a node is given either by $(i,j)$, which we define as $\{i, i+1, \dots j\}$ or $(j,i)$, which we set to be $\{ j, j+1, .. n-1, 0, ..i\}$. If a graph $G$ satisfies this property we say that its augmented adjacency matrix (the adjacency matrix of $G$ with the addition of 1's in the diagonal), denoted by $M^*(G)$, has the \emph{circular} 1's property~\cite{tucker1970}. 

Now, again, we can have graphs that follow this property while allowing a representation by proper circular arcs. So there must be another property that we need in order to pin down this graph class.
Fortunately, given a characterization by Tucker~\cite{tucker1970}, we can show how to recognize this class with a single round of interaction. For this, we start by giving some definitions for symmetric matrices.

First, consider $\pi$ to be a permutation of $[n]$ and some matrix $M$. The matrix $M_\pi$ is obtained when both the rows and columns of $M$ are reordered according to $\pi$. Second, consider a symmetric $\{0,1\}$-matrix $M$ with $1$'s in the diagonal and the circular 1's property. Then, we define $\textsf{last}[M,j]$ to be the largest value $i$ such that $M_{i,j} = 1$ and $M_{i+1,j}=0$. If such an $i$ does not exist (meaning the column $M_{\cdot, j}$ has only 1 entries) we set $\textsf{last}[M,j]=\bot$.

Last, for the sake of notation, consider $\sigma_{\textrm{inv}} :[n]\to [n]$ to be the permutation given by $i \to n-i$ if $i\neq n$ and $n$ otherwise and $\sigma_{\textrm{sh}} :[n]\to [n]$ to be the permutation given by $i \to i+1$ if $i\neq n$ and $1$ otherwise.

\begin{definition} Given a symmetric $\{0,1\}$ matrix $M$ with 1's in the diagonal, we say that it has circularly compatible 1's if $M$ has the circular 1's property and, for any reordering $\pi$ of the rows (and respective columns) of the matrix constructed by a finite composition of $\sigma_{\mathrm{inv}}$ and $\sigma_{\mathrm{sh}}$,  it follows that $\mathsf{last}[M_{\pi}, 0] \leq \mathsf{last}[M_{\pi}, 1]$, unless one of these values is $\bot$.
\end{definition}

With these definitions, we can finally describe the characterization given by Tucker for this class of graphs.

\begin{proposition}[\cite{tucker1970}]
	\label{prop:proper_circ}
	A graph $G$ is a proper circular arc graph if and only if its nodes admit an ordering $\{\pi_v\}_{v\in V}$ such that its augmented adjacency matrix $M^*(G)$ has the circularly compatible 1's property.
\end{proposition}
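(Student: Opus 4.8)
The plan is to prove both implications of Tucker's characterization directly from the geometry of proper arc families, translating the combinatorial conditions on $M^*(G)$ into statements about the cyclic order of arc endpoints. For the forward direction I would start from a proper circular arc representation $\{A_v\}_{v\in V}$ and define the ordering $\pi$ by sweeping counter-clockwise and listing the vertices in the order in which their arcs \emph{begin}. The key geometric observation is that, because no arc is contained in another, the counter-clockwise order of the \emph{starting} endpoints coincides with the counter-clockwise order of the \emph{terminal} endpoints; indeed, two arcs whose starts and ends appeared in opposite cyclic orders would necessarily be nested. Under this ordering each row of $M^*(G)$ is exactly the set of arcs meeting $A_v$ (together with $v$ itself), which is a contiguous block of the circle, yielding the circular 1's property. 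The coincidence of the two cyclic orders then forces, for any two consecutive columns, that the terminal endpoint of one does not overtake that of the next, which is precisely $\textsf{last}[M_\pi,0]\le \textsf{last}[M_\pi,1]$. Finally, a rotation of the circle is again a proper representation whose induced ordering is $\sigma_{\mathrm{sh}}$ applied to $\pi$, and reversing the orientation of the circle corresponds to $\sigma_{\mathrm{inv}}$; hence the inequality survives every composition of $\sigma_{\mathrm{sh}}$ and $\sigma_{\mathrm{inv}}$, giving the circularly compatible 1's property.

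For the converse I would build an explicit representation from the matrix. I would place $n$ distinct start points $s_0,\dots,s_{n-1}$ in cyclic order around the circle following $\pi$, and for each vertex $j$ let its arc $A_j$ run counter-clockwise from $s_j$ up to a terminal point placed immediately after the start point of vertex $\textsf{last}[M_\pi,j]$ (the all-ones column $\textsf{last}=\bot$ being realized by a near-total arc). Using the circular 1's property together with the symmetry of $M^*(G)$, I would check that $A_u$ and $A_w$ intersect exactly when $M^*(G)_{uw}=1$. The non-containment — the ``proper'' part — is where the full strength of the definition is used: the inequalities $\textsf{last}[M_\pi,0]\le\textsf{last}[M_\pi,1]$, taken over all rotations $\sigma_{\mathrm{sh}}$, assert exactly that the terminal endpoints advance monotonically with the start points, so no arc can both begin and end strictly inside another.

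I expect the main obstacle to be the converse direction, and specifically this monotonicity-to-non-containment step: one must show that the rotation-closed family of local inequalities on $\textsf{last}$ globally rules out nesting, while simultaneously preserving the intersection pattern across the circular wrap-around. Care is also needed for the degenerate cases — universal vertices (where $\textsf{last}=\bot$) and ties among endpoints — and for verifying that the reflection $\sigma_{\mathrm{inv}}$, which reverses orientation, introduces no spurious containments. These cyclic bookkeeping issues are precisely what make the circular case genuinely harder than the analogous, purely linear, argument one would use for proper interval graphs.
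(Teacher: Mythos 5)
You should first be aware that the paper does not prove this proposition at all: it is imported verbatim from \cite{tucker1970} and used as a black box in the PLS for proper circular arc graphs, so there is no proof in the paper to compare yours against. Judged on its own terms, your attempt is a reproof of Tucker's theorem, and its forward direction is sound in outline: ordering by endpoints, the coincidence of the cyclic orders of starting and terminal points in a proper representation, and the observation that rotations and reflections of the circle are again proper representations inducing $\sigma_{\mathrm{sh}}$ and $\sigma_{\mathrm{inv}}$ is exactly the classical argument (it is also the content of the paper's Observation~1).

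The converse, however, contains a genuine gap, and it is the one you yourself flag as ``the main obstacle'' and then leave open: for the statement you are proving, that deferred step \emph{is} the theorem, since circular 1's alone is well known not to suffice for properness. Two concrete problems your construction must resolve and does not. First, wrap-around adjacencies: if $M^*_{uw}=1$ but $w$ lies in the portion of column $u$'s circular run that precedes $u$, and symmetrically $u$ lies in the portion of column $w$'s run that precedes $w$ (both runs wrapping around the circle), then neither $A_u$ nor $A_w$, as you construct them, reaches the other's start point, and the two arcs are disjoint even though the vertices are adjacent; you must prove that circular compatibility excludes this configuration, which is a nontrivial cyclic case analysis. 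Second, ties in $\mathsf{last}$: whenever $\pi_u<\pi_w$ and the two columns have the same value of $\mathsf{last}$ --- which happens for any pair of adjacent twins, e.g.\ in a triangle with a pendant vertex, so this is not a degenerate corner case --- both terminal points are placed ``immediately after'' the same start point, and unless the terminal points are inserted in the same relative order as the start points you create the containment $A_w\subset A_u$; one must show that a consistent tie-breaking exists and introduces no spurious intersections. Both verifications are precisely where the hypothesis of \emph{circularly compatible} 1's (rather than merely circular 1's) is consumed, and since your proposal states the construction but never carries out this argument, the backward implication remains a plan rather than a proof.
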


This characterization suits us greatly as its condition is highly local. If we were able to find such an ordering, then every node would only need to verify it by checking the previous and next nodes in the ordering. For this, we note two important remarks.

\begin{observation}[\cite{tucker1970}]\label{obs:proper_circ}If we sort the nodes according to their right endpoint in counter-clockwise order, we have that the nodes adjacency matrix has the circularly compatible 1's property according to this ordering.
\end{observation}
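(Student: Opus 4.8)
The plan is to argue directly from a fixed proper circular arc representation $\{A_v\}_{v\in V}$ of $G$, writing each arc as $A_v=(\ell_v,r_v)$ traversed counter-clockwise. First I would record the structural consequence of properness: since no arc contains another, the cyclic order of the left endpoints coincides with the cyclic order of the right endpoints. Hence ordering the nodes $0,1,\dots,n-1$ by their right endpoints in counter-clockwise order also orders their left endpoints in the same cyclic order. This monotonicity of both endpoint sequences along the ordering is the single geometric fact that drives the whole proof.

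Next I would establish the two halves of the circularly compatible 1's property separately. For the circular 1's property, fix a node $j$ and consider the arcs meeting $A_j$. Using that an arc $A_u$ meets $A_j$ exactly when one of the two left endpoints lies in the other arc (this is where the absence of containment is used), together with the monotonicity of both endpoint sequences in the ordering, the set $\{u : A_u\cap A_j\neq\varnothing\}\cup\{j\}$ is seen to be a contiguous circular interval of indices. This gives the circular 1's property for every column of $M^*(G)$, and by symmetry for every row.

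For the second, harder condition I would first reduce the statement, universally quantified over all reorderings $\pi$, to a single inequality. The maps $\sigma_{\mathrm{sh}}$ and $\sigma_{\mathrm{inv}}$ generate the dihedral action (all rotations and reflections) on the cyclic ordering, so each $M_\pi$ is again a right-endpoint ordering of a proper circular arc representation of $G$: a rotation corresponds to choosing a different cut point on the circle (the same representation, relabelled), while $\sigma_{\mathrm{inv}}$ corresponds to reflecting the circle, which sends $\{A_v\}$ to a mirror representation that is still containment-free, represents the same $G$, and whose counter-clockwise right-endpoint order is exactly the reversal of the original. Consequently it suffices to prove the base inequality $\mathsf{last}[M^*(G),0]\le \mathsf{last}[M^*(G),1]$ for an arbitrary proper circular arc graph under its right-endpoint ordering, since every dihedral case is obtained by applying this base result to the corresponding rotated or reflected representation.

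Finally I would prove the base inequality geometrically. The quantity $\mathsf{last}[M^*(G),j]$ is the counter-clockwise endpoint of the neighbour interval of node $j$, that is, the index of the neighbour of $j$ whose arc reaches furthest counter-clockwise. Because nodes $0$ and $1$ are consecutive in right-endpoint order we have $r_0<r_1$ and, by the properness fact, $\ell_0<\ell_1$; thus near their common counter-clockwise frontier the arc $A_1$ covers everything $A_0$ covers and extends slightly beyond it, so any arc meeting $A_0$ at the far (counter-clockwise) end of its neighbour interval also meets $A_1$. This forces the counter-clockwise endpoint of node $1$'s interval to have index at least that of node $0$, which is precisely $\mathsf{last}[M^*(G),0]\le\mathsf{last}[M^*(G),1]$; the cases where a column is all ones, i.e.\ a node adjacent to everyone, give $\mathsf{last}=\bot$ and are exactly those excluded in the definition. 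I expect this last step to be the main obstacle: pinning down that $\mathsf{last}$ really coincides with the counter-clockwise endpoint and handling the wrap-around and non-adjacent ($A_0\cap A_1=\varnothing$) sub-cases cleanly is routine but delicate, and it is also where the reflection reduction must be double-checked to ensure the reversed order is genuinely a right-endpoint order of a proper representation.
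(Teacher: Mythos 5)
The paper does not actually prove this observation: it is imported verbatim from Tucker~\cite{tucker1970} and used as a black box, so your proposal is not a variant of an argument in the paper but a self-contained replacement for the citation --- and, in outline, it is correct. All three pillars are sound. (i)~Containment-freeness does force the left endpoints into the same cyclic order as the right endpoints, and the circular 1's property then follows from the ``two arcs meet iff a left endpoint of one lies in the other'' characterization. (ii)~The dihedral reduction is valid: $\sigma_{\mathrm{sh}}$ and $\sigma_{\mathrm{inv}}$ generate exactly the rotations and reflections of the cyclic order, a rotation is the right-endpoint order of the same representation with the cut point moved, and a reflection is the right-endpoint order of the mirrored representation (mirroring exchanges left with right endpoints, preserves properness and the graph, and by~(i) reverses the index order), so the universally quantified condition collapses to the single inequality $\mathsf{last}[M^*(G),0]\le\mathsf{last}[M^*(G),1]$. (iii)~That inequality holds: $\mathsf{last}[M^*(G),1]\ge 1$ always, since a transition at index $0$ in column $1$ would contradict the diagonal entry $M_{1,1}=1$; hence the non-adjacent case (where $\mathsf{last}[M^*(G),0]=0$) and the case $\mathsf{last}[M^*(G),0]=1$ are immediate, while for $e_0=\mathsf{last}[M^*(G),0]\ge 2$ properness forces the configuration $\ell_0,\ell_1,\dots,\ell_{e_0},\dots,r_0,r_1$ (the alternative, $r_1$ occurring before $r_0$ when leaving $\ell_1$, would make $A_1\subseteq A_0$), so $\ell_{e_0}\in(\ell_1,r_0)\subseteq A_1$ and $e_0$ lies in the consecutive run of counter-clockwise neighbours of node $1$, giving $e_0\le e_1$ as integers; a run wrapping past index $n-1$ would produce an all-ones column, which is precisely the excluded $\bot$ case. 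What the citation buys the paper is brevity; what your argument buys is verifiability, at the cost of two details that must still be written out in full: a proof of the cyclic-order coincidence in~(i) (the one place where properness is essential, and a statement about triples of arcs, not just pairs), and an explicit treatment of $\mathsf{last}$ as an integer index in $\{0,\dots,n-1\}$ rather than a cyclic position, which is what makes the wrap-around and $\bot$ analysis in~(iii) necessary rather than pedantic.
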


\begin{observation}
	We can rotate the arcs in a graph in such a way that, if each node $v$ is sorted according to the previous order $\pi$, it follows that $v$ is adjacent to $\pi_v-1$ and $\pi_v+1$ modulo $n$, with the exception of the last node (in position $n$) which may not be connected to the first.
\end{observation}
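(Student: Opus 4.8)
The plan is to argue directly on an actual proper circular-arc representation $\{A_v\}_{v\in V}$ of $G$ (which exists by hypothesis) rather than on the matrix, taking $\pi$ to be the counter-clockwise ordering of the nodes by right endpoint supplied by Observation~\ref{obs:proper_circ}. Writing $A_v=[\ell_v,r_v]$ for the arc traversed counter-clockwise, the goal is to show that, after a suitable rotation, two arcs that are consecutive in this ordering always overlap, with the sole possible exception of the wrap-around between positions $n$ and $1$.

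First I would isolate the geometric core of the argument: in a \emph{proper} representation, two consecutive arcs $A_i,A_{i+1}$ (consecutive in the right-endpoint order) fail to intersect if and only if the stretch of circle strictly between $r_i$ and $\ell_{i+1}$ is covered by no arc at all. The nontrivial direction is where properness is used. Suppose some arc $A_k$ covered a point $p$ of that stretch. Then $A_k$ begins before $p$ and ends after $p$; since consecutiveness forbids any right endpoint strictly between $r_i$ and $r_{i+1}$, we get $r_k>r_{i+1}$, while $\ell_k$ lies before $\ell_{i+1}$. Hence $A_k$ begins before and ends after $A_{i+1}$, so $A_k$ contains $A_{i+1}$, contradicting the no-containment hypothesis. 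Thus non-adjacency of consecutive arcs coincides exactly with the existence of a genuinely uncovered gap between them.

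Next I would count the gaps using connectivity (recall that all graphs here are connected). If two distinct maximal uncovered arcs of the circle existed, picking one uncovered point in each would split the circle into two parts; since every arc avoids both uncovered points, each arc lies wholly in one part, so no edge crosses between the parts, contradicting connectivity. Hence there is at most one uncovered gap, and therefore at most one consecutive pair (in the cyclic right-endpoint order) that fails to overlap.

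Finally I would perform the rotation. Rotating the whole representation preserves the cyclic right-endpoint order but changes which arc is ranked first; choosing the rotation that places the unique gap (if any) immediately before the arc labelled as position $1$ turns the single non-adjacent consecutive pair into the wrap-around pair $(n,1)$. Every other consecutive pair then overlaps by the equivalence above, giving adjacency of each node to both $\pi_v-1$ and $\pi_v+1$ modulo $n$, with exactly the stated exception; if the circle is fully covered there is no gap and even the wrap-around pair is adjacent, which the permissive phrasing ``may not'' accommodates. I expect the main obstacle to be the first step: rigorously ruling out, via properness, that a third arc bridges the gap between two consecutive arcs, since this is precisely where the circular geometry must be handled more delicately than in the interval case.
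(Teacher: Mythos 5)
Your proof is correct, and it actually supplies something the paper does not: the paper states this fact as a bare Observation with no proof at all (treating it as evident from Tucker's right-endpoint ordering in Observation~\ref{obs:proper_circ}), so there is no argument in the text to compare yours against. Your route is the natural one and it holds up: (i) for a consecutive pair in the cyclic right-endpoint order, non-adjacency forces the stretch between $r_i$ and $\ell_{i+1}$ to be genuinely uncovered, since any arc $A_k$ covering a point of that stretch would have $r_k$ beyond $r_{i+1}$ (consecutiveness leaves no room for $r_k$ in between) and $\ell_k$ before the stretch, hence $A_k \supseteq A_{i+1}$, contradicting properness; (ii) two distinct maximal uncovered gaps would disconnect the graph; (iii) a rotation moves the at-most-one gap to the wrap-around position, and the ``may not'' phrasing absorbs the fully-covered case. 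Two small tightenings are worth making. In step (ii), after deleting one point from each of the two gaps you must also observe that \emph{each} of the two resulting open arcs contains at least one covered point (hence at least one arc/vertex); this follows because otherwise an entirely uncovered side would merge the two gaps into a single maximal one, contradicting their distinctness --- without this remark the partition does not yet contradict connectivity, since all arcs could a priori sit on one side. And in step (i), only the forward implication is ever used; the converse of your ``if and only if'' is delicate as stated, because when the consecutive arcs do intersect the ``stretch strictly between $r_i$ and $\ell_{i+1}$'' wraps the long way around the circle, so it is cleaner to drop the converse altogether. With those cosmetic fixes, your argument is a complete and rigorous proof of the observation the paper leaves unjustified.
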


Now we can start to describe the protocol.

\begin{theorem}\label{prop:properCirc}
There is a PLS for $\propCircular$ using certificates of size $\mathcal{O}(\log n)$.      
\end{theorem}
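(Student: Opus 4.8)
The plan is to exploit the two observations preceding the statement together with Proposition~\ref{prop:proper_circ}. By Observation~2, we may assume that the prover supplies an ordering $\{\pi_v\}_{v\in V}$ of the nodes (a value $\pi_v\in[n]$ for each $v$) such that consecutive nodes in the ordering are adjacent (modulo $n$, with the possible exception of the wrap-around between position $n$ and position $1$). First I would have the prover send to each node $v$ its number $\pi_v$, together with the two identifiers $\id_{first}$ and $\id_{last}$ marking positions $1$ and $n$ of the ordering, exactly as in the proof of Theorem~\ref{thm:properInter}. The verifier first checks that the $\pi_v$ indeed form a genuine ordering of $[n]$: every node has a neighbour whose number is one more than its own (except the last), the node with $\id_{first}$ holds $1$ and the node with $\id_{last}$ holds $n$, and these identifiers are agreed upon by all nodes. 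This guarantees a bijection $v\mapsto \pi_v$ onto $[n]$.

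The heart of the scheme is then to verify, locally, that the augmented adjacency matrix $M^*(G)$ written in this ordering has the \emph{circularly compatible 1's} property. The key point — which I expect to be the main obstacle — is that the definition of this property quantifies over \emph{all} reorderings obtained by composing $\sigma_{\mathrm{inv}}$ and $\sigma_{\mathrm{sh}}$, and the condition $\mathsf{last}[M_\pi,0]\le \mathsf{last}[M_\pi,1]$ is a statement about whole columns of a global matrix, which a single node cannot see. The way around this is to observe that, because of Observation~2, the neighbourhood of each node $v$ is a \emph{circular interval} of consecutive positions around $\pi_v$: there exist offsets $a_v,b_v\ge 1$ such that the neighbours of $v$ are exactly the nodes occupying positions $\pi_v-a_v,\dots,\pi_v-1,\pi_v+1,\dots,\pi_v+b_v$ taken modulo $n$. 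Thus each node can locally certify that its neighbourhood forms such a contiguous circular block by checking that all $d_v$ of its neighbours carry distinct numbers filling an interval $[\pi_v-a_v,\pi_v+b_v]\setminus\{\pi_v\}$ around $\pi_v$ modulo $n$. Having the circular 1's property is equivalent to every row (equivalently column, by symmetry) being such a contiguous circular block, so this single local check certifies the circular 1's property of $M^*(G)$.

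It remains to upgrade ``circular 1's'' to ``circularly compatible 1's''. The crucial simplification is that Observation~1 tells us the compatibility inequality need not be re-examined for every reordering: for a matrix already having the circular 1's property in an ordering by right endpoints, the inequality $\mathsf{last}[M_\pi,0]\le\mathsf{last}[M_\pi,1]$ holds automatically. Concretely, for each pair of cyclically consecutive positions $j,j+1$ one compares the two circular blocks and checks a local monotonicity condition between the ``last'' entries of adjacent columns; since each node knows $a_v$ and $b_v$ for itself and for its neighbours (these can be exchanged in the single communication round), the relevant quantities $\mathsf{last}[M,\pi_v]$ and $\mathsf{last}[M,\pi_v+1]$ are computable from information available within distance one. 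I would therefore have each node additionally verify, for the two columns indexed by its own position and the next, that the required inequality between their $\mathsf{last}$ values holds. Combined with the structure of $\sigma_{\mathrm{inv}}$ and $\sigma_{\mathrm{sh}}$ — which only rotate and reflect the cyclic order, hence map contiguous circular blocks to contiguous circular blocks — a per-position local check of this inequality suffices to guarantee the global property for all reorderings.

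\textbf{Completeness} follows because an honest prover orders the nodes by right endpoint as in Observation~1 and rotates as in Observation~2, so every local check passes. For \textbf{soundness}, if all nodes accept then the ordering checks force $\{\pi_v\}$ to be a bijection onto $[n]$, the neighbourhood checks force every row of $M^*(G)$ to be a contiguous circular block (the circular 1's property), and the $\mathsf{last}$-comparison checks force the compatibility inequality to hold locally at every pair of consecutive columns, which by the closure of circular-block structure under $\sigma_{\mathrm{inv}},\sigma_{\mathrm{sh}}$ yields the circularly compatible 1's property; by Proposition~\ref{prop:proper_circ}, $G$ is then a proper circular arc graph. \textbf{Proof-size:} each node receives $\pi_v\in[n]$, the offsets, and two identifiers, all of size $\cO(\log n)$, and the single verification round exchanges $\cO(\log n)$-bit messages, giving the claimed bound.
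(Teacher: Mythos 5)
Your scheme tracks the paper's protocol up to the certification of the circular 1's property (the ordering via $\id_{first}/\id_{last}$, and the check that every neighbourhood is a contiguous circular block are both fine, and essentially what the paper does), but it has a genuine gap exactly at the step you flag as the crux: upgrading to \emph{circularly compatible} 1's. Two things go wrong. First, you misread Observation~\ref{obs:proper_circ}: it says that \emph{if} $G$ is a proper circular arc graph and the nodes are sorted by the right endpoints of an actual arc representation, \emph{then} the compatibility inequalities hold. That is a completeness statement; in the soundness direction there is no representation and no ``ordering by right endpoints'', so nothing ``holds automatically''. Second, the checks you actually prescribe cover only the $n$ rotations in the dihedral group generated by $\sigma_{\mathrm{sh}}$ and $\sigma_{\mathrm{inv}}$, and the $n$ reflections do not follow from them. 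Writing the block of the node at position $j$ as $[j-a_j,\,j+b_j]$ (circularly), the condition $\mathsf{last}[M_{\pi},0]\le \mathsf{last}[M_{\pi},1]$ for the shift that puts $j$ first amounts to $b_j\le b_{j+1}+1$, whereas for the reflection that puts $j$ first (apply $\sigma_{\mathrm{inv}}$, then shift) it amounts to $a_j\le a_{j-1}+1$: inversion swaps the two ends of every block, so the reflections constrain the \emph{starts} of the blocks, which successor-side comparisons of $\mathsf{last}$ values never see. Your closure argument (``$\sigma_{\mathrm{inv}},\sigma_{\mathrm{sh}}$ map blocks to blocks'') is true but irrelevant: preserving contiguity says nothing about the inequalities. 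Concretely, on nodes $0,\dots,5$ the symmetric system of circular blocks $N[0]=\{3,4,5,0,1\}$, $N[1]=\{0,1,2,3\}$, $N[2]=\{1,2,3,4\}$, $N[3]=\{0,1,2,3,4\}$, $N[4]=\{2,3,4,5,0\}$, $N[5]=\{4,5,0\}$ has consecutive nodes adjacent, satisfies $b_j\le b_{j+1}+1$ for every $j$ (here $b=(1,2,2,1,2,1)$), yet violates $a_0\le a_5+1$ (here $a_0=3$, $a_5=1$): all of your local tests pass while the circularly compatible 1's property fails for this ordering. So the implication on which your soundness argument rests is false as stated.

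The paper closes exactly this hole by giving each node \emph{two} checks rather than one: node $v$ recomputes its range and that of its successor $u$ under the shift making $v$ first (this covers the $n$ pure shifts), and separately recomputes its range and that of its predecessor $w$ under $\sigma_{\mathrm{inv}}$ followed by a shift (this covers the $n$ reflections), verifying $\bar{v}_{\max}\le\bar{u}_{\max}$ and $\bar{v}_{\max}\le\bar{w}_{\max}$ respectively; the explicit modular formulas for the transformed ranges are what allow a node to evaluate the reflected inequality from offsets exchanged in a single round. Your protocol is repairable in the same way: since each node already knows $a_v,b_v$ and its neighbours' offsets, add the predecessor-side, inverted check (in the notation above, $a_v\le a_w+1$ for the neighbour $w$ with $\pi_w=\pi_v-1$), so that all $2n$ permutations of the definition are distributed among the nodes. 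Without that second family of checks, a correct soundness proof would require showing that any graph admitting an ordering passing the rotation checks alone is a proper circular arc graph; that is not Tucker's Proposition~\ref{prop:proper_circ}, and you give no argument for it.
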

\begin{proof}
	
	Assume that the prover assigns to each node a pair $A_v= ( r_v, \ell_v)$ which correspond to $v$'s arc coordinates when the arc is visited in a counter-clockwise direction. Here we assume that such coordinates are given as a value between $(0, 2\pi)$ with any pair of values being at a distance at least $1/\poly(n)$ from each other (and, as such, we require $\cO(\log n)$ bits to represent such a range). Now, let $v_1$ be the node such $r_v$ is the smallest and such that, in case the graph is a proper circular graph, is the first node if we sort them according to their right endpoint as in Proposition~\ref{prop:proper_circ} and Observation~\ref{obs:proper_circ}.

	Now,  first, we ask the prover to provide the identifier $\id(v_1)$ of such a node, as well as proof that it is the only node with the smallest right endpoint, which can be provided by sending a spanning tree triple $\tree$ as well as a verification through the spanning tree that $v_1$ is the unique node with the smallest value for $r_v$. We also ask the prover to provide the size of the graph $n(G)$ which can also be verified through the spanning tree.
	
	Next, we ask the prover to send to each node a position $\pi_v$ such that $\pi_v = i$ means that $r_v$ is the i-th largest value for a right endpoint in counter-clockwise order. Also, we ask the prover to provide each node with a range $(v_{\min}, v_{\max} )$ which correspond to the positions in $\pi$ such that $v$'s adjacency equals the set of nodes whose positions are $\{ v_{\min}, v_{\min}+1 , \dots v_{\max}\}$ given as a circular sequence as described previously.
	
	Then, at the verification round, all nodes exchange these messages, verifying the existence of a node $v_1$ by using the spanning tree and, starting from $v_1$, each node $v$ with $\pi_v= i $  checks that there is a unique node labeled by $i+1$ whose arc intersects with its own and whose left endpoint is immediately after his. They also check that their adjacency is circular, meaning that it has a neighbor labeled with each position in the range $(v_{\max}, v_{\min}),$ with arcs consistent with such an order. 
	
	Finally, in order to check that the matrix has the circularly compatible $1$'s property, they do the following. In order to check the first two columns in each permutation obtained by a composition of $\sigma_{\textrm{inv}}$ or $\sigma_{\textrm{sh}}$ we simply ask each node to adjust its range according to these permutations, such that each node $v$ positioned at $\pi_v$ with neighbors $w$ and $u$ such that $\pi_w= \pi_v-1$ and $\pi_u= \pi_v+1$ must simply consider two cases: {(1)} When $v$ is first and $u$ is second, which occurs when we shift $\pi$ (by applying $\sigma_{\textrm{sh}})$ until $v$ is first in the order or {(2)} when $v$ is first and $w$ is second, which occurs when we invert the order by applying $\sigma_{\textrm{inv}}$ and then shift $\pi$ until $v$ is first. In this way, the verification of all $2n$ possible permutations is distributed among the nodes, with each $v$ in charge of two cases.
	
	We explain how to handle both cases by adjusting the range of $v$ and that of its neighbours as follows:
	\begin{itemize}
		\item If $v$ is first and $u$ is second, we can obtain the corresponding range by setting $k= n- \pi_v +1$ and translating both ranges by $k \mod n$  as $(\bar{v}_{\min}, \bar{v}_{\max}) = (v_{\min} + k \mod n, v_{\max} +k \mod n)$ and a similar construction for $u$.
		\item If $v$ is first and $w$ is second, first we obtain the range after applying $\sigma_{\textrm{inv}}$ as $(\bar{v}_{\min}, \bar{v}_{\max}) = ( n-v_{\max} +1 \mod n, n- v_{\min} +1\mod n)$ and then shifting $v$ to the first position by adding $\pi_v$ on both sides modulo $n$, and a similar construction for $w$.
	\end{itemize}
	
	Given these two different ranges, each node checks that (unless either itself or $w$ or $u$ are universal nodes) the range $\bar{v}_{\max} \leq \bar{u}_{\max}$ (respectively $\bar{v}_{\max} \leq \bar{w}_{\max}$), accepting if this holds and rejecting otherwise.
	
	We have that all these messages are of length $\cO(\log n)$ in one round of interaction. Therefore, it only remains to check the correctness of this protocol.

\smallskip

{\bf{Completeness.}}  Suppose that $G$ is a proper circular arc, an honest prover will provide each node with an ordering $\pi$ according to each arc's left endpoint, as well a the correct range which all nodes can verify and accept.
 
 \smallskip
 
{\bf{Soundness.}} Now, suppose that $G$ is a No-instance. From what was described above, all nodes correctly compute a starting node $v_1$ as well as the size of the graph. Also, each node $v$ with $\pi_v = i$ checks that it has a unique neighbour positioned as $i+1$, which is consistent with its arc. Combining both statements we have that all nodes must have different values in $\pi$ that match the order of their left endpoints. 

Now, if all nodes check that their adjacency is indeed circular there must exists a node $v$ which, when permuting its order such that it becomes first either $\bar{v}_{\max} > \bar{u}_{\max}$ or $\bar{v}_{\max} > \bar{w}_{\max}$, and then it would immediately reject. \end{proof}

\subsection{The General Case}\label{circular_arc}
To cope with the general case, it suffices to adapt the characterization found in~\cite{tucker1970} for this class of graphs, as it gives us a simple representation of the adjacency of each node by means of the shape of the adjacency matrix relative to a node ordering.

Given a symmetric $\{0,1\}$-matrix $M$, with ones in the diagonal, consider a column $i$ and define $U_i$ as the set of  $1's$ starting from the diagonal and going downwards in a circular manner until a zero appears. Now, define $V_i$ as the set of $1's$ on row $i$ starting from the diagonal and going rightwards in the same manner. $M$ is said to have the \emph{quasi-circular} $1's$ \emph{property} if all  $1's$ in the matrix are covered by some  $U_i$ or $V_i$.  It is important to mention that, since $M$ is symmetric, we have that $U_i$ and $V_i$ have the same size.

\begin{figure}[h!]
	\centering
	\ContinuedFloat*
	\includegraphics[width=0.6\linewidth]{./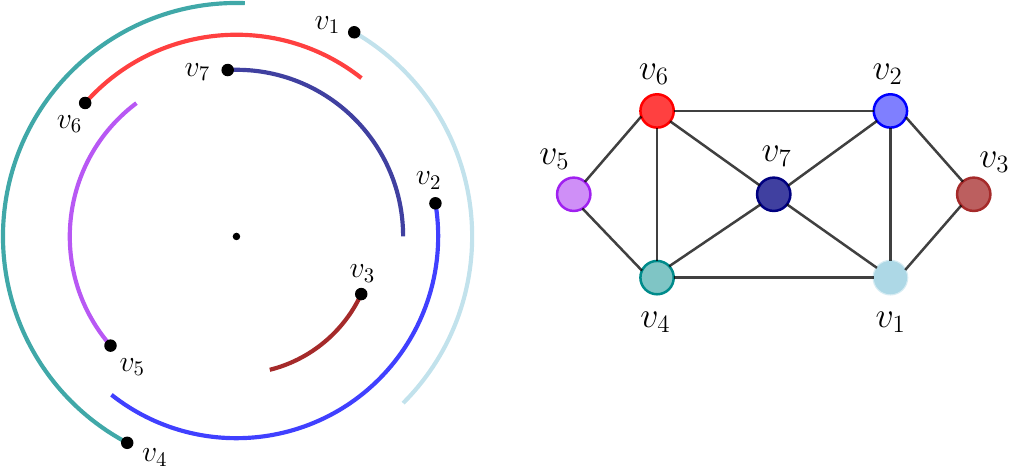}
	\caption{A circular arc representation for a graph, along with its associated drawing.}
\end{figure}

\begin{figure}[h!]
	\centering
	\ContinuedFloat
	\includegraphics[width=0.3\linewidth]{./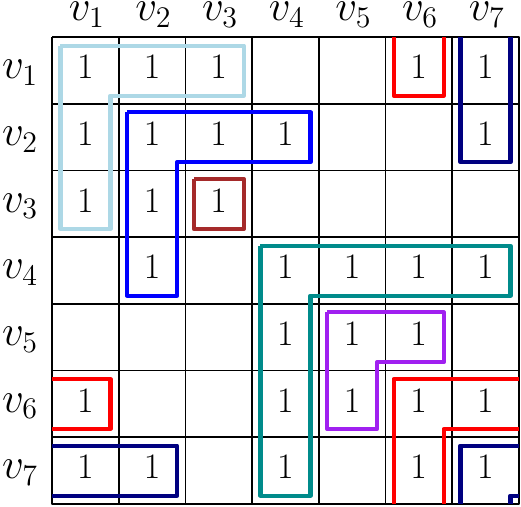}
	\caption{Augmented adjacency matrix for the previous graph with the quasi-circular  $1's$ property.}
\end{figure}

\begin{proposition}[\cite{tucker1970}]
	Let $M^*(G)$ be the augmented adjacency matrix of $G$. We have that  $G$ is a circular arc graph if and only if there exists an ordering for the nodes such that $M^*(G)$ has quasi-circular~$1's$.
\end{proposition}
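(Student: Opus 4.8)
The plan is to prove the two directions of the equivalence separately, in both cases exploiting the dictionary between the geometry of arcs and the two runs $U_i$ and $V_i$ featured in the quasi-circular property. Throughout I would assume the arcs are in general position, so that all $2n$ endpoints are distinct; this is harmless, since a finite family of arcs can be perturbed without changing the intersection pattern. The guiding principle is the elementary fact that two arcs on a circle intersect if and only if the (counter-clockwise) left endpoint of one lies inside the other, or one contains the other entirely. The quasi-circular property is then exactly the matrix-level bookkeeping of these two ways of intersecting.

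For the forward direction (circular arc $\Rightarrow$ a quasi-circular ordering exists), I would start from a representation $\{A_v\}_{v\in V}$, fix an arbitrary cut point of the circle, and order the nodes $1,\dots,n$ by the position of their left endpoints, as in Observation~\ref{obs:proper_circ}. For each node $i$ its neighbourhood splits into the set $S_i=\{j:\ell_j\in A_i\}$ of arcs that \emph{start} inside $A_i$ and the set $\{j:\ell_i\in A_j\}$ of arcs whose interior \emph{contains} the start of $A_i$. The left endpoints lying in the angular span $(\ell_i,r_i)$ of $A_i$ are precisely those occupying circularly consecutive positions right after $i$, so $S_i$ is exactly a maximal run of consecutive $1$'s emanating rightward from the diagonal of row $i$, i.e. it is captured by $V_i$. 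The complementary relation $\ell_i\in A_j$ is, by symmetry, the statement $i\in S_j$, so that entry is captured by $V_j$, equivalently by the column run $U_i$ after transposition. Since every off-diagonal $1$-entry $(a,b)$ is witnessed by one of the two containments, it is covered by the corresponding $U$ or $V$ run, which is exactly the quasi-circular $1$'s property.

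For the reverse direction (quasi-circular ordering $\Rightarrow$ circular arc), I would build an explicit representation. Place $n$ anchor slots on the circle, assign to node $i$ a left endpoint $\ell_i$ at slot $i$, and set the right endpoint $r_i$ immediately after the farthest node reached by the maximal run $V_i$ of consecutive $1$'s leaving the diagonal in row $i$, breaking ties so that all endpoints stay distinct. I would then verify that $A_i\cap A_j\neq\varnothing$ holds if and only if $M_{ij}=1$: the ``stop at the first $0$'' rule defining the runs guarantees that $\ell_j\in A_i$ exactly when $j$ lies in the block $V_i$, so any realized intersection forces a $1$; conversely the covering hypothesis of the quasi-circular property guarantees that every $1$-entry is realized either as a start-inside-containment (handled by $V_i$) or from the symmetric side (handled by $U_i$), so no edge is missed. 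This reduces, when no containments occur, to the consecutive-$1$'s situation already exploited in Proposition~\ref{prop:proper_circ}.

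The main obstacle is this reverse direction, specifically the treatment of arc \emph{containment} and of arcs that straddle the cut point. Unlike the proper case of Proposition~\ref{prop:proper_circ}, here an arc may properly contain another, so a neighbourhood need not be a single circular interval of the ordering; it is precisely the role of having \emph{two} runs, $U_i$ for one side and $V_i$ for the other, that lets the two ``halves'' of each neighbourhood be encoded independently. Ensuring that these halves glue into one consistent arc per node, without manufacturing a spurious intersection between two non-adjacent arcs that merely wrap past the cut, requires a careful global choice of the right endpoints together with an exchange or inductive argument along the ordering; this bookkeeping around the cut point is the delicate heart of the proof.
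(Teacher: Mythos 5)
The paper offers no proof for you to be compared against: this proposition is imported verbatim from Tucker~\cite{tucker1970}, and the surrounding text only defines the quasi-circular $1$'s property and then uses the characterization inside the protocol of Theorem~\ref{prop:circular}. Judged on its own, your argument is correct in both directions. The forward direction is the standard one: sorting by left endpoints makes the set of arcs starting inside $A_a$ a circularly consecutive block of positions after $a$, so an edge witnessed by $\ell_b\in A_a$ has its entry $(a,b)$ covered by the row run $V_a$, while an edge witnessed by $\ell_a\in A_b$ has $(a,b)$ covered by the column run $U_b$; note only the index slip at the end of that paragraph --- the transpose of the row run $V_j$ is the column run $U_j$, not $U_i$. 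What you should revise is the closing paragraph, which overstates the difficulty of the converse: the construction you give is already complete, and no exchange or inductive argument around the cut point is needed. Place $\ell_i$ at slot $i$ and $r_i$ just past slot $i+|V_i|$; then the slots contained in $A_i$ are exactly $i,i+1,\dots,i+|V_i|$, and since every left endpoint lies at a slot, intersections can only be witnessed at slots. Moreover, the only runs that can cover an off-diagonal entry $(i,j)$ are $V_i$ (row $i$) and $U_j$ (column $j$), and $|U_j|=|V_j|$ by symmetry of $M^*(G)$, so the covering hypothesis says precisely that $j$ lies within $|V_i|$ positions after $i$, or $i$ lies within $|V_j|$ positions after $j$ --- which is precisely $\ell_j\in A_i$ or $\ell_i\in A_j$, i.e.\ $A_i\cap A_j\neq\varnothing$. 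Conversely, any intersection forces one of these two containments, and the slots involved index entries lying inside a run, hence $1$'s. Arc containment and arcs wrapping past the cut are handled automatically because both the slots and the runs are circular objects; there is no residual bookkeeping, so you can replace the hedging by this short verification and the proof stands.
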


From here, we can describe a PLS with cost $\cO(\log n)$.

\begin{theorem}\label{prop:circular}
There is a PLS for $\circularArc$ using certificates on $\mathcal{O}(\log n)$ bits.
\end{theorem}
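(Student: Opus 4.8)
The plan is to reuse essentially all of the scaffolding built for \propCircular\ in Theorem~\ref{prop:properCirc}, and to replace the single-interval adjacency description by the slightly richer---but still compact---description afforded by the quasi-circular $1$'s property. The crucial observation is that, in the ordering guaranteed by the preceding characterization, each $U_i$ and each $V_i$ is a \emph{contiguous} circular run of entries starting at the diagonal, and, since $M^*(G)$ is symmetric, $|U_i| = |V_i|$. Hence the whole coverage information of a node sitting at position $i$ is captured by a single length $\ell_i \in [n]$: the run $V_i$ occupies the circular interval $(i, i+\ell_i-1)$ of columns, and $U_i$ the same interval of rows. A single integer per node is all that is needed to encode the shape of the matrix, so the certificates stay within $\cO(\log n)$ bits.

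Concretely, the prover provides each node $v$ with a spanning-tree triple $\tree$ together with $n(G)$, exactly as in Theorem~\ref{prop:properCirc}, used to fix a canonical starting node and to certify the size of the graph; a position $\pi_v \in [n]$; and the length $\ell_v$ of its run. At the verification round the nodes first check, exactly as before, that the $\pi_v$'s form a valid permutation of $[n]$ (each position appears once, and consecutive positions are consistent along the ordering anchored at the starting node). Each node $v$ with $\pi_v = i$ then performs two local tests. First, it checks that it has a distinct neighbour at every position of $(i+1, i+\ell_v-1)$ and that it has \emph{no} neighbour at position $i+\ell_v$; this certifies that $V_i$ is a maximal run of $1$'s terminated by a $0$. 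Second, for every neighbour $w$ it receives $(\pi_w, \ell_w)$ and checks that the corresponding entry is covered, i.e.\ that either $\pi_w \in (i, i+\ell_v-1)$ (the edge lies in $V_i$) or $i \in (\pi_w, \pi_w+\ell_w-1)$ (the edge lies in $V_{\pi_w}$, equivalently in $U_{\pi_w}$). As in the proper case, universal nodes, whose run covers the whole circle, are treated as an exception. All exchanged messages have length $\cO(\log n)$ and a single round suffices.

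Completeness is immediate: if $G$ is a circular arc graph, the characterization yields an ordering under which $M^*(G)$ has quasi-circular $1$'s, and an honest prover hands out the induced positions and run-lengths, so all tests pass. For soundness the key point is that the two tests, run over all nodes simultaneously, pin down the adjacency matrix exactly. The second test guarantees that every real edge of $G$ is explained by some run, so no $1$ is left uncovered. The first test guarantees that every position inside a node's run is a genuine neighbour; in particular a node cannot claim a run reaching a non-neighbour, since the node sitting at the far endpoint of that alleged run would detect a missing neighbour at position $i$ and reject. Thus the neighbourhood of each node coincides with the positions covered by $V_i$ together with those positions $k$ whose run $V_k$ reaches $i$, which is precisely the set of $1$'s that the quasi-circular property assigns to row $i$. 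Hence if every node accepts, the proposed ordering and lengths realise $M^*(G)$ with quasi-circular $1$'s, and $G$ is a circular arc graph.

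The main obstacle, and the reason the construction is subtle rather than routine, is the $\supseteq$ direction of the adjacency reconstruction: a node at position $i$ is adjacent to \emph{all} nodes $k$ whose run $V_k$ reaches $i$, yet it has no way of enumerating those $k$ locally, so a dishonest prover could try to ``cover'' a non-edge by an over-long run of a distant node. The construction circumvents this precisely by distributing the check: each node is made responsible for certifying that its own run consists only of true neighbours, so an over-long run is caught at its own endpoint. Getting this interplay right---together with the circular wrap-around and the universal-node exceptions inherited from the proper case---is where the care is needed; the remaining bookkeeping (permutation validity, message sizes) is identical to Theorem~\ref{prop:properCirc}.
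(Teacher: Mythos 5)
Your certificate (one position $\pi_v$ plus one run length $\ell_v$, justified by the symmetry $|U_i|=|V_i|$) and your two run tests are essentially the paper's protocol for \circularArc. The genuine gap is the step you wave off as inherited from Theorem~\ref{prop:properCirc}: certifying that the $\pi_v$'s form a permutation of $[n]$. The proper-case check has the node at position $i$ confirm a unique \emph{neighbour} at position $i+1$, and this is only available because, for proper circular arc graphs, after rotation the node at position $i$ is adjacent to the node at position $i+1$ (the second Observation in Section~\ref{sec:circular}). For general circular arc graphs this is false, so importing that check verbatim destroys completeness: in the star $K_{1,3}$ (an interval graph, hence a circular arc graph) any ordering of the four vertices places two of the three pairwise non-adjacent leaves at consecutive positions, and the earlier of them rejects. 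If you instead weaken or drop the check, nothing left in your protocol detects that two non-adjacent nodes were handed the \emph{same} position, and soundness fails. Concretely, on $K_{2,3}$ (not a circular arc graph) the prover can give both degree-$3$ vertices position $0$ with run $\{1,2,3\}$, and the three degree-$2$ vertices positions $1,2,3$ with empty runs: every run consists of distinct genuine neighbours, every run is terminated by a position holding no neighbour, and every edge is covered by the run of its degree-$3$ endpoint, so all of your tests pass. Anchoring a unique starting node at position $0$ via the spanning tree does not repair this, since the duplicates can be placed far from position $0$; for instance, two disjoint induced $4$-cycles joined by a long path (not a circular arc graph, as two vertex-disjoint induced $C_4$'s without edges between them cannot both wrap the circle) admits a similar accepting assignment in which the second cycle reuses positions.

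This duplicate-position attack is exactly what the paper's ordering verification is built to stop, and it is the one idea your proposal is missing. In the paper, each node $v$ checks that the positions $\pi_v+1,\dots,\pi_v+L_v-1$ of its run are held by distinct neighbours \emph{and} designates, among those neighbours, the first one $w$ with $\pi_w+L_w-1>\pi_v+L_v-1$ as the node responsible for continuing the verification. Because $G$ is connected, these runs must chain from position $0$ all the way to position $n-1$, and together with the certified value of $n$ this forces all $n$ positions to be pairwise distinct; on the cheating assignments above the chain stalls before reaching position $n-1$ and some node rejects. Your closing paragraph correctly identifies the over-long-run issue (a run claiming a non-neighbour is caught by your first test), but the soundness argument ``the two tests pin down the adjacency matrix exactly'' silently assumes injectivity of $v\mapsto\pi_v$, which is precisely what must be, and in your write-up is not, verified.
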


\begin{protocol}
	First, the prover sends to each node $v$:
	\begin{enumerate}
		\item Its position in the ordering $\pi_v$ as well as the total number of nodes $n(G)$.
		\item A spanning tree given by the triple $\tree$.
		\item The size of its set $U_{\pi_v}$ denoted by $L_v$.
	\end{enumerate}
	
	After the nodes exchange their certificates, they check the consistency of the spanning tree and use it in order to verify that the total number of nodes is correct. Then, in order to verify the consistency of $\pi(\cdot)$ as a correct ordering, the nodes proceed as follows. 
	
	If we set $N^\pi (v)$ to be the set of nodes in $N(v)$ such that they are positioned between $\pi_v$ and $\pi_v+ L_v-1$, for $i \in \{0, \dots n-2\}$ each node $v$ in position $\pi_v=i$ must check that it has a unique neighbor $u$ positioned at $\pi_u=j$ for all positions $j$ in $\{\pi_v + h_v\}$, where $w\in N^\pi(v)$ with $\pi_w= h_v$ is the first node such that $\pi_w + L_w -1 > \pi_v + L_v -1$.
	By this process, each node starting from $i=0$ makes sure that there are nodes labeled with a position in $U_{\pi_v}$ and that there is a unique node that can continue this process after him. As $G$ is connected, we can assume that this process continues on until all nodes with positions in $\{0, n-1\}$ are verified. As there are $n$ nodes in the graph, all positions are distinct.
	
	Finally, each node $v$ with a neighbour $u$ checks that, either $u\in N^\pi(v)$ or $v\in N^\pi(u)$ and that $v$ is adjacent to all nodes  with positions in $N^\pi(v)$. Rejecting if any of these conditions are not satisfied.
	
\end{protocol}

\smallskip

{\bf{Completeness.}} We have that, if $G$ is a circular arc graph, then it admits an ordering with the previous property. Then, each node  $v$ has neighbours whose positions are between $\pi_v$ and  $\pi_v + L_v$ circularly, and any other neighbour is such that $v$ verifies that property for them. Therefore, all nodes always accept.

\smallskip

{\bf{Soundness.}} If $G$ is not a circular arc graph, then we'll have that, for any order, there exists a pair of adjacent nodes $u,v$ such that, as a pair, do not belong to any $U_i$ or $V_i$. Thus, we have that either $u$ rejects as $\pi_u \leq \pi_v + L_v-1 \mod n$ or $v$ rejects as one of them notices that fact.

\section{Trapezoid and Permutation Graphs} \label{sec:trapezoid}

Now we turn to study the class of trapezoid graphs. A graph is said to be a trapezoid graph if there exists a collection of trapezoids $\{T_v\}_{v\in V}$ with vertices in two parallel lines $\mathcal{L}_t$ and $\mathcal{L}_b$ (as in Figure~\ref{fig:trapezoidvertex}) such that $\{u,v\}\in E$ iff $T_u\cap T_v \neq \emptyset$. We call these lines \emph{top and bottom lines}. The trapezoids have sides contained in each line, and, therefore, are defined by four vertices, two in the top line, and two in the bottom line. Formally, each trapezoid $T$ is defined by the set $T = \{t_1,t_2, b_1, b_2\}$, where $t_1 < t_2$ and $b_1 < b_2$,  with $t_1, t_2 \in \mathcal{L}_t$ and $b_1, b_2 \in \mathcal{L}_b$.  

	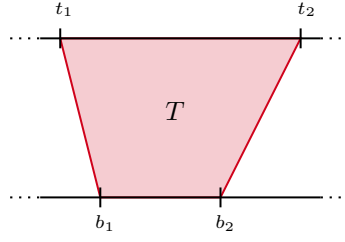
\begin{figure}[h!]
		\centering 
		\tikzset{every picture/.style={line width=0.75pt}} 

\begin{tikzpicture}[x=0.75pt,y=0.75pt,yscale=-1,xscale=1]

\draw  [color={rgb, 255:red, 208; green, 2; blue, 27 }  ,draw opacity=1 ][fill={rgb, 255:red, 208; green, 2; blue, 27 }  ,fill opacity=0.2 ] (350,45) -- (310,125) -- (250,125) -- (230,45) -- cycle ;
\draw [line width=0.75]    (220,45) -- (360,45) ;
\draw [line width=0.75]    (220,125) -- (360,125) ;
\draw [line width=0.75]    (310,120) -- (310,130) ;
\draw [line width=0.75]    (250,120) -- (250,130) ;
\draw [line width=0.75]    (350,40) -- (350,50) ;
\draw [line width=0.75]    (230,40) -- (230,50) ;
\draw [line width=0.75]  [dash pattern={on 0.84pt off 2.51pt}]  (360,125) -- (375,125) ;
\draw [line width=0.75]  [dash pattern={on 0.84pt off 2.51pt}]  (360,45) -- (375,45) ;
\draw [line width=0.75]  [dash pattern={on 0.84pt off 2.51pt}]  (205,125) -- (220,125) ;
\draw [line width=0.75]  [dash pattern={on 0.84pt off 2.51pt}]  (205,45) -- (220,45) ;

\draw (281,75.4) node [anchor=north west][inner sep=0.75pt]    {$T$};
\draw (246,132.4) node [anchor=north west][inner sep=0.75pt]  [font=\footnotesize]  {$b_{1}$};
\draw (306,132.4) node [anchor=north west][inner sep=0.75pt]  [font=\footnotesize]  {$b_{2}$};
\draw (226,25.4) node [anchor=north west][inner sep=0.75pt]  [font=\footnotesize]  {$t_{1}$};
\draw (347,25.4) node [anchor=north west][inner sep=0.75pt]  [font=\footnotesize]  {$t_{2}$};

\end{tikzpicture}
		\caption{Each trapezoid $T$ is defined by a set $T = \{ b_1, b_2, t_1, t_2\}$.}
		\label{fig:trapezoidvertex}
	\end{figure}

Consider a trapezoidal model ${T_v}{v\in V}$, as previously described. The vertices of each trapezoid can be labelled from left to right with integers from $1$ to $2n$ for both the lower and upper lines. Therefore, we can assume, without loss of generality, that the vertices defining the set ${T_v}{v\in V}$ are all distinct and have a value in the range $[2n]$. As a result, each element in the range $[2n]$ corresponds to a vertex of some trapezoid in both the top and bottom lines. 

 For $v \in V$, we call $\{t_1(v), t_2(v), b_1(v), b_2(v)\}$ the vertices of $T_v$.  Moreover, we say that the collection $\{t_1(v), t_2(v), b_1(v), b_2(v)\}$ are the \emph{vertices} of node $v$. In the following, a trapezoid model satisfying the conditions stated above is called a \emph{proper trapezoid model} for $G$. Given a graph $G=(V,E)$ (that is not necessarily a trapezoid graph), a \emph{semi-proper trapezoid model} for  $G$ is a set of trapezoids $\{T_v\}_{v\in V}$ satisfying previous conditions, such that, for every $\{u,v\} \in E$, the trapezoids $T_v$ and $T_u$ have nonempty intersection. The difference between a proper and a semi-proper model is that in the first we also ask every pair of non-adjacent edges to have non-intersecting trapezoids.  

	Given a trapezoid graph $G = (V,E)$ and a proper trapezoid model $\{T_v\}_{v\in V}$, we define the following sets for each $v\in V$:
	\begin{align*}
		F_t(v) & = \{i\in [2n]\mid i<t_1(v)\text{ and }i \in \{t_1(w), t_2(w)\} \text{ for some }w\notin N(v)\} \\
		F_b(v) & = \{i\in [2n]\mid i<b_1(v)\text{ and }i \in \{b_1(w), b_2(w)\} \text{ for some }w\notin N(v)\}
	\end{align*}

	Intuitively, the set $F_t(v)$ has the positions in the upper line to the left of $T_v$ which are vertices of a trapezoid $T(\omega)$, with $\omega\notin N(v)$. Analogously for $T_b(v)$.  We also call $f_t(v)  = |F_t(v)|$ and $f_b(v) = |F_b(v)|$. 

The Lemmas presented below provide a characterization of trapezoid graphs through equalities that can be computed locally by each node based on the information available from its neighbours.

	\begin{lemma}
		\label{lem:trapcaract}
		Let $G=(V,E)$ an $n$-connected trapezoid a graph. Then every proper trapezoid model  $\{T_v\}_{v\in V}$  of $G$ satisfies for every $v\in V$: $$f_b(v) = f_t(v)$$
	\end{lemma}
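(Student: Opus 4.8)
The plan is to reduce the equality $f_b(v)=f_t(v)$ to a single double-counting argument: I will show that both quantities count, with multiplicity two, exactly the same family of non-neighbours of $v$ — namely those whose trapezoid lies entirely to the left of $T_v$ — so that both equal twice the size of that family. First I would record the geometric meaning of non-adjacency in a \emph{proper} trapezoid model. Since the model is proper, for $w\neq v$ we have $w\notin N(v)$ if and only if $T_w$ and $T_v$ are disjoint, and two trapezoids with sides on two parallel lines are disjoint precisely when one lies entirely to the left of the other. At the level of endpoints this means either $t_2(w)<t_1(v)$ and $b_2(w)<b_1(v)$ (the trapezoid $T_w$ lies left of $T_v$), or $t_2(v)<t_1(w)$ and $b_2(v)<b_1(w)$ (the trapezoid $T_v$ lies left of $T_w$). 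Let $L(v)$ denote the set of non-neighbours of the first kind.

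The key observation is that a non-neighbour cannot ``straddle'' the position $t_1(v)$ on the top line. If $w\in L(v)$ then $t_1(w)<t_2(w)<t_1(v)$, so both top endpoints of $w$ lie to the left of $t_1(v)$; whereas if $T_v$ lies to the left of $T_w$ then $t_1(v)<t_2(v)<t_1(w)<t_2(w)$, so neither top endpoint of $w$ lies to the left of $t_1(v)$. Since each position of $[2n]$ on the top line is the top endpoint of a unique trapezoid, this gives $F_t(v)=\{t_1(w),t_2(w)\mid w\in L(v)\}$: the inclusion $\supseteq$ is immediate, and for $\subseteq$ one notes that a position $i<t_1(v)$ owned by a non-neighbour $w$ forces $w\in L(v)$, for otherwise $i\ge t_1(w)>t_1(v)$. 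As the pairs $\{t_1(w),t_2(w)\}$ are disjoint for distinct $w$ and each contributes two positions, $f_t(v)=2|L(v)|$. The verbatim argument on the bottom line — using that $b_2(w)<b_1(v)$ holds exactly when $w\in L(v)$ — yields $F_b(v)=\{b_1(w),b_2(w)\mid w\in L(v)\}$ and hence $f_b(v)=2|L(v)|$. Combining the two, $f_b(v)=2|L(v)|=f_t(v)$.

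I expect the only delicate point to be the equivalence between geometric disjointness of two trapezoids and the endpoint inequalities used above: one must justify that two trapezoids with sides on the two parallel lines are disjoint precisely when one is entirely to the left of the other, with the left/right order being \emph{consistent on both lines} (a ``crossing'' of the right edge of one with the left edge of the other would force an intersection). Once this characterization is in hand, the statement is a clean count, since the sets $F_t(v)$ and $F_b(v)$ are indexed by the same family $L(v)$, each of its members contributing exactly two endpoints on each line.
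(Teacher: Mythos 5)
Your proof is correct and follows essentially the same route as the paper's: both rest on the key observation that, by properness, a non-neighbour with an endpoint to the left of $t_1(v)$ (resp.\ $b_1(v)$) must have its entire trapezoid to the left of $T_v$ on \emph{both} lines, so that $F_t(v)$ and $F_b(v)$ consist precisely of the two endpoints of exactly this family of non-neighbours. Your explicit bookkeeping via the set $L(v)$ and the identity $f_t(v)=f_b(v)=2|L(v)|$ is a cleaner rendering of the paper's terser counting step, and your flagged "delicate point" (disjointness of two trapezoids forces a consistent left/right order on both lines) is exactly the fact the paper invokes with "as otherwise $T_w$ and $T_v$ would intersect."
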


\begin{proof}
	Let $\{T_v\}_{v\in V}$ be a proper trapezoid model of $G$.  Then, given a node $v\in V$, all the coordinates in $F_t(v)$ are vertices of some $w\neq N(v)$. Such trapezoids $T_w$ have their two upper vertices in the set $\{1, \dots, t_1(v)\}$ and their two lower vertices in $\{1, \dots, b_1(v)\}$, as otherwise $T_w$ and $T_v$ would intersect. Then, the cardinality of the set $F_t(v)$ is equal to the cardinality of the set $F_b(v)$, as every position in $\{1, \dots, 2n\}$ corresponds to a vertex of some node, so if a position $j<b_1(v)$ is not in $F_b(v)$, then has to be a vertex of some neighbour of $v$. Analogous for the positions $j< t_1(v)$ in the upper line. 
\end{proof}

\begin{lemma}
	\label{lem:nontrapcarac2}
Let $G = (V,E)$ be a $n$-node graph that is not a trapezoid graph. Then, for every semi-proper trapezoid model $\{T_v\}_{v\in V}$ of $G$, at least one of the following conditions is true: 

\begin{enumerate}
    \item $\exists v\in V$ such that some value in $\{b_1(v), \dots, b_2(v)\}$ or $\{t_1(v),\dots, t_2(v)\}$ is a vertex of $\omega\notin N(v)$.
    \item  $\exists v\in V$ such that $f_b(v) \neq f_t(v)$.
\end{enumerate}

\end{lemma}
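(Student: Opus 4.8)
The plan is to prove the contrapositive: assuming a semi-proper trapezoid model $\{T_v\}_{v\in V}$ for which \emph{neither} condition holds, I would show that this model is in fact proper, so that $G$ is a trapezoid graph, contradicting the hypothesis. Throughout I use the standard fact that two trapezoids are disjoint exactly when one lies entirely to the left of the other, i.e.\ $T_u\cap T_v=\varnothing$ iff $t_2(u)<t_1(v)$ and $b_2(u)<b_1(v)$ (or the same with the roles of $u$ and $v$ swapped). Since the model is semi-proper, every edge of $G$ already corresponds to an intersection, so it suffices to prove that every non-adjacent pair corresponds to disjoint trapezoids.

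First I would exploit the failure of condition~1. Fix a non-adjacent pair $u,v$ and suppose their top segments $\{t_1(u),\dots,t_2(u)\}$ and $\{t_1(v),\dots,t_2(v)\}$ are not disjoint. As all $2n$ top-coordinates are distinct, one endpoint of one segment lies strictly inside the other; say $t_1(v)$ lies strictly between $t_1(u)$ and $t_2(u)$. Then $t_1(v)$ is a value in $\{t_1(u),\dots,t_2(u)\}$ that is a vertex of $v\notin N(u)$, which is precisely condition~1 for $u$. Since condition~1 fails, this cannot occur, so the top segments of every non-adjacent pair are disjoint; the identical argument on the bottom line gives disjoint bottom segments. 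Combined with the disjointness criterion, a non-adjacent pair that nonetheless intersects must \emph{cross}: one of the two nodes has its top segment entirely to the left of the other's while its bottom segment is entirely to the right.

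Next I would use the failure of condition~2. For each $v$, disjointness of the segments of non-neighbours on each line lets me write $f_t(v)=2\,|P_t(v)|$ and $f_b(v)=2\,|P_b(v)|$, where $P_t(v)$ (resp.\ $P_b(v)$) is the set of non-neighbours whose top (resp.\ bottom) segment lies entirely to the left of $v$'s. A non-adjacent pair that does not cross contributes equally to both sets, whereas a crossing pair contributes to exactly one of them. I then orient each crossing pair from the node whose top segment is on the left towards the node whose bottom segment is on the left; a direct check shows that $|P_t(v)|-|P_b(v)|$ equals the in-degree minus the out-degree of $v$ in this orientation, so $f_t(v)=f_b(v)$ (failure of condition~2) forces in-degree and out-degree to agree at every node. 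The crucial observation is that an oriented edge $u\to w$ forces $t_2(u)<t_1(w)$, hence $t_1(u)<t_1(w)$, so the orientation refines the left-to-right order of the top segments and is therefore acyclic. A finite acyclic digraph has a sink; balance forces that sink to have in-degree $0$ as well, so it is isolated; deleting it preserves both acyclicity and balance, and induction yields that the orientation has no edges at all. Hence there are no crossing pairs.

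With no crossings, every non-adjacent pair is strictly separated on both lines, so its trapezoids are disjoint; together with semi-properness this makes the intersection graph of $\{T_v\}_{v\in V}$ equal to $G$, so $G$ is a trapezoid graph, the desired contradiction. I expect the crossing-elimination step to be the main obstacle: the per-node balance $|P_t(v)|=|P_b(v)|$ is \emph{not} by itself enough to forbid crossings, since a directed cycle of crossings would satisfy it; the argument only closes once one notices that the crossing orientation is forced to be acyclic, at which point in/out balance degenerates to the total absence of edges.
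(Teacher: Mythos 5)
Your proof is correct, but it takes a genuinely different route from the paper's. The paper argues directly rather than by contraposition: since $G$ is not a trapezoid graph, some non-adjacent pair has intersecting trapezoids; if their segments overlap on the top or bottom line, condition 1 holds, and otherwise every such pair is a crossing, at which point the paper picks the single witness $u$ minimizing $b_1(u)$ among all nodes involved in a crossing. That minimality forces every crossing partner of $u$ to lie top-left/bottom-right of $u$, so each such partner contributes $2$ to $f_t(u)$ and $0$ to $f_b(u)$ while every other non-neighbour contributes equally to both, giving $f_t(u)>f_b(u)$ at that one node --- no induction or global argument is needed. Your contrapositive formulation instead requires the stronger conclusion that balance at \emph{every} node eliminates \emph{all} crossings, and that is exactly what your orientation of crossing pairs, the observation that it refines the left-to-right order of the top segments (hence is acyclic), and the sink-deletion induction supply. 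The two arguments are close relatives: your identity $f_t(v)-f_b(v)=2\,(\mathrm{indeg}(v)-\mathrm{outdeg}(v))$ specializes, at a sink, to the paper's inequality, and the paper's extremal node (minimum $b_1$) is precisely a sink of your crossing digraph. What the paper's version buys is brevity: one extremal choice and one local count. What yours buys is a clean equivalence --- both conditions fail if and only if the semi-proper model is in fact proper --- together with a sharper accounting ($f_t(v)=2\lvert P_t(v)\rvert$ and $f_b(v)=2\lvert P_b(v)\rvert$) than the paper's somewhat loosely stated final count, plus an explicit explanation, in your closing remark, of why per-node balance alone would not suffice without acyclicity.
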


\begin{proof}

Let $G$ be a graph that is not a trapezoid graph and $\{T_v\}_{v\in V}$ a semi-proper trapezoid model. As $G$ is not a permutation graph, by definition necessarily there exists a pair $\{v,\omega\}\not\in E$ such that $T_v \cap T_\omega \neq \emptyset$. We distinguish two possible cases (see \Cref{fig:nontrapcarac}): 
\begin{enumerate}
\item  $[b_1(v), b_2(v)]_\N \cap [b_1(\omega), b_2(\omega)]_\N\neq\emptyset$ or $[t_1(v) , t_2(v)]_\N\cap [t_1(\omega), t_2(\omega)]_\N\neq\emptyset$.
\item $[b_1(v), b_2(v)]_\N \cap [b_1(\omega), b_2(\omega)]_\N=\emptyset$ and $[t_1(v) , t_2(v)]_\N\cap [t_1(\omega), t_2(\omega)]_\N=\emptyset$.
\end{enumerate}

	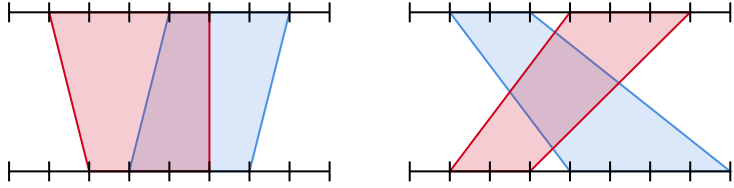
\begin{figure}[h!]
		\centering 
		\tikzset{every picture/.style={line width=0.75pt}} 

\begin{tikzpicture}[x=0.75pt,y=0.75pt,yscale=-1,xscale=1]

\draw  [color={rgb, 255:red, 74; green, 144; blue, 226 }  ,draw opacity=1 ][fill={rgb, 255:red, 74; green, 144; blue, 226 }  ,fill opacity=0.2 ] (375,60) -- (475,140) -- (395,140) -- (335,60) -- cycle ;
\draw  [color={rgb, 255:red, 208; green, 2; blue, 27 }  ,draw opacity=1 ][fill={rgb, 255:red, 208; green, 2; blue, 27 }  ,fill opacity=0.2 ] (455,60) -- (375,140) -- (335,140) -- (395,60) -- cycle ;
\draw [line width=0.75]    (315,60) -- (475,60) ;
\draw [line width=0.75]    (315,140) -- (475,140) ;
\draw [line width=0.75]    (375,135) -- (375,145) ;
\draw [line width=0.75]    (395,135) -- (395,145) ;
\draw [line width=0.75]    (415,135) -- (415,145) ;
\draw [line width=0.75]    (435,135) -- (435,145) ;
\draw [line width=0.75]    (455,135) -- (455,145) ;
\draw [line width=0.75]    (475,135) -- (475,145) ;
\draw [line width=0.75]    (315,135) -- (315,145) ;
\draw [line width=0.75]    (335,135) -- (335,145) ;
\draw [line width=0.75]    (355,135) -- (355,145) ;
\draw [line width=0.75]    (375,55) -- (375,65) ;
\draw [line width=0.75]    (395,55) -- (395,65) ;
\draw [line width=0.75]    (415,55) -- (415,65) ;
\draw [line width=0.75]    (435,55) -- (435,65) ;
\draw [line width=0.75]    (455,55) -- (455,65) ;
\draw [line width=0.75]    (475,55) -- (475,65) ;
\draw [line width=0.75]    (315,55) -- (315,65) ;
\draw [line width=0.75]    (335,55) -- (335,65) ;
\draw [line width=0.75]    (355,55) -- (355,65) ;
\draw  [color={rgb, 255:red, 74; green, 144; blue, 226 }  ,draw opacity=1 ][fill={rgb, 255:red, 74; green, 144; blue, 226 }  ,fill opacity=0.2 ] (255,60) -- (235,140) -- (175,140) -- (195,60) -- cycle ;
\draw  [color={rgb, 255:red, 208; green, 2; blue, 27 }  ,draw opacity=1 ][fill={rgb, 255:red, 208; green, 2; blue, 27 }  ,fill opacity=0.2 ] (215,60) -- (215,140) -- (155,140) -- (135,60) -- cycle ;
\draw [line width=0.75]    (115,60) -- (275,60) ;
\draw [line width=0.75]    (115,140) -- (275,140) ;
\draw [line width=0.75]    (175,135) -- (175,145) ;
\draw [line width=0.75]    (195,135) -- (195,145) ;
\draw [line width=0.75]    (215,135) -- (215,145) ;
\draw [line width=0.75]    (235,135) -- (235,145) ;
\draw [line width=0.75]    (255,135) -- (255,145) ;
\draw [line width=0.75]    (275,135) -- (275,145) ;
\draw [line width=0.75]    (115,135) -- (115,145) ;
\draw [line width=0.75]    (135,135) -- (135,145) ;
\draw [line width=0.75]    (155,135) -- (155,145) ;
\draw [line width=0.75]    (175,55) -- (175,65) ;
\draw [line width=0.75]    (195,55) -- (195,65) ;
\draw [line width=0.75]    (215,55) -- (215,65) ;
\draw [line width=0.75]    (235,55) -- (235,65) ;
\draw [line width=0.75]    (255,55) -- (255,65) ;
\draw [line width=0.75]    (275,55) -- (275,65) ;
\draw [line width=0.75]    (115,55) -- (115,65) ;
\draw [line width=0.75]    (135,55) -- (135,65) ;
\draw [line width=0.75]    (155,55) -- (155,65) ;

\end{tikzpicture}
		\caption{A representation of the two possible cases. In the first case, depicted in left, at least one vertex of a trapezoid is contained in the other. In the second case, in the right hand, the trapezoids intersect, but not in the vertices. }
		\label{fig:nontrapcarac}
	\end{figure}

Clearly, if the first case holds, then condition {1} is satisfied. Suppose then that there is no pair $\{v,\omega\}\not\in E$ such that $T_v \cap T_\omega \neq \emptyset$ satisfying the first case. Then necessarily the second case holds. Let $u$ be a node for which  exists $\omega \in V\setminus N(u)$ such that $T_u \cap T_w \neq \emptyset$. For all possible choices of $u$, let us pick the one such that $b_1(u)$ is the minimum. Then $u$ satisfies the following conditions:
\begin{itemize}
	\item[{(a)}] Exists a node $\omega\in V$ such that $\omega \notin N(v)$ and $T_{u}\cap T_\omega\neq\emptyset$
	\item[{(b)}] All nodes $\omega\in V$ such that $\omega \notin N(v)$ and $T_{u}\cap T_\omega\neq\emptyset$ satisfy that $t_2(\omega) < t_1(u)$ and $b_2(u)<b_1(\omega) $
	\item[{(c)}] None of the positions in $\{1,\dots, b_1(u)\}$ is occupied by a vertex of a node $\omega$ such that $\{u,\omega\}\notin E$ and $T_u \cap T_\omega  \neq \emptyset$. 
\end{itemize}

Observe that conditions {(a)} and {(b)} imply that $t_1(u) - f_t(u) > 0$, while condition {(c)} implies that $b_1(u) - f_b(u) = 0$. We deduce that condition {2} holds by $u$.
\end{proof}

We are now ready to define our protocol and main result regarding \trapezoid.

\begin{theorem}
	\label{theo:trapezoid}
There is a PLS for $\trapezoid$ using certificates on $\mathcal{O}(\log n)$ bits.
\end{theorem}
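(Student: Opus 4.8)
The plan is to have the prover equip every node $v$ with the four coordinates $(t_1(v),t_2(v),b_1(v),b_2(v))\in[2n]^4$ of a trapezoid, together with a spanning-tree certificate and the value $n$ (certified by the \spanningtree{} and \sizeofG{} subroutines of Section~\ref{sec:tool}). The verification then proceeds in three layers. First, each node locally checks that its coordinates are well formed ($t_1(v)<t_2(v)$, $b_1(v)<b_2(v)$, all in $[2n]$) and, after exchanging coordinates with its neighbours, that the model is \emph{semi-proper}: for every edge $uv$ the trapezoids $T_u$ and $T_v$ intersect. This layer is purely local and secures the ``edge $\Rightarrow$ intersection'' direction.

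Second, I would certify that the collection $\{(t_1(v),t_2(v),b_1(v),b_2(v))\}_{v\in V}$ is a genuine model, i.e. that on each of the two lines the $2n$ endpoints are exactly $[2n]$, with each position used exactly once. This global bijection is what makes the later counting meaningful, and it is certified on top of the spanning tree together with the verified value of $n$. This step is needed precisely so that Lemma~\ref{lem:nontrapcarac2}, which is stated only for honest semi-proper models, can legitimately be invoked in the soundness analysis.

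Third, and this is where Lemma~\ref{lem:nontrapcarac2} does the real work, I would reduce the globally unverifiable requirement ``non-adjacent nodes have disjoint trapezoids'' to the two locally checkable conditions of that lemma. Concretely, every node $v$ checks the failure of the first condition by verifying that every position in $[t_1(v),t_2(v)]$ and in $[b_1(v),b_2(v)]$ is an endpoint of $v$ itself or of one of its neighbours, so that no non-neighbour endpoint lies inside $T_v$; and it checks the failure of the second condition by verifying $f_t(v)=f_b(v)$, where $v$ evaluates $f_t(v)=(t_1(v)-1)-\#\{\text{neighbour top-endpoints}<t_1(v)\}$ and symmetrically $f_b(v)$. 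Because the endpoints are certified to be a permutation of $[2n]$, these expressions equal the true $f_t(v),f_b(v)$, so each check is local and faithful. For completeness, an honest proper model satisfies $f_t=f_b$ everywhere by Lemma~\ref{lem:trapcaract} and contains no non-neighbour endpoint inside any trapezoid (else two non-adjacent trapezoids would meet), so all checks pass. For soundness, if $G$ is not a trapezoid graph then either the prover's model is not semi-proper (caught in the first layer) or, by Lemma~\ref{lem:nontrapcarac2}, some node violates one of the two conditions and rejects. Every certificate is $\cO(\log n)$ bits.

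The step I expect to be the main obstacle is the second layer. Turning the global statement of Lemma~\ref{lem:nontrapcarac2} into sound local tests forces the endpoint labelings to be certified as true permutations of $[2n]$ on each line: a cheating prover could otherwise hide a non-neighbour endpoint by reusing a position, which would corrupt a node's local count for the first condition and make its computed values $f_t(v),f_b(v)$ unreliable. Achieving this global consistency with only $\cO(\log n)$-bit certificates — while keeping each node's evaluation of $f_t,f_b$ a function solely of its own and its neighbours' coordinates — is the delicate part; once it is in place, the remainder is a direct combination of Lemmas~\ref{lem:trapcaract} and~\ref{lem:nontrapcarac2}.
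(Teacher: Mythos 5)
Your overall architecture coincides with the paper's: certify a semi-proper model by purely local checks, then use Lemma~\ref{lem:nontrapcarac2} to replace the globally unverifiable requirement ``non-adjacent trapezoids are disjoint'' by the two conditions of that lemma, checking the first via ``every position inside my trapezoid is an endpoint of a neighbour'' and the second via $f_t(v)=f_b(v)$, computed exactly as you describe (your counting formula is the right one once the endpoints are known to be a permutation of $[2n]$). The gap is your second layer, and it is a genuine one: you assert that the bijection between the endpoints and $[2n]$ on each line ``is certified on top of the spanning tree together with the verified value of $n$,'' but no such generic certification is given, and none exists as a black box. A spanning tree lets you certify aggregates such as counts or sums, not that $2n$ values scattered over the nodes are pairwise distinct and cover $[2n]$: a cheating prover can reuse positions, and no single node's one-round view, nor any $\cO(\log n)$-bit tree-aggregated quantity you have specified, detects this. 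Since (as you yourself note) Lemma~\ref{lem:nontrapcarac2} may only be invoked for genuine semi-proper models, the soundness argument collapses at precisely this point.

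The paper closes this gap with a mechanism that exploits the geometry rather than the spanning tree. The prover additionally certifies paths $P_t$ and $P_b$ in $G$ (via \stpath) between the node holding position $1$ and the node holding position $2n$ on each line, and gives each node $v$ pointers $p_v$ and $q_v$: the minimum position greater than $t_1(v)$ (resp.\ $b_1(v)$) that is \emph{not} a vertex of a neighbour of $v$. The local checks require $t_2(v)<p_v$ and $b_2(v)<q_v$, and that whenever a neighbour $\omega$ of $v$ has $p_\omega<t_2(v)$, the position $p_\omega$ is a vertex of some other neighbour of $v$. Soundness of the covering then follows from a maximality argument: if some position $j$ is an endpoint of no node, then (by the interior-positions check) no trapezoid contains $j$, so along the certified path $P_t$ there must be adjacent nodes $u,v$ with $t_2(u)<j<t_1(v)$; choosing such a pair with $t_2(u)$ maximum, node $v$'s check on $p_u$ fails, because $p_u\le j$, and any neighbour $\omega$ of $v$ with $t_1(\omega)=p_u<j$ would satisfy $t_2(\omega)<j$, contradicting the maximality of $t_2(u)$. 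Once the $2n$ endpoint slots are forced to cover all of $[2n]$, they are automatically distinct, which is exactly the bijection your layer two needs. This path-plus-pointer construction (or an equivalent) is the missing idea; without it your protocol is not sound.
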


\begin{protocol}
	\label{proto:trapezoid}

The following is a one-round proof labeling scheme for \trapezoid\:\\

Given an instance $\langle G = (V,E), \id \rangle$, the certificate provided by the prover to node $v \in V$ is interpreted as follows.
		\begin{enumerate}
			\item  The number of nodes $n(G)$.
			\item  The vertices $b_1(v), b_2(v) , t_1(v), t_2(v)\in [2n]$ of the trapezoid $T_v$,  such that $b_1(v) < b_2(v)$ and $t_1(v) < t_2(v)$. 
			\item  Minimum position $p_v\in [n]$  in the upper line greater that $t_1(v)$ that is not a vertex of a neighbour of $v$.
			\item  Minimum position $q_v\in [n]$ in the lower line greater than $b_1(v)$ that is not a vertex of a neighbour of $v$.
			\item Paths $P_t$ and $P_b$  between the nodes with vertices $1$ and $2n$ in the upper and lower line, respectively.

\end{enumerate}

Then, in the verification round, each node shares with its neighbors their certificates. Using that information each node $v$ can compute $f_t(v)$ and $f_b(v)$, and check the following conditions: 

\begin{enumerate}[label=(\alph*)]
	\item The correctness of the value of $n$, according to protocol for $\sizeofG$.
	\item The correctness of the paths  $P_b$ and $P_t$, according to protocol for $\stpath$.
	\item The vertices of the trapezoid of $v$ are in $[2n]$.
	\item $T_v\cap T_\omega\neq\emptyset$ for all $\omega\in N(v)$. 
	\item All values in $[t_1(v)+1, t_2(v)-1]_\N$ and $[b_1(v)+1, b_2(v)-1]_\N$ are vertex of some neighbour of $v$.

	\item $t_2(v)<p_v$ and $b_2(v)< q_v$.
	\item If $\omega\in N(v)$ and $p_\omega<t_2(v)$, then $v$ verifies that $p_\omega$ is a vertex of some other neighbour.
	\item If $\omega\in N(v)$ and $q_\omega<b_2(v)$, then $v$ verifies that $q_\omega$ is a vertex of some other neighbour.
	\item $ f_b(v) = f_t(v)$.

\end{enumerate}

 \end{protocol}

We now analyze the soundness and completeness of our protocol.  \\

\noindent\textbf{Completeness.} Suppose that $G$ is a trapezoid graph. An honest prover just has to send the real number of nodes $n$, a trapezoid model $\{T_v\}_{v\in V}$ of G and valid paths $P_b$ and $P_t$ according the trapezoid model. Then, the nodes will verify ${(a)}$, ${(b)}$ by the completeness of the protocols for $\sizeofG$ and $\stpath$. Conditions ${(c)}$, ${(d)}$, ${(e)}$ ,${(f)}$, ${(g)}$ and ${(h)}$ are verified by the correctness of the model  $\{T_v\}_{v\in V}$. Condition ${(i)}$ is also verified, by Lemma~\ref{lem:trapcaract}.\\
    
\noindent \textbf{Soundness.} Suppose $G$ is not a trapezoid graph. If a dishonest prover provides a wrong value of $n$, or wrong paths $P_t$ or $P_b$, then at least one node will reject verifying {a} or {b}. Then, we assume that the prover cannot cheat on these values.  
    
    Suppose that the prover gives values $\{T_v\}_{v\in V}$ such that $\bigcup_{v\in V}\{t_1(v), t_2(v)\}\neq [2n]$. If some vertex of a node is not in the set $[2n]$, then that node fails to verify condition $(c)$ and rejects.  Otherwise, there exists $j \in [2n]$ such that $t_1(v), t_2(v) \neq j$, for every $v\in V$. If a node $\omega$ satisfies that $t_1(\omega) < j < t_2(\omega)$, then node $\omega$ fails to verify condition $(e)$ and rejects. Then $j$ is not contained in any trapezoid. As $P_t$ is correct, $j$ must be different than $1$ and $2n$. Also by the correctness of $P_t$, there exists a pair of adjacent nodes $u,v \in V$ such that $t_2(u) < j < t_1(v)$.  From all possible choices for $u$ and $v$, we pick the one such that $t_2(u)$ is the maximum. We claim that $v$ fails to check condition $(g)$. Since $j$ is not a vertex of any node, then $p_u \leq j$. If $v$ verifies condition $(g)$., then necessarily $p_u < j$. Then, there must exist a node $\omega \in N(v)$ such that $p_u = t_1(\omega)$. But since we are assuming that $j$ is not contained in any trapezoid, we have that $t_2(\omega) < j$, contradicting the choice of $u$. Then the prover needs to send values $\{T_v\}_{v\in V}$ such that 
    
    The same argument prove that $\bigcup_{v\in V}\{t_1(v), t_2(v)\}= [2n]$. The same argument also prove that $\bigcup_{v\in V}\{b_1(v), b_2(v)\} = [2n]$.
  
    Therefore, if conditions  $(a)$ - $(h)$ are verified, we can assume that the nodes are given a semi-proper trapezoid model of $G$. Since we are assuming that $G$ is not a trapezoid graph, by \Cref{lem:nontrapcarac2} we deduce that condition $(i)$ cannot be satisfied and some node rejects.     
  
  We now analyse the proof-size of the protocol:  the certification size for the number of nodes $n(G)$ and the paths constructed is $\cO(\log n)$, given by  \Cref{prop:numbernode} and \Cref{prop:stpath}. On the other hand, for each $v\in V$, the values $b_1(v)$, $b_2(v)$, $t_1(v)$, $t_2(v)$, $p_v$, $q_v$ are computable in $\cO(\log n)$ bits as all values are in $[2n]$. Overall the total communication is $\cO(\log n)$. \\
  
  Now, we can use the above protocol to recognize \permutation, the class of permutation graphs. A graph is said to be a permutation graph if there exists a collection of points $\{\ell_1(v)\}_{v\in V}$ and $\{\ell_2(v)\}_{v\in V}$ inscribed in two parallel lines such that $\{u,v\}\in E$ iff $(\ell_1(v)-\ell_1(u))~(\ell_2(v)-\ell_2(u))~<~0$. This means that the line with extremes $\ell_1(v)$ and $\ell_2(v)$ must cross the line with extremes $\ell_1(u)$ and $\ell_2(u)$, as in Figure~\ref{fig:Expermutation}.
  
  By the same argument of the trapezoid model, we can enumerate the points in both lines, so without loss of generality we can assume the collections $\{\ell_1(v)\}_{v\in V}$ and $\{\ell_2(v)\}_{v\in V}$ are permutations from $V$ to $[n]$. We say that such collection $\{\ell_1(v), \ell_2(v)\}_{v\in V}$ is a proper permutation model of $G$.
    
Let $G=(V,E)$ be a graph. We define a trapezoid model ${T_v}_{v\in V}$ of $G$ to be {\it consecutive} if for every $v\in V$, we have $b_2(v) = b_1(v)+1$ and $t_2(v) = t_1(v)+1$. We denote the class of graphs that have a consecutive trapezoid model by {\sc ConTrapezoid}.
  
  \begin{proposition}
  	\label{prop:consecutive}
  {\sc ConTrapezoid = \permutation}.
  \end{proposition}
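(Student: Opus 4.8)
The plan is to prove the two inclusions $\permutation \subseteq \textsc{ConTrapezoid}$ and $\textsc{ConTrapezoid} \subseteq \permutation$ by exhibiting an explicit, adjacency-preserving dictionary between proper permutation models and consecutive trapezoid models. The starting observation is that in any consecutive model the pairs $\{t_1(v),t_1(v)+1\}_{v\in V}$ are pairwise disjoint and together fill $[2n]$, so they must be exactly the blocks $\{1,2\},\{3,4\},\dots,\{2n-1,2n\}$. Consequently every $t_1(v)$ is odd and $\tau(v):=(t_1(v)+1)/2$ is a bijection $V\to[n]$ recording the top slot of $v$; the same holds on the bottom line, giving a bijection $\beta$. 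Conversely, from a proper permutation model $\{\ell_1(v),\ell_2(v)\}_{v\in V}$ one builds a consecutive trapezoid model by setting $t_1(v)=2\ell_1(v)-1$, $t_2(v)=2\ell_1(v)$, $b_1(v)=2\ell_2(v)-1$, $b_2(v)=2\ell_2(v)$. Thus it suffices to show that, under the dictionary $\ell_1\leftrightarrow\tau$ and $\ell_2\leftrightarrow\beta$, the two notions of adjacency coincide.

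The crux is the geometric claim that two consecutive trapezoids $T_u,T_v$ have nonempty intersection if and only if their top and bottom slot orders disagree, that is $(\tau(u)-\tau(v))(\beta(u)-\beta(v))<0$. This is precisely the permutation-graph adjacency condition $(\ell_1(v)-\ell_1(u))(\ell_2(v)-\ell_2(u))<0$ under the dictionary, so once the claim is established both inclusions follow at once. To prove it, assume without loss of generality $\tau(u)<\tau(v)$ (possible since $\tau$ is a bijection) and compare the right boundary of $T_u$, the segment joining $t_2(u)$ on the top line to $b_2(u)$ on the bottom line, with the left boundary of $T_v$, joining $t_1(v)$ to $b_1(v)$. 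On the top line $t_2(u)=2\tau(u)<2\tau(v)-1=t_1(v)$. If moreover $\beta(u)<\beta(v)$, then on the bottom line $b_2(u)=2\beta(u)<2\beta(v)-1=b_1(v)$ as well, so the left boundary of $T_v$ lies strictly to the right of the right boundary of $T_u$ at both parallel lines; since both segments vary affinely with height, this inequality persists throughout the strip, the boundaries never meet, and the trapezoids are separated. If instead $\beta(u)>\beta(v)$, then $b_2(u)=2\beta(u)>2\beta(v)-1=b_1(v)$, so the two boundaries are on opposite sides at the top and at the bottom; hence they cross, forcing $T_u\cap T_v\neq\emptyset$.

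The step I expect to require the most care is the separation argument in the non-adjacent case: one must justify that for trapezoids inscribed between two parallel lines, having the inner boundary segments ordered consistently at both lines implies global disjointness, and in particular that no containment can occur. The consecutive structure makes this clean, since the slots are pairwise disjoint, so no trapezoid can be nested in another and the only way to intersect is an honest crossing of the inner boundaries; nevertheless this is where the convexity of the trapezoids and the affineness of a boundary segment in the height coordinate must be invoked explicitly. With the claim in hand, the remaining bookkeeping is routine: applying the dictionary to a permutation model yields a valid consecutive trapezoid model realizing the same graph (giving $\permutation\subseteq\textsc{ConTrapezoid}$), and applying the inverse dictionary to a consecutive trapezoid model yields a permutation model realizing the same graph (giving $\textsc{ConTrapezoid}\subseteq\permutation$), which together establish the equality.
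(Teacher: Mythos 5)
Your proof is correct and takes essentially the same route as the paper: both rest on the doubling/halving dictionary $t_1(v)=2\ell(v)-1$, $t_2(v)=2\ell(v)$ between permutation positions and consecutive slots (your swap of which line carries $\ell_1$ versus $\ell_2$ is immaterial by symmetry of the crossing condition). The only difference is that the paper dismisses the adjacency equivalence as clear, whereas you explicitly prove the crossing/separation claim via affineness of the boundary segments — a welcome filling-in of detail, not a different approach.
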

\begin{proof}
	Let $G\in$ {\sc ConTrapezoid} and $T_v$ its consecutive trapezoid model. By definition, for all $v\in V$, exists $i_v,j_v\in \Z_n$ such that $t_1(v) =2i_v+1$, $b_1(v) = 2j_v+1$ and no pair of nodes have the same values $i_v,j_v$ defining its vertices $t_1$ and $b_1$. Then, we can define the functions $\ell_1$ and $\ell_2$ such that $\ell_1(v) = i_v$ and $\ell_2v) = j_v$, respectively. This clearly represents a valid permutation model of $G$ given that $\{T_v\}_{v\in V}$ its a valid consecutive trapezoid model.
	
	In the same way, if $G$ it is a permutation graph and $\{\ell_1(v),\ell_2(v)\}_{v\in V}$ its permutation model, then if we define $\{T_v\}_{v\in V}$ such that $t_1(v) = 2\ell_2(v)-1 $, $t_2(v) = 2\ell_2(v)$, $b_1(v) = 2\ell_1(v)-1$ and $b_2(v) = 2\ell_1(v)$, this is a valid consecutive trapezoid model of $G$.
\end{proof}

 Using \Cref{prop:consecutive}, it is straightforward to modify the protocol described in \Cref{proto:trapezoid} by having the prover send consecutive vertices and allowing the nodes to check them accordingly to recognize the class of permutation graphs.
  
  \begin{corollary}
  There is a PLS for $\permutation$ with proof-size of $\mathcal{O}(\log n)$ bits.
  \end{corollary}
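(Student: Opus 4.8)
The plan is to obtain the permutation PLS as a light modification of the trapezoid scheme of Protocol~\ref{proto:trapezoid}, exploiting the equivalence {\sc ConTrapezoid} $=$ \permutation\ established in Proposition~\ref{prop:consecutive}. Concretely, I would run exactly Protocol~\ref{proto:trapezoid}, but force the prover to hand out a \emph{consecutive} trapezoid model: the four vertices of each node must satisfy $t_2(v) = t_1(v)+1$ and $b_2(v) = b_1(v)+1$. Each node then performs all the checks $(a)$--$(i)$ of the trapezoid protocol and, in addition, the local check that its own two top vertices and its own two bottom vertices are consecutive. Since every coordinate is still a distinct element of $[2n]$ and each node occupies a consecutive pair on each line, the certificate size is unchanged at $\cO(\log n)$, so the only thing left is to verify correctness.

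For completeness, suppose $G$ is a permutation graph. By Proposition~\ref{prop:consecutive}, $G\in$ {\sc ConTrapezoid}, so it admits a consecutive trapezoid model, which is in particular a proper trapezoid model. The honest prover supplies this model together with the size $n$ and the paths $P_t,P_b$. Then checks $(a)$--$(i)$ pass exactly as in the completeness part of Theorem~\ref{theo:trapezoid} --- note that $(e)$ holds vacuously, since for a consecutive model the ranges $[t_1(v)+1,t_2(v)-1]_\N$ and $[b_1(v)+1,b_2(v)-1]_\N$ are empty --- and the new consecutiveness check passes by construction, so every node accepts.

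For soundness, I would assume every node accepts and show $G\in$ \permutation. The consecutiveness check forces $t_2(v)=t_1(v)+1$ and $b_2(v)=b_1(v)+1$ for every $v$. Exactly as in the soundness argument of Theorem~\ref{theo:trapezoid}, checks $(a)$--$(h)$ guarantee that the certified configuration is a \emph{semi-proper} trapezoid model of $G$. Now I invoke Lemma~\ref{lem:nontrapcarac2}: its proof in fact shows that any semi-proper model containing a non-adjacent intersecting pair must satisfy condition $1$ or condition $2$. Condition $1$ asks for some $v$ whose closed intervals $\{b_1(v),\dots,b_2(v)\}$ or $\{t_1(v),\dots,t_2(v)\}$ contain a vertex of a non-neighbour; in a consecutive model these intervals consist only of $v$'s own two vertices, which, being distinct from all other coordinates, cannot belong to any other node, so condition $1$ is impossible. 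Check $(i)$ enforces $f_b(v)=f_t(v)$ for all $v$, so condition $2$ also fails. Hence the semi-proper model has no non-adjacent intersecting pair, i.e. it is a \emph{proper} consecutive trapezoid model, so $G\in$ {\sc ConTrapezoid}, and by Proposition~\ref{prop:consecutive}, $G$ is a permutation graph.

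The main obstacle, and the only place the argument goes beyond a verbatim reuse of Theorem~\ref{theo:trapezoid}, is the soundness case in which $G$ is a trapezoid graph but \emph{not} a permutation graph: here all of $(a)$--$(i)$ can pass, and one must still derive a contradiction. This is precisely what the refinement of Lemma~\ref{lem:nontrapcarac2} above handles --- rather than using the lemma as stated (which assumes $G$ is not a trapezoid graph), I would use the stronger statement extracted from its proof, namely that a semi-proper model violating neither condition $1$ nor condition $2$ is automatically proper. Everything else (the paths $P_t,P_b$, the size check, and the coordinate bookkeeping covering $[2n]$) is inherited unchanged from Protocol~\ref{proto:trapezoid}, so no genuinely new idea is needed there.
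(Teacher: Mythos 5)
Your proposal is correct and follows essentially the same route as the paper, which disposes of this corollary in a single sentence: run Protocol~\ref{proto:trapezoid} with the prover forced to supply a consecutive trapezoid model and the nodes additionally checking consecutiveness, then conclude via Proposition~\ref{prop:consecutive}. Your extra care in the soundness case where $G$ is a trapezoid but not a permutation graph --- observing that the proof of Lemma~\ref{lem:nontrapcarac2} only uses the existence of a non-adjacent intersecting pair, so a semi-proper model violating neither condition is automatically proper --- is a valid reading of that proof and correctly fills in the detail the paper dismisses as ``straightforward.''
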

  
\section{Lower Bounds}\label{sec:lower}

In this section, logarithmic lower bounds are given in the certificate sizes of any PLS that recognizes the class of interval, circular-arc, chordal, permutation, and trapezoid graphs. In order to do so, two techniques are used, each one explained in \Cref{subsec:intervallower} and \Cref{subsec:trapelower}, respectively.

  \subsection{Interval, Chordal and Circular Arc Graphs}
  \label{subsec:intervallower}
  
  To prove the lower bound for interval, circular-arc and chordal graphs, we adapt a construction by Göös and Soumela for locally checkable proofs \cite{goos2016locally} where the main idea is to construct a collection of yes-instances of each class and a no-instance that will be indistinguishable from the yes-instances if we assume there exists a PLS with proof-size of $o(\log n)$ bits to recognize each class, giving a contradiction.

Before giving our lower bounds, we need to define a combinatorial result. A hyper-graph is a generalization of a graph, where each hyper-edge is a subset of nodes. An $r$-uniform hyper-graph is a hyper-graph where each hyper-edge has the same cardinality $r$. A graph $G$ is simply a 2-uniform hyper-graph. 
Let $K^{(r)(\ell)}$ be an $r$-uniform hyper-graph, where we can split the set of nodes into $r$ parts, with size $\ell$ each, such that we have a hyper-edge for any selection of elements from each of the $r$ different parts, 
In 1964, Erd{\"o}s showed the following result.

\begin{proposition} \cite{erdos1964extremal}
	\label{lem:erdos}
	Let $G$ be an $r$-uniform hyper-graph. If $G$ does not contain a $K^{(r)(\ell)}$ as a subgraph, then $|E(G)|\leq n^{1-1/\ell^{r-1}}$.
\end{proposition}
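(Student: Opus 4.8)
This is the classical extremal bound of Erdős for complete $r$-partite $r$-uniform hypergraphs, i.e.\ the hypergraph generalization of the Kővári--Sós--Turán theorem, so the plan is to recover it by double counting complete sub-configurations together with a convexity (Jensen) argument, carried out by induction on the uniformity $r$. The target exponent is the standard one, $r-1/\ell^{r-1}$: for $r=2$ this is the Kővári--Sós--Turán bound $\cO(n^{2-1/\ell})$ for $K_{\ell,\ell}$-free graphs, and the induction will propagate the exponent $1/\ell^{r-2}$ up to $1/\ell^{r-1}$ as $r$ increases by one.

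First I would settle the base case $r=2$, which is exactly Kővári--Sós--Turán. For a graph $G$ with no $K_{\ell,\ell}$, I count the pairs $(v,L)$ where $v$ is a vertex and $L$ is an $\ell$-subset of its neighbourhood. Summing over $v$ gives $\sum_{v}\binom{d(v)}{\ell}$, while the absence of $K_{\ell,\ell}$ forces every $\ell$-set of vertices to have at most $\ell-1$ common neighbours, so the same count is at most $(\ell-1)\binom{n}{\ell}$. Applying Jensen's inequality to the convex map $x\mapsto\binom{x}{\ell}$, evaluated at the average degree $\bar d=2|E(G)|/n$, yields $n\binom{\bar d}{\ell}\le(\ell-1)\binom{n}{\ell}$, and solving for $|E(G)|$ gives $\cO(n^{2-1/\ell})$.

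For the inductive step I would pass to vertex \emph{links}: for each vertex $x$, its link $G_x$ is the $(r-1)$-uniform hypergraph whose edges are the sets $e$ with $e\cup\{x\}\in E(G)$, so that $d(x)=|E(G_x)|$. The key observation is that if $\ell$ distinct vertices all contain one common copy of $K^{(r-1)(\ell)}$ in their links, then those $\ell$ vertices, taken as an $r$-th part together with the $r-1$ parts of that copy, form a $K^{(r)(\ell)}$ in $G$; hence, under the hypothesis, every fixed copy of $K^{(r-1)(\ell)}$ occurs in at most $\ell-1$ links. Double counting $T=\sum_{x}N_{r-1}(G_x)$, where $N_{r-1}(G_x)$ is the number of copies of $K^{(r-1)(\ell)}$ in $G_x$, then bounds $T$ from above by $\cO(n^{(r-1)\ell})$, since each of the $\cO(n^{(r-1)\ell})$ possible copies is counted a bounded number of times. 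To bound $T$ from below I would first discard the vertices of small degree (whose edges total only $\cO(n^{r-1/\ell^{r-2}})$, a lower-order contribution), and for the remaining high-degree links invoke the \emph{supersaturation} form of the inductive hypothesis, which guarantees $N_{r-1}(G_x)\gtrsim (d(x)/n^{r-1})^{\ell^{r-1}}n^{(r-1)\ell}$. Applying convexity to the degree sequence, $\sum_x d(x)^{\ell^{r-1}}\ge n\,\bar d^{\,\ell^{r-1}}$ with $\bar d\approx r|E(G)|/n$, and comparing the two estimates of $T$ gives $(|E(G)|/n)^{\ell^{r-1}}\lesssim n^{(r-1)\ell^{r-1}-1}$, i.e.\ $|E(G)|\le \cO(n^{r-1/\ell^{r-1}})$, closing the induction.

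The step I expect to be the main obstacle is the bookkeeping around \emph{disjointness} and the \emph{exact exponent}. A combinatorial box of edges only counts as a genuine $K^{(r)(\ell)}$ when its $r$ parts are pairwise disjoint, so the counting must carefully discard the bounded number of degenerate configurations in which a counted vertex lies inside one of the parts; these contribute only lower-order terms but have to be accounted for explicitly. Likewise, the supersaturation bound and the degree truncation must be calibrated so that the threshold between ``low'' and ``high'' degree sits just below the $(r-1)$-extremal value $n^{(r-1)-1/\ell^{r-2}}$, and I would track the implied constant $C_{r,\ell}$ through the recursion to confirm that the exponent lands exactly at $r-1/\ell^{r-1}$.
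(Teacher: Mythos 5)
The paper offers no proof of this proposition at all: it is imported verbatim from Erd\H{o}s \cite{erdos1964extremal}, so the only meaningful comparison is with Erd\H{o}s's classical argument. Two things you got right are worth recording. First, the exponent in the paper's statement is a typo: as written, $|E(G)|\leq n^{1-1/\ell^{r-1}}$ is false (a partial Steiner triple system is $K^{(3)}(2)$-free with $\Theta(n^2)$ edges), and both Erd\H{o}s's theorem and the application in \Cref{subsec:intervallower} (where $r=4$, $\ell=2$, and the monochromatic hypergraph has $n^4/2^K=\omega(n^{4-1/8})$ edges) require the bound $n^{r-1/\ell^{r-1}}$, which is exactly the exponent you target. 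Second, your base case $r=2$ is the correct K\H{o}v\'ari--S\'os--Tur\'an double count with Jensen, which is also how Erd\H{o}s's induction starts.

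The genuine gap is in the logical structure of your induction. You prove only the extremal bound at level $r$, yet the inductive step invokes the \emph{supersaturation} bound $N_{r-1}(G_x)\gtrsim \bigl(d(x)/n^{r-1}\bigr)^{\ell^{r-1}}n^{(r-1)\ell}$ at level $r-1$; this counting statement is strictly stronger than your inductive hypothesis (an extremal bound says nothing about the number of copies above the threshold) and is not a black-box consequence of it. To close the argument you must either (i) strengthen the statement you induct on to the counting form itself, i.e.\ prove by iterated Jensen that every $(r-1)$-uniform hypergraph with $m$ edges contains $\Omega\bigl((m/n^{r-1})^{\ell^{r-1}}n^{(r-1)\ell}\bigr)$ copies of $K^{(r-1)}(\ell)$ up to degenerate copies, and read off the extremal bound by forcing the count below $1$; or (ii) follow Erd\H{o}s's original route, which needs no supersaturation: by convexity $\sum_S\binom{d(S)}{\ell}$, summed over $(r-1)$-sets $S$ with $d(S)$ completions, is large, so by pigeonhole some $\ell$-set $L$ of vertices has a common link with more than $n^{(r-1)-1/\ell^{r-2}}$ edges; the plain inductive hypothesis then yields a $K^{(r-1)}(\ell)$ inside that common link, and since no edge of the common link can meet $L$ (an edge $S$ with $v\in S\cap L$ would make $S\cup\{v\}$ have size $r-1$), it combines with $L$ into a $K^{(r)}(\ell)$. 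Note also that the obstacle you flagged as the main one---disjointness bookkeeping---is actually benign: a vertex never lies in an edge of its own link, so any copy shared by $\ell$ links automatically has its vertex set disjoint from those $\ell$ vertices; the step that genuinely needs repair is the supersaturation one.
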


We are now ready to show our result.

  \begin{theorem}
  	Any PLS that recognizes \interval, \chordal{ } and \circularArc{ } needs a proof-size of $\Omega(\log n)$ bits.
  \end{theorem}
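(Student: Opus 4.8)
The plan is to establish the lower bound via a single construction that simultaneously witnesses all three classes, relying on the counting result of Proposition~\ref{lem:erdos}. The core idea, following Göös and Suomela, is to build a large family of \emph{yes-instances} together with one \emph{no-instance} such that, if certificates had size $o(\log n)$, two distinct yes-instances would be forced to share certificate/identifier patterns on a cut, allowing us to ``splice'' them into the no-instance while keeping every node locally satisfied --- contradicting soundness. First I would fix a base gadget that is an interval graph (hence also chordal and circular-arc) and that contains a distinguished set of $2k$ \emph{ports} arranged so that the adjacency across a bipartite ``middle'' region encodes an arbitrary perfect matching (or, more strongly, an arbitrary bijection) between the left ports and the right ports. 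Concretely, I would let the yes-instances be indexed by permutations $\sigma \in S_k$, where instance $G_\sigma$ realizes the matching $i \mapsto \sigma(i)$ through a region of the interval representation in which the connection pattern is geometrically forced.

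Next I would define the no-instance $G_{\mathrm{bad}}$ by taking two halves coming from \emph{different} permutations $\sigma \neq \tau$ and gluing them, producing a long induced cycle (or a forbidden configuration such as an asteroidal triple / induced $C_k$ with $k \ge 4$) that leaves all three classes. The key observation is that a PLS is a \emph{local} object: a node's accept/reject decision depends only on its own certificate, its identifier, and those of its neighbors. Therefore, to derive a contradiction I would argue by a counting/pigeonhole step. Consider the middle cut separating left ports from right ports. Each yes-instance $G_\sigma$, when accepted by the honest prover, induces an assignment of certificates to the $O(k)$ cut vertices; with proof-size $s(n) = o(\log n)$ and identifiers in $\{1,\dots,n^c\}$, the number of distinct possible certificate-and-identifier profiles on the cut is $2^{O(k\, s(n))}$, which I will arrange to be strictly smaller than the number $k!$ of permutations. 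Hence two permutations $\sigma \neq \tau$ yield identical local views on every cut vertex and its neighborhood.

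The hybrid argument then proceeds by taking the left half of the accepting assignment for $G_\sigma$ and the right half for $G_\tau$; because the two assignments agree on the cut neighborhoods, every node sees exactly the neighborhood it saw in an accepting run, so every node accepts. But the resulting spliced graph is $G_{\mathrm{bad}}$, a no-instance, contradicting soundness. To make the counting work I must balance parameters: with $n = \Theta(k \log k)$ vertices (so that identifiers fit in $O(\log n)$ bits and the gadget is realizable with $n$ nodes), the requirement $k \cdot s(n) \le \log_2(k!) - 1 = \Omega(k \log k)$ forces $s(n) = \Omega(\log k) = \Omega(\log n)$, the desired bound. Here is precisely where Proposition~\ref{lem:erdos} enters: to guarantee that enough \emph{distinct} matchings give rise to genuinely different interval graphs --- and that the splicing truly produces a forbidden structure rather than accidentally remaining in the class --- I would use the extremal bound to ensure the middle bipartite region cannot contain a large complete bipartite sub-pattern, so that distinct permutations are not collapsed and the adjacency is ``rigidly'' encoded.

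The main obstacle I anticipate is the \emph{geometric rigidity} of the encoding: I must design the gadget so that each permutation $\sigma$ is realizable as a genuine interval (and circular-arc, and chordal) graph, yet the cut between the two halves has bounded size $O(k)$ while still forcing distinct permutations to produce distinguishable graphs and forcing the spliced graph out of the class. Proving that the honest assignment on $G_\sigma$ uses only local information on the cut --- and that the spliced certificate assignment is consistent enough that \emph{every} node accepts --- is delicate, because chordal and circular-arc graphs admit flexible representations, so I must verify that no clever certificate can exploit global structure to detect the splice across an $o(\log n)$-bit cut. I expect the interval case to be cleanest (handled directly by the path-like clique structure), the chordal case to follow since interval graphs are chordal and the no-instance remains non-chordal, and the circular-arc case to require checking that wrapping the construction around a circle does not accidentally make $G_{\mathrm{bad}}$ a circular-arc graph; this last verification, together with tuning $k$ and $n$ so the counting is tight, is where the real work lies.
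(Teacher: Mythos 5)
Your high-level framework --- a G\"o\"os--Suomela-style family of yes-instances, a pigeonhole argument over short certificates, and a recombined no-instance --- is indeed the paper's strategy, but the central splicing step of your proposal does not work, and it is exactly where the substance of the proof lies. In your construction the structure that varies from instance to instance is the set of edges crossing the cut: $G_\sigma$ and $G_\tau$ have identical left and right halves and differ only in the matching edges. Any hybrid graph must therefore commit to some set of crossing edges, and a port incident to a rewired edge sees the certificate of a \emph{different} port than it saw in any accepting run (and, in the model where neighbours' identifiers are also exchanged, a different identifier, which makes the rewiring detectable outright). The collision your counting produces --- equality of the certificate profiles of the $\sigma$-run and the $\tau$-run, port by port, \emph{across} the two runs --- is not what is needed: you would need equality of certificates across \emph{different} ports \emph{within} a run (e.g.\ that the certificate of right port $\rho(i)$ equals that of right port $\sigma(i)$), and nothing forces the honest prover's assignments to have such internal collisions. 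Put differently, for every left port to accept, the hybrid matching must agree with $\sigma$; for every right port to accept, it must agree with $\tau$; this is impossible when $\sigma\neq\tau$. There is also a structural tension you do not resolve: if in your gadget every perfect matching yields a member of the class, then any hybrid in which all ports keep degree one is again a yes-instance and no contradiction arises, while if only some matchings do, you still owe the proof that your particular $G_{\mathrm{bad}}$ is locally indistinguishable --- the step that fails above. Finally, your appeal to Proposition~\ref{lem:erdos} is misdirected: you invoke it in contrapositive form to \emph{exclude} complete bipartite patterns, whereas its actual role is the opposite, namely a density statement guaranteeing that a large set of hyperedges must \emph{contain} a complete multipartite sub-hypergraph.

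The paper's proof avoids your difficulty by never rewiring edges incident to information-carrying vertices. Its yes-instances are $G(F_a,F_b,c,d)$: two interval graphs $F_a,F_b$ taken from label-disjoint families, joined through a constant-size bridge $\{y_A,y_B,c,d\}$ forming a $K_4$, so the entire cut carries only an $O(1)$-tuple of short certificates. Each hyperedge $\{a,b,c,d\}$ of the complete $4$-partite $4$-uniform hypergraph on $A\cup B\cup C\cup D$ is coloured by the bridge certificate of an accepting run; pigeonhole gives a monochromatic set of roughly $n^4/2^{K}$ hyperedges, and Proposition~\ref{lem:erdos} (used in the ``dense implies contains $K^{(4)}(2)$'' direction) extracts a monochromatic $K^{(4)}(2)$ on nodes $\{a_i,b_i,c_i,d_i\}_{i=1,2}$. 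The no-instance is then built from $F_{a_1},F_{b_1},F_{a_2},F_{b_2}$ with two bridges whose attachments to $d_1$ and $d_2$ are crossed; this creates an induced $6$-cycle through the bridge nodes, so the graph is neither interval, nor chordal, nor circular-arc, yet every node's local view (certificates and identifiers alike) coincides with its view in one of the sixteen accepting instances provided by the monochromatic structure, and all nodes accept --- the contradiction. If you want something closer in spirit to your port-matching idea, the correct tool is the crossing-edge technique of Theorem~\ref{teo:lowerbound}, which the paper uses for \permutation\ and \trapezoid: there one crosses pairs of edges whose endpoint certificates coincide inside a single instance, rather than splicing two different instances along a cut whose edges are themselves the varying data.
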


\begin{proof}

  Consider $n$ to be even and  $A$ to be a partition of $[1,n^2]$ into $n$ sets of size $n$, and $B$ as a partition of $[n^2+1 , 2n^2]$ in a similar manner. Let $\mathcal{G}$ be a family of $n$-node graphs satisfying the property $\mathrm{P}$ of belonging to the classes above mentioned. 
  
  Set $\mathcal{G}_A $ and $\mathcal{G}_B$ to be the set of labeled graphs in $\mathcal{G}$, with label sets picked from $A$ and $B$ respectively. Let $F_a$ be a graph in $\mathcal{G}_A$ and $F_b$ a graph from $\mathcal{G}_B$. Finally consider two disjoint sets $C$ and $D$  in $[2n^2+1, 3n^2]$ of size $n$.
  
  For $(F_a,F_b,c,d) \in \mathcal{G}_A\times \mathcal{G}_B\times C\times D$ let $G(F_a,F_b,c,d)$ be the graph defined by the disjoint union of graphs  $F_a$ and $F_b$ plus four additional nodes $y_A,  y_B, c, d$. The nodes $y_A, y_B$ are labelled with different numbers in a set from $ [2n^2 +1, 3n^2]$ disjoint from $C$ and $D$. The nodes $y_A$ , $y_B$ , $c$ and $d$ form a clique $K_4$, while the nodes $y_A$ is connected to some node $v_a$ in $F_a$. Node $y_B$ is similarly adjacent to some node $v_b$ in $F_b$.  Observe that all nodes in $F_a$ communicate with $F_b$ only through the nodes $ y_A, y_B, c$ and $d$.
  
  \begin{figure}[!ht]
  	\begin{center}
  		\includegraphics[width=0.4\linewidth]{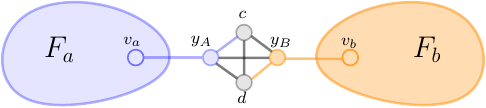}
  		\caption{ A yes-instance for \interval\ and its super-classes, as $F_a$ and $F_b$ are two interval graphs which are connected through a 4-clique in the middle}
  		\label{fig:lowerboundclique}
  	\end{center}
  	
  \end{figure}
  
  Let $\mathcal{P}$ be a PLS verifying the property $\mathrm{P}$ with bandwidth $K = \delta \log \log n$ and error probability $\eps$, for some $\delta,\epsilon >0$.
 
Let ${y_A,y_B, c, d}$ be the \emph{bridge} of $G(F_a, F_b, c, d)$. Without loss of generality, we assume that protocol $\mathcal{P}$ satisfies the condition that the nodes in the bridge ${y_A,y_B, c, d}$ receive the same proof. If the protocol does not satisfy this condition, we can construct a new protocol with proof length $4L$, where $L$ denotes the original message size, by having each node select their respective portion of the proof and then follow the original protocol.
  
 We define $M=\{m_v\}_{v\in V}$ as the set of certificates indexed by vertices $v\in V(G[F_a,F_b,c,d])$, where $m_v$ denotes the certificate that the prover sends to node $v$ in protocol $\mathcal{P}$. Let $\mathcal{M} \subseteq \{0,1\}^K$ be the set of certificates such that if it is assigned to the nodes ${y_A,y_B,x_A,x_B}$, it can be extended to a proof assignment for the nodes in both $F{a}$ and $F_{b}$, causing them to accept whenever the bridge accepts.

  Now consider the complete 4-partite, 4-uniform hyper-graph graph $\tilde{G} = A \cup B\cup C\cup D$. For each $a\in A, b\in B, c\in C$ and $d\in D$, color the edge $\{a,b,c,d\}$ with the certificate $m_{abcd}\in K$ of a possible assignment to the nodes of the bridge. There are at most $2^{K}$ possible certificates $m_{abcd}$ and $n^4$ hyper-edges in $\tilde{G}$. Therefore, by the pigeonhole principle, there exists a monochromatic set of hyper-edges $H$ of size at least $\frac{n^4}{2^{K}}$. 
  
  \begin{figure}[!ht]
  	\centering
  	\includegraphics[width=0.4\linewidth]{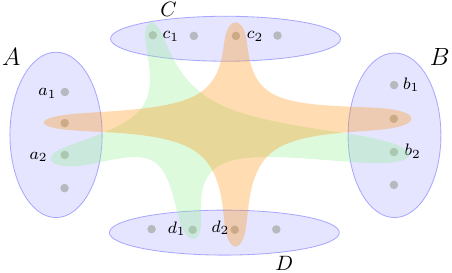}
  	\caption{Auxiliary (complete) 4-uniform, 4 partite hyper-graph $\tilde{G} = A\cup B\cup C\cup D$ with each node in $A$ and $B$ representing a subgraph $F_a, F_b$ and nodes in $C$ and $D$ representing single nodes in the original graph construction. The green and orange edge blocks are a pair of blocks from a monochromatic $K^{(4)}(2)$ structure present in the graph}
  	\label{fig:hypergoos}
  \end{figure}

  Observe that  for sufficiently small $\delta$ and large $n$, $2^{K} = (\log n)^{\delta k \log^{\delta k} (n)} = o(n^{1/{8}})$. Indeed, if $n > 2^{k\delta}$ and $\delta < 1/(2^{4}k)$  have that $ \delta k\log^{\delta k}(n) \leq \log^{2\delta k}(n) <\frac{1}{8} \log n$.  Now, following a result from Ërdos, described in Lemma~ \ref{lem:erdos}, by setting $\ell =2, r=4$ we have that there exists a $K^{(r)}(\ell)$ subgraph in $\tilde{G}$ induced by $H$. That is, the complete $r$-uniform, $r$-partite hyper-graph, where each part has a size exactly $\ell$. Let $\{a_i,b_i, c_i, d_i\}_{i=1}^2$ be the nodes involved in such a graph.

  Consider now the graph $G(a_i, b_i, c_i, d_i)$ defined as follows: First. take a disjoint union of $F_{a_1}, F_{b_1}, F_{a_2}$ and $F_{b_2}$. Then, for each $i \in \{1, 2\}$ add nodes $y_A^{i}, y_B^{i}, c^{i}, d^{i}$, labelled with different labels in $[2n^2+1, 3n^2]$  correspondent to the yes instances formed by the graphs $F_{a_i}$  $F_{b_j}$, and the nodes $c_k$ and $d_h$ . 	For each $i\in \{1,2\}$, the node $y_A^{i}$  is adjacent to $y_B^i$, $c_i$, $d_{i+1}$ and the node $v_{a_i}$ of $F_{a_i}$. Also, the node $y_B^i$ is adjacent to $c^i$, $d_{i}$ and the node $v_{b_i}$ of $F_{b_i}$ and $c_i$ is adjacent to $d_{i+1}$ Where the $i+1$ is taken$\mod 2$.

 We must demonstrate that the graph $G(a_1, b_1, a_2, b_2)$ is a No-instance for the property $\mathrm{P}$ of belonging to the classes \propInterval, \interval, \propCircular, \circularArc{ } and \chordal.

  \begin{figure}[!ht]
  	\centering
  	\includegraphics[width=0.4\linewidth]{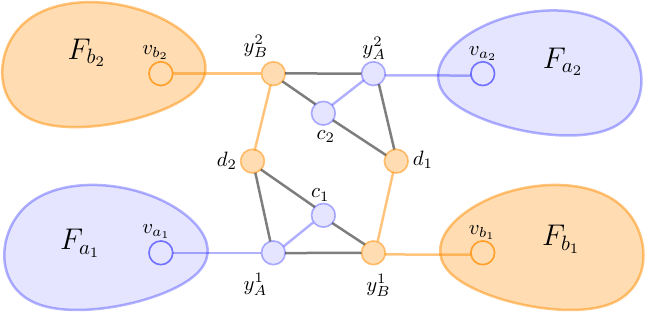}
  	\caption{ A No-instance for \interval\ and its super-classes, as the graph admits a large cycle without any chords (and with large diameter subgraphs $F_{a_i}, F_{b_j}$), it can not admit a representation through intervals or circular arcs, nor can be a chordal graph.}
  	\label{fig:fakehyperbound}
  \end{figure}

  \begin{itemize}
  	\item As for the classes $\interval$ and $\propInterval$ we simply define $F_a$ and $F_b$ to be a pair of proper interval graphs of size $\cO(n)$, then $G(F_a, F_b, c, d)$ also admits a representation through proper intervals as we simply connect both graphs through their extremes by a small clique. Finally, the newly constructed graph has an induced 6-cycle and therefore can not have a representation by intervals.
  	\item As for $\propCircular$ and $\circularArc$ by using the same construction (by assuring that each part has a large diameter) we also have a valid instance as (proper) interval graphs are in particular (proper) circular arc graphs. We simply consider their representation through intervals in the real line as a big arc in a portion of the circle.  As for the newly obtained instance, we have a large, induced cycle which is consistent with a construction for a circular arc graph, while we also have large paths on each side that are in conflict with the cycle as a (proper) circular arc graph behaves locally like an interval graph and therefore can not have an asteroidal triple (three nodes at the extreme of each interval graph).
  	\item Finally, for $\chordal$ we can use the same graph described for the class \interval as they are also chordal graphs. We have that the newly constructed graph has a 6-cycle without any chords. Therefore, it is not in \chordal.
  \end{itemize}

This means that if we run protocol $\mathcal{P}$ in instance $G(a_1,b_i,c_i,d_i)$ for $i\in\{1,2\}$, at least one node should reject. But the local information given to each node in $G(a_1,b_i,c_i,d_i)$ it is the same that in a yes-instance $G(F_{a_i},F_{b_i},c_id_i)$, as they have the same neighbors with the same \textit{id}'s and labels, and the nodes of the bridge receives the same certificate $m_{a_ib_ic_id_i}$ that makes the yes-instance accepts, so protocol $\mathcal{P}$ makes all the nodes accept in instance  $G(a_1,b_i,c_i,d_i)$, that is not part of the class, making a contradiction.

\end{proof}

\subsection{Trapezoid and Permutation Graphs}
\label{subsec:trapelower}

For the class of trapezoid and permutation graphs, we use a technique given by Fraigniaud et al \cite{fraigniaud2019randomized}, called \emph{crossing edge}, and which we detail as follows. Let $G=(V,E)$ be a graph and let $H_1 = (V_1,E_1)$ and $H_2 = (V_2,E_2)$ be two subgraphs of $G$. We say that $H_1$ and $H_2$ are independent if and only if $V_1\cap V_2 = \emptyset$ and $E\cap (V_1\times V_2) = \emptyset$.

To prove a lower bound on the proof-size of any PLS that recognizes the remaining classes, we use the following results of Fraigniaud et al. \cite{fraigniaud2019randomized}.

\begin{definition}[\cite{fraigniaud2019randomized}]
	\label{teo:fraignetwork}
	Let $G=(V,E)$ be a graph and let $H_1 = (V_1,E_1)$ and $H_2 = (V_2,E_2)$ be two independent isomorphic subgraphs of $G$ with isomorphism $\sigma\colon V_1\to V_2$. The {\sc crossing} of $G$ induced by $\sigma$, denoted by $\sigma_{\bowtie}(G)$, is the graph obtained from $G$ by replacing every pair of edges $\{u,v\}\in E_1$ and $\{\sigma(u),\sigma(v)\}\in E_2$, by the pair $\{u,\sigma(v)\}$ and $\{\sigma(u),v\}$.
\end{definition}

Then, a lower bound to any PLS is stated as follows.

\begin{theorem}[\cite{fraigniaud2019randomized}] 
	\label{teo:lowerbound}
	Let $\mathcal{F}$ be a family of network configurations, and let $\mathcal{P}$ be a boolean predicate over $\mathcal{F}$. Suppose that there is a configuration $G_s\in\mathcal{F}$ satisfying that (1) $G$ contains as subgraphs $r$ pairwise independent isomorphic copies $H_1,...,H_r$ with $s$ edges each, and (2) there exists  $r$ port-preserving isomorphisms $\sigma_i\colon V(H_1)\to V(H_i)$ such that for every $i\neq j$, the isomorphism $\sigma^{ij} = \sigma_i\circ\sigma_j^{-1} $ satisfies $\mathcal{P}(G_s)\neq \mathcal{P}(\sigma^{ij}_{\bowtie}(G)_s)$. Then,  the verification complexity of any proof-labeling scheme for $\mathcal{P}$ and $\mathcal{F}$ is $\Omega\left(\dfrac{log(r)}{s}\right)$.
\end{theorem}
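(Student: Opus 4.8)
The plan is to argue by a counting (pigeonhole) argument over the certificates assigned to the $r$ copies, combined with the observation that the crossing operation of \Cref{teo:fraignetwork} preserves every node's local view. Assume $\mathcal{P}(G_s)$ holds (the other case being symmetric), and suppose toward a contradiction that some PLS certifies $\mathcal{P}$ over $\mathcal{F}$ with proof-size $k$. By completeness there is a certificate assignment under which every node of $G_s$ accepts; I would fix such an assignment once and for all.

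First I would associate to each copy $H_i$ a \emph{signature} that records all the messages crossing the $s$ edges of $H_i$ in this accepting assignment, read through the port-preserving identification of $H_i$ with $H_1$ provided by $\sigma_i$. Since every message has length $k$ and each of the $s$ edges carries one message in each direction, the number of distinct possible signatures is at most $2^{2ks}$. The role of \emph{port-preservation} here is exactly to make these signatures comparable across copies: the $\ell$-th edge of $H_i$ is matched with the $\ell$-th edge of $H_j$, so equality of signatures has an unambiguous meaning.

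Next, if $r > 2^{2ks}$, the pigeonhole principle yields indices $i\neq j$ whose copies $H_i,H_j$ share the same signature. I would then form the crossing $\sigma^{ij}_{\bowtie}(G_s)$ and keep exactly the same certificates. The crucial step is to verify that every node sees precisely the same collection of (certificate, port) pairs as before. For a node incident to a crossed edge, the message it now receives from its new endpoint in the sibling copy equals, by the matching signatures together with the port-preservation of $\sigma^{ij}$, the message it previously received from its old endpoint; and for every other node the local view is untouched, because the copies are pairwise \emph{independent}, so the crossing creates no adjacency other than the four intended ones per swapped edge pair. Consequently the verifier accepts at every node of $\sigma^{ij}_{\bowtie}(G_s)$.

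Finally, hypothesis (2) gives $\mathcal{P}(\sigma^{ij}_{\bowtie}(G_s))\neq\mathcal{P}(G_s)$, so $\sigma^{ij}_{\bowtie}(G_s)$ is a no-instance on which, by soundness, at least one node must reject, contradicting the previous paragraph. Hence $r\le 2^{2ks}$, which rearranges to $k=\Omega(\log(r)/s)$. The step I expect to be the main obstacle is making the ``identical local view'' claim fully rigorous: one must ensure that the signatures capture enough of the crossing-edge data (including any identifiers transmitted along those edges) to guarantee that each affected node's decision is genuinely unchanged, and that independence of the copies truly rules out any spurious edge being introduced or destroyed by the swap outside the designated pairs.
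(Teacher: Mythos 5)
The paper itself offers no proof of this statement---it is imported directly from \cite{fraigniaud2019randomized}---and your argument is precisely the cut-and-paste/pigeonhole proof given in that reference: record as a signature the $k$-bit messages crossing the $s$ edges of each copy (messages, not certificates, which is exactly what makes identifiers harmless), conclude by pigeonhole that $r > 2^{2ks}$ forces two copies $H_i$, $H_j$ with equal signatures, and then use port-preservation together with pairwise independence to show that no node can distinguish $\sigma^{ij}_{\bowtie}(G_s)$ with the same certificates from the accepting execution on $G_s$, contradicting soundness. The one loose end is your opening remark that the case where $\mathcal{P}(G_s)$ fails is ``symmetric'': un-crossing a crossed instance is only guaranteed to yield a no-instance for the single pair that was crossed, so the pigeonhole argument does not transfer to that case; this is immaterial here, though, since both the original theorem in \cite{fraigniaud2019randomized} and its application in this paper (where $Q_n$ is the yes-instance and every crossing is a no-instance) concern exactly the case you prove.
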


We prove the lower bounds remaining by constructing a specific graph family with respective isomorphisms satisfying \Cref{teo:fraignetwork} and hypothesis of \Cref{teo:lowerbound} to conclude.

\begin{theorem}
	Any PLS for \permutation\ or \trapezoid\ needs a proof-size of $\Omega\left(\log n \right)$ bits.
\end{theorem}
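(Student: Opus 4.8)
The plan is to invoke the crossing-edge machinery of \Cref{teo:lowerbound}. Since every permutation graph is a trapezoid graph, and since a graph that fails to be a trapezoid graph \emph{a fortiori} fails to be a permutation graph, it suffices to exhibit a \emph{single} configuration $G_s$ that serves both statements at once: $G_s$ should be a permutation graph (hence a yes-instance for both \permutation\ and \trapezoid), while every relevant crossing of $G_s$ should fail to be a trapezoid graph (hence a no-instance for both). Concretely, I would build $G_s$ on $\Theta(n)$ vertices containing $r=\Theta(n)$ pairwise independent, isomorphic copies $H_1,\dots,H_r$, each with only $s=\cO(1)$ edges, together with port-preserving isomorphisms $\sigma_i\colon V(H_1)\to V(H_i)$ as required by \Cref{teo:fraignetwork}. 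If I can guarantee that for every $i\neq j$ the crossing $\sigma^{ij}_{\bowtie}(G_s)$ leaves the class while $G_s$ is in the class, then \Cref{teo:lowerbound} yields a lower bound of $\Omega(\log r/s)=\Omega(\log n)$ on the proof-size, for both predicates simultaneously.

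For the construction I would take $G_s$ to be an explicit permutation graph consisting of a long induced ``backbone'' together with $r$ constant-size gadgets $H_i$ (for instance single edges, giving $s=1$) attached in a symmetric fashion, with the $\sigma_i$ matching corresponding gadget vertices and ports. The gadgets must be laid out so that $G_s$ still admits a permutation diagram, which I would certify by explicitly ordering the $2|V(G_s)|$ endpoints on the two parallel lines. The decisive design requirement is that crossing any two gadgets re-routes their edges so as to close a chordless cycle of length at least $5$ (equivalently, to create an asteroidal triple) using the backbone. The reason this defeats both classes is classical: permutation and trapezoid graphs are co-comparability graphs, hence perfect and asteroidal-triple-free; a perfect graph contains no induced odd cycle of length $\ge 5$, and every induced even cycle of length $\ge 6$ contains an asteroidal triple. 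Thus any induced $C_k$ with $k\ge 5$ rules out membership in \trapezoid\ (and in \permutation), so each crossed graph $\sigma^{ij}_{\bowtie}(G_s)$ is a genuine no-instance.

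Having set this up, the verification has two symmetric halves: exhibiting the permutation diagram of $G_s$ (the yes side) and checking that each of the $\binom{r}{2}$ crossings introduces the forbidden induced cycle (the no side); together they give $\mathcal P(G_s)\neq \mathcal P(\sigma^{ij}_{\bowtie}(G_s))$ for every $i\neq j$, and \Cref{teo:lowerbound} closes the argument. The main obstacle is precisely this balance: permutation graphs are a very restrictive class, so arranging $\Theta(n)$ pairwise independent gadgets inside a genuine permutation graph while ensuring that \emph{every} swap of two gadgets necessarily creates a long chordless cycle is delicate. The bookkeeping that the $H_i$ are truly pairwise independent (no edges between distinct copies) and that the isomorphisms $\sigma_i$ are port-preserving, so that \Cref{teo:fraignetwork} applies verbatim, is the part I expect to require the most care.
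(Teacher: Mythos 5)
Your strategy is exactly the paper's: invoke the crossing-edge bound of \Cref{teo:lowerbound}, exploit that every permutation graph is a trapezoid graph so that a single construction serves both predicates, and aim for $r=\Theta(n)$ pairwise independent single-edge gadgets ($s=1$) inside a permutation graph whose pairwise crossings each create a long chordless cycle. Your reasoning for why such a construction suffices is also sound, though heavier than needed: the paper does not pass through perfection and asteroidal-triple-freeness, it simply uses the fact that a trapezoid graph has no induced cycle of length greater than $4$, so an induced $C_6$ already excludes both classes.

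The genuine gap is that you never exhibit the construction: your argument stays conditional (``if I can guarantee that for every $i\neq j$ the crossing leaves the class\dots''), and you yourself flag the existence of such a graph as the delicate unresolved step. That step is the substantive content of the proof, since one must verify simultaneously that the host graph is a permutation graph and that \emph{every} one of the $\binom{r}{2}$ crossings destroys membership. The paper closes this with a very simple explicit graph $Q_n$: a path $v_1,\dots,v_{5n}$ augmented with the chords $\{v_{5i-3},v_{5i-1}\}$ for $i\in[n]$, which admits an explicit permutation model; the gadgets are the path edges $H_i=\{v_{5i-2},v_{5i-1}\}$, and crossing $H_i$ with $H_j$ replaces these two triangle edges by two edges joining the blocks, producing the induced $6$-cycle $v_{5j-3},v_{5j-2},v_{5i-1},v_{5i-3},v_{5i-2},v_{5j-1}$ (\Cref{lemma:notrap}), which is not allowed in a trapezoid graph, hence not in a permutation graph. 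So your plan is fillable, essentially exactly as you describe, but as written it asserts rather than proves the existence of the key object; supplying a concrete $G_s$ (the paper's $Q_n$, or any equally explicit family) together with the verification of the induced long cycle is what turns the outline into a proof.
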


\begin{proof}

 First, let $\mathcal{F} = \{(Q_n,\id)\}$ be a collection of network configurations, where each graph $Q_n$ consists of $5n$ nodes forming a path $\{v_1, \dots, v_{5n}\}$ where we add the edge $\{v_{5i-3}, v_{5i-1}\}$, for each $i \in [n]$. It is easy to see that for each $n>0$, $Q_n$ is a permutation graph (and then also a trapezoid graph), and therefore $\mathcal{F}\subseteq \permutation$ and $\mathcal{F}\subseteq \trapezoid$.  In \cref{fig:lowerboundperm1} is depicted the graph $Q_3$ and its corresponding permutation model. 

\begin{figure}[h!]
	\centering
	\input{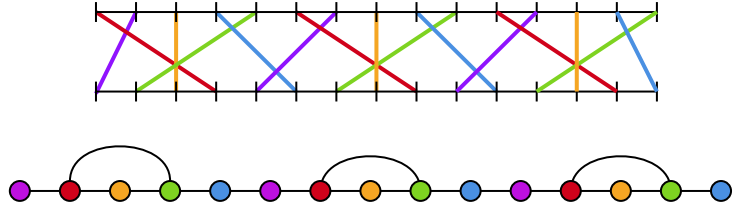}
	\caption{Graph $Q_3$ and a permutation model for $Q_3$. }
	\label{fig:lowerboundperm1}
\end{figure}

Given $Q_n$, consider the subgraphs $H_i = \{v_{5i-2}, v_{5i-1}\}$, for each $i \in [n]$, and the isomorphism $\sigma_i\colon V(H_1)\to V(H_i)$ such that $\sigma_i(v_3) = v_{5i-2}$ and $\sigma_i(v_4) = v_{5i-1}$.

\begin{lemma}
	\label{lemma:notrap}
	For each $i\neq j$, the graph $\sigma^{ij}_{\bowtie}(Q_n)$ it is neither a permutation graph nor a trapezoid graph with $\sigma_i\colon V(H_1)\to V(H_i)$ such that $\sigma_i(v_3) = v_{5i-2}$ and $\sigma_i(v_4) = v_{5i-1}$.  
\end{lemma}
  
\begin{proof}
	Given $i< j$,  by definition of $\sigma^{ij}\colon V(H_j)\to V(H_i)$ in $\sigma^{ij}_{\bowtie}(Q_n)$ the nodes $v_{5j-3}$, $v_{5j-2}$, $v_{5i-1}$, $v_{5i-3}$, $v_{5i-2}$, $v_{5j-1}$ form an induced cycle of length $6$ (see \cref{fig:lowerboundperm2} for an example). 
	
	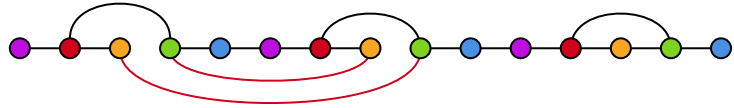
\begin{figure}[h!]
		\centering
		\tikzset{every picture/.style={line width=0.75pt}} 

\begin{tikzpicture}[x=0.75pt,y=0.75pt,yscale=-1,xscale=1]

\draw    (190,77.46) -- (240,77.46) ;
\draw    (265,77.46) -- (365,77.46) ;
\draw    (390,77.46) -- (540,77.46) ;
\draw [color={rgb, 255:red, 208; green, 2; blue, 27 }  ,draw opacity=1 ]   (240,78.54) .. controls (240,113.82) and (390,113.82) .. (390,78.54) ;
\draw [color={rgb, 255:red, 208; green, 2; blue, 27 }  ,draw opacity=1 ]   (265,78.54) .. controls (265.11,98.71) and (365.11,99.04) .. (365,78.54) ;
\draw    (215,72.46) .. controls (215.11,48.85) and (265.11,49.51) .. (265,72.46) ;
\draw    (340,77.46) .. controls (340.11,53.85) and (390.11,54.51) .. (390,77.46) ;
\draw  [fill={rgb, 255:red, 245; green, 166; blue, 35 }  ,fill opacity=1 ] (235,77.46) .. controls (235,74.7) and (237.24,72.46) .. (240,72.46) .. controls (242.76,72.46) and (245,74.7) .. (245,77.46) .. controls (245,80.22) and (242.76,82.46) .. (240,82.46) .. controls (237.24,82.46) and (235,80.22) .. (235,77.46) -- cycle ;
\draw  [fill={rgb, 255:red, 126; green, 211; blue, 33 }  ,fill opacity=1 ] (260,77.46) .. controls (260,74.7) and (262.24,72.46) .. (265,72.46) .. controls (267.76,72.46) and (270,74.7) .. (270,77.46) .. controls (270,80.22) and (267.76,82.46) .. (265,82.46) .. controls (262.24,82.46) and (260,80.22) .. (260,77.46) -- cycle ;
\draw  [fill={rgb, 255:red, 208; green, 2; blue, 27 }  ,fill opacity=1 ] (210,77.46) .. controls (210,74.7) and (212.24,72.46) .. (215,72.46) .. controls (217.76,72.46) and (220,74.7) .. (220,77.46) .. controls (220,80.22) and (217.76,82.46) .. (215,82.46) .. controls (212.24,82.46) and (210,80.22) .. (210,77.46) -- cycle ;
\draw  [fill={rgb, 255:red, 74; green, 144; blue, 226 }  ,fill opacity=1 ] (285,77.46) .. controls (285,74.7) and (287.24,72.46) .. (290,72.46) .. controls (292.76,72.46) and (295,74.7) .. (295,77.46) .. controls (295,80.22) and (292.76,82.46) .. (290,82.46) .. controls (287.24,82.46) and (285,80.22) .. (285,77.46) -- cycle ;
\draw  [fill={rgb, 255:red, 189; green, 16; blue, 224 }  ,fill opacity=1 ] (310,77.46) .. controls (310,74.7) and (312.24,72.46) .. (315,72.46) .. controls (317.76,72.46) and (320,74.7) .. (320,77.46) .. controls (320,80.22) and (317.76,82.46) .. (315,82.46) .. controls (312.24,82.46) and (310,80.22) .. (310,77.46) -- cycle ;
\draw  [fill={rgb, 255:red, 208; green, 2; blue, 27 }  ,fill opacity=1 ] (335,77.46) .. controls (335,74.7) and (337.24,72.46) .. (340,72.46) .. controls (342.76,72.46) and (345,74.7) .. (345,77.46) .. controls (345,80.22) and (342.76,82.46) .. (340,82.46) .. controls (337.24,82.46) and (335,80.22) .. (335,77.46) -- cycle ;
\draw  [fill={rgb, 255:red, 189; green, 16; blue, 224 }  ,fill opacity=1 ] (185,77.46) .. controls (185,74.7) and (187.24,72.46) .. (190,72.46) .. controls (192.76,72.46) and (195,74.7) .. (195,77.46) .. controls (195,80.22) and (192.76,82.46) .. (190,82.46) .. controls (187.24,82.46) and (185,80.22) .. (185,77.46) -- cycle ;
\draw  [fill={rgb, 255:red, 245; green, 166; blue, 35 }  ,fill opacity=1 ] (360,77.46) .. controls (360,74.7) and (362.24,72.46) .. (365,72.46) .. controls (367.76,72.46) and (370,74.7) .. (370,77.46) .. controls (370,80.22) and (367.76,82.46) .. (365,82.46) .. controls (362.24,82.46) and (360,80.22) .. (360,77.46) -- cycle ;
\draw  [fill={rgb, 255:red, 126; green, 211; blue, 33 }  ,fill opacity=1 ] (385,77.46) .. controls (385,74.7) and (387.24,72.46) .. (390,72.46) .. controls (392.76,72.46) and (395,74.7) .. (395,77.46) .. controls (395,80.22) and (392.76,82.46) .. (390,82.46) .. controls (387.24,82.46) and (385,80.22) .. (385,77.46) -- cycle ;
\draw  [fill={rgb, 255:red, 74; green, 144; blue, 226 }  ,fill opacity=1 ] (410,77.46) .. controls (410,74.7) and (412.24,72.46) .. (415,72.46) .. controls (417.76,72.46) and (420,74.7) .. (420,77.46) .. controls (420,80.22) and (417.76,82.46) .. (415,82.46) .. controls (412.24,82.46) and (410,80.22) .. (410,77.46) -- cycle ;
\draw    (465,77.46) .. controls (465.11,53.85) and (515.11,54.51) .. (515,77.46) ;
\draw  [fill={rgb, 255:red, 189; green, 16; blue, 224 }  ,fill opacity=1 ] (435,77.46) .. controls (435,74.7) and (437.24,72.46) .. (440,72.46) .. controls (442.76,72.46) and (445,74.7) .. (445,77.46) .. controls (445,80.22) and (442.76,82.46) .. (440,82.46) .. controls (437.24,82.46) and (435,80.22) .. (435,77.46) -- cycle ;
\draw  [fill={rgb, 255:red, 208; green, 2; blue, 27 }  ,fill opacity=1 ] (460,77.46) .. controls (460,74.7) and (462.24,72.46) .. (465,72.46) .. controls (467.76,72.46) and (470,74.7) .. (470,77.46) .. controls (470,80.22) and (467.76,82.46) .. (465,82.46) .. controls (462.24,82.46) and (460,80.22) .. (460,77.46) -- cycle ;
\draw  [fill={rgb, 255:red, 245; green, 166; blue, 35 }  ,fill opacity=1 ] (485,77.46) .. controls (485,74.7) and (487.24,72.46) .. (490,72.46) .. controls (492.76,72.46) and (495,74.7) .. (495,77.46) .. controls (495,80.22) and (492.76,82.46) .. (490,82.46) .. controls (487.24,82.46) and (485,80.22) .. (485,77.46) -- cycle ;
\draw  [fill={rgb, 255:red, 126; green, 211; blue, 33 }  ,fill opacity=1 ] (510,77.46) .. controls (510,74.7) and (512.24,72.46) .. (515,72.46) .. controls (517.76,72.46) and (520,74.7) .. (520,77.46) .. controls (520,80.22) and (517.76,82.46) .. (515,82.46) .. controls (512.24,82.46) and (510,80.22) .. (510,77.46) -- cycle ;
\draw  [fill={rgb, 255:red, 74; green, 144; blue, 226 }  ,fill opacity=1 ] (535,77.46) .. controls (535,74.7) and (537.24,72.46) .. (540,72.46) .. controls (542.76,72.46) and (545,74.7) .. (545,77.46) .. controls (545,80.22) and (542.76,82.46) .. (540,82.46) .. controls (537.24,82.46) and (535,80.22) .. (535,77.46) -- cycle ;

\end{tikzpicture}
		\caption{Graph $\sigma^{12}_{\bowtie}(Q_3)$, where in red are represented the crossing edges.}
		\label{fig:lowerboundperm2}
	\end{figure}
	
	As a trapezoid graph have induced cycles of length at most $4$, we deduce that $\sigma^{ij}_{\bowtie}(Q_n)$ is not a trapezoid graph.
	
	Finally, as the class of permutation graph is contained in the class of trapezoid graphs, then $Q_n$ neither it is a permutation graph.
\end{proof}

Then, we have that for all $n>0$, the graph $Q_n$ is a permutation and a trapezoid graph, but $\sigma^{ij}_{\bowtie}(Q_n)$ is neither permutation nor trapezoid graph. Since there are $r=\mathcal{O}(n)$ isomorphisms $\sigma_{i}$ and subgraphs $H_i$, each with one edge, it follows from \Cref{teo:lowerbound} that any PLS that recognizes \trapezoid\ and \permutation\ needs a proof size of $\Omega(\log n)$ bits.

\end{proof}


\bibliography{sn-article}

\end{document}